\newcommand{\lam}{\lambda}
\newcommand{\ptime}{\textsc{PTime}}
\newcommand{\uscore}{\mbox{\tt\char`\_}}
\newcommand{\m}[1]{\mathsf{#1}}
\newcommand{\mi}[1]{\mbox{\it #1}}
\newcommand{\abra}[1]{\langle #1 \rangle}
\newcommand{\abs}[1]{\lvert #1 \rvert}
\newcommand{\dbra}[1]{\llbracket #1 \rrbracket}
\newcommand{\dom}[1]{\text{dom} (#1)}
\newcommand{\degree}[1]{\m{deg} (#1)}
\newcommand{\cost}[1]{\text{cost} (#1)}
\newcommand{\closure}[3]{\text{closure} (#1; #2. #3)}
\newcommand{\inhpoly}[1]{\m{inhpoly} (#1)}
\newif\ifLongVersion
\title{Typable Fragments of Polynomial Automatic Amortized Resource Analysis} 
\author{Long Pham}{Carnegie Mellon University, Pittsburgh, PA, USA}{longp@andrew.cmu.edu}{https://orcid.org/0000-0001-5153-8140}{}
\author{Jan Hoffmann}{Carnegie Mellon University, Pittsburgh, PA, USA}{janh@andrew.cmu.edu}{https://orcid.org/0000-0001-8326-0788}{} 
\authorrunning{L. Pham and J. Hoffmann} 
\keywords{Resource consumption, Quantitative analysis, Amortized analysis, Typability} 
\begin{document}

\maketitle



\begin{abstract}
	Being a fully automated technique for resource analysis, automatic amortized resource analysis (AARA) can fail in returning worst-case cost bounds of programs, fundamentally due to the undecidability of resource analysis. 
	For programmers who are unfamiliar with the technical details of AARA, it is difficult to predict whether a program can be successfully analyzed in AARA. 
	Motivated by this problem, this article identifies classes of programs that can be analyzed in type-based polynomial AARA.
	Firstly, it is shown that the set of functions that are typable in univariate polynomial AARA coincides with the complexity class \ptime{}. 
	Secondly, the article presents a sufficient condition for typability that axiomatically requires every sub-expression of a given program to be polynomial-time. 
	It is proved that this condition implies typability in multivariate polynomial AARA under some syntactic restrictions.
\end{abstract}


\section{Introduction}

There exists a wide range of effective techniques for automatically
or semi-automatically analyzing the resource consumption of programs.
These techniques derive symbolic bounds on the
worst-case~\cite{Kincaid2017pldi}, best-case~\cite{FrohnNHBG16,Ngo2017}, or
expected~\cite{ChatterjeeFM17,Ngo2018} resource consumption and are
based on type
systems~\cite{Crary2000,Vasconcelos2008,Danielsson2008,LagoG11,Avanzini2017,Cicek2017,Handley2020},
recurrence
relations~\cite{Wegbreit1975,Grobauer2001,Albert07,Kincaid2017popl,Kavvos2019},
relational reasoning~\cite{Cicek2017,Radicek2018}, and term
rewriting~\cite{AvanziniM13,BrockschmidtEFFG14,HofmannM14}.

State-of-the-art resource analyses can automatically derive complex
bounds for large programs, 
and
making analyses more practical by improving their efficiency and range
is a main driving force in this area.
However, resource analysis for Turing-complete languages is
undecidable, and even for the most sophisticated tools there will
remain programs that cannot be analyzed automatically.
Diagnosing the cause and modifying the program so that the analysis
can derive a bound often require in-depth knowledge of the
implemented techniques.
As a result, the usability of more sophisticated analysis tools is
hampered by their complexity.

To improve the usability of automatic resource analysis for
non-experts, this article develops easy-to-understand
characterizations of the programs that can be analyzed with automatic
amortized resource analysis (AARA).
Such characterizations can serve as explanations for an unsuccessful
resource analysis and guide program development without revealing
technical details of the underlying analysis.

AARA is a type-based analysis that is based on the potential method
of amortized analysis.
It has been first introduced by Hofmann and Jost~\cite{Hofmann2003}
for deriving linear heap-space bounds for a first-order
language with lists.
AARA has subsequently been extended to univariate polynomial bounds
\cite{Hoffmann2010}, multivariate polynomial bounds
\cite{Hoffmann2011,Hoffmann2012}, and exponential bounds
\cite{Kahn2020}.
Furthermore, AARA has been extended to other language features such as
higher-order and polymorphic functions~\cite{Jost2010,Hoffmann2017},
lazy evaluation~\cite{Jost2016}, and probabilistic programming
\cite{Ngo2018}.
The analysis has been implemented in the programming language Resource-Aware ML (RaML)~\cite{Hoffmann2017}.  
An overview of polynomial AARA can be found in Section~\ref{sec:resource-aware ML and AARA}.
We are not aware of previous work that studies the characterization of
typable fragments of AARA.%

Our first contribution (Section~\ref{sec:embedding polynomial-time
  Turing machines in AARA}) is a characterization of the
(mathematical) functions that can be implemented in AARA.
We demonstrate that it is possible to embed every polynomial-time Turing
machine in AARA. 
That is, for every such Turing machine, there exists
an equivalent polynomial-time program that is typable in polynomial AARA.
This result shows that polynomial AARA corresponds to the
complexity class \ptime{}
and
is in the tradition of implicit computational complexity
(ICC)~\cite{Bellantoni1992,Leviant1993,Hofmann2002}, which studies
linguistic characterizations of complexity classes.
For a user of RaML, this result means that an implementation of a
\ptime{} function can always be rewritten so that a worst-case cost bound can be
automatically derived.
However, it does not provide guidance on how to rewrite an implementation.

An ideal resource analysis should automatically derive a cost bound for
every program that has a polynomial bound.
However, such an analysis does not exist, because the problem of
deciding whether a given program runs in polynomial time is
undecidable~\cite{Hoffmann2011}.
Moreover, AARA is a type-based analysis that derives the bound of an
expression from its sub-expressions.
So we can only expect to derive a bound for an expression
which is \emph{inherently polynomial time}, that is, every
subexpression is in \ptime{} if viewed as a function.

Our second contribution is an axiomatic definition of inherently
polynomial time that implies typability in multivariate polynomial
AARA for a Turing-complete first-order language with lists (Section~\ref{sec:resource-aware ML and AARA}) under some restrictions:
Programs can only use primitive recursion instead of general
recursion, some variables are affine, and the use of nested lists is
restricted.
Although this characterization is far from being a necessary condition,
we believe that it can be a valuable guide to users.
A key concept is the notion of \emph{uniform resource annotations}
which is essential in the proof that inherently polynomial time is a
sufficient condition for typability in multivariate polynomial AARA.



\section{Automatic Amortized Resource Analysis (AARA)}
\label{sec:resource-aware ML and AARA}

Among approaches to resource analysis is AARA. 
Given a program $P$, consider its history of execution, that is, a sequence of transitioning program states. 
As in Sleator and Tarjan's potential method in amortized analysis \cite{Tarjan1985}, we assign a certain (non-negative) amount of potential to the initial state of this sequence. 
If we can ensure that (i) the amount of potential never becomes negative throughout $P$'s run and (ii) the actual computational cost in each transition of $P$ is less than or equal to the change in the amount of potential, then we know that the total resource usage of $P$ is bounded above by the initial potential. 
This is essentially how AARA works. 

More concretely, each sub-expression of $P$ is assigned a \emph{resource-annotated type}: a conventional (i.e.~simple) type augmented with an expression that indicates how much potential is stored. 
In polynomial AARA \cite{Hoffmann2010,Hoffmann2012}, we use polynomial functions to express potential. 
Initially, AARA only assigns templates of resource-annotated types where coefficients of polynomials are left blank. 
AARA then collects constraints on these coefficients that respect the cost semantics of $P$. 
Finally, as these constraints are all linear, we can simply solve them using an off-the-shelf liner program solver, thereby inferring resource-annotated types. 
A worst-case cost bound of $P$ can be extracted from its resource-annotated type. 

\subsection{Resource-Aware ML}
\label{sec:resource-aware ML}

Resource-Aware ML (RaML) is a Turing-complete functional programming language used in the study of AARA \cite{Hoffmann2010}. 

The original version of RaML is first-order (i.e.~no higher-order types or functions appear in RaML) and only offers a relatively small set of language features. 
Subsequent versions of RaML support more language features such as higher-order functions and polymorphic functions \cite{Hoffmann2017}. 
In this section, we describe a variant of RaML that only differs from the original version in a few minor details; e.g.~the $\m{tick}$ construct and the support for sum types. 

The base types (denoted by $b$) and simple types (denoted by $\tau$) of RaML are formed by
\begin{alignat*}{4}
	b ::= {} & \mathbf{1} &\qquad& \text{unit type} & \qquad \qquad \tau ::= {} & b &\qquad& \text{base type}\\
	& b_1 + b_2 && \text{sum type} & & b_1 \rightarrow b_2 && \text{arrow type} \\
	& b_1 \times b_2 && \text{product type} \\
	& L (b) && \text{list type}. 
\end{alignat*}
The set of all base types will be denoted by $\mathbb{B}$. 

Fix a set $\mathcal{V} = \{x, y, x_1, x_2, \ldots \}$ of variable symbols and a set $\mathcal{F} = \{f, \ldots \}$ of function symbols.
The grammar of RaML is
\begin{alignat*}{2}
	e ::= {} & x && \qquad \text{variable} \\
	\mid {} & \abra{\,} && \qquad \text{unit element} \\
	\mid {} & \ell \cdot x \mid r \cdot x \mid \m{case} \; x \; \{\ell \cdot y \hookrightarrow e_{\ell} \mid r \cdot y \hookrightarrow e_{r} \} && \qquad \text{sum constructors and destructor} \\
	\mid {} & \abra{x_1, x_2} \mid \m{case} \; x \; \{ \abra{x_1, x_2} \hookrightarrow e \} && \qquad \text{pair constructor and destructor} \\
	\mid {} & [\,] \mid x_1 :: x_2 \mid \m{case} \; x \; \{ [\,] \hookrightarrow e_0 \mid (x_1 :: x_2) \hookrightarrow e_1 \} && \qquad \text{list constructors and destructor} \\
	\mid {} & \m{fun} \; f \; x = e && \qquad \text{function definition} \\
	\mid {} & f \; x && \qquad \text{function application} \\
	\mid {} & \m{tick} \; q && \qquad \text{resource consumption}; q \in \mathbb{Q} \\
	\mid {} & \m{let} \; x = e_1 \; \m{in} \; e_2 && \qquad \text{let-binding} \\
	\mid {} & \m{share} \; x \; \m{as} \; x_1, x_2 \; \m{in} \; e && \qquad \text{variable sharing}. 
\end{alignat*}
In a function definition, $e$ is allowed to mention $f$. 
Therefore, we can implement not only primitive recursion but also general recursion. 
As standard, we use the let-normal form, where we only permit function application of the form $x_1 \; x_2$ as opposed to $e_1 \; e_2$. 
For convenience in resource analysis, we require each variable symbol to be used in a affine manner (i.e.~can only be used at most once). 
To use a variable symbol multiple times, we duplicate the symbol with the $\m{share}$ construct. 

In the interest space, we will not present a type system of this language here. 
It is available in Appendix~\ref{sec:type system of RaML}.

RaML programs are evaluated using the call-by-value strategy. 
Computational costs accrue only when $\m{tick} \; q$ is executed, and this cost metric is known as the tick metric. 
The general cost semantics of RaML can be found in \cite{Hoffmann2010}. 
In the case of the running time, which is a specific cost metric, of RaML, the judgment of the cost semantics has the form
\begin{equation*}
	V \vdash e \Downarrow v \mid n, 
\end{equation*}
where $V$ is an environment (i.e.~a set of pairs of variable symbols and semantic values), $v$ is a semantic value, and $n \in \mathbb{N}$ is the running time of evaluating program $e$ to $v$. 
The running time is formally defined in Appendix~\ref{sec:running time of resource-aware ML}. 

\subsection{Univariate AARA}
\label{sec:basics of univariate AARA}

In univariate AARA, each list is annotated with a polynomial indicating the amount of the potential stored in the list. 
Univariate AARA does not let us mix potential of two lists, that is, multiply polynomials of two lists' potential. 
This is why univariate AARA is called \emph{univariate}. 

Resource-annotated base types (denoted by $b$) and resource-annotated simple types (denoted by $\tau$) are formed by the following grammar:
\begin{alignat*}{4}
	b ::= {} & \mathbf{1} &\qquad& \text{unit type} & \qquad \qquad B ::= {} & \abra{b, q} &\qquad& q \in \mathbb{Q}_{\geq 0} \\
	& b_1 + b_2 && \text{sum type} & \tau ::= {} & b && \text{base type} \\
	& b_1 \times b_2 && \text{product type} && B_1 \rightarrow B_2 && \text{arrow type} \\
	& L^{\vec{q}} (b) && \text{list type}. 
\end{alignat*}
Here, $\vec{q}$ is a finite vector of $\mathbb{Q}_{\geq 0}$. 

Given a semantic value $v: b$, where $b$ is a resource-annotated base type, the potential stored in $v$ is inductively defined as
\begin{align*}
	\Phi (v: \mathbf{1}) & := 0 & \Phi ([\,] : L^{\vec{q}} (b)) & := 0\\
	\Phi (\ell \cdot v : b_1 + b_2) & := \Phi (v:b_1) & \Phi (v_1::v_2 : L^{\vec{q}} (b)) & := \Phi (v_1:b) + \phi(\abs{v_1::v_2}, \vec{q}) \\
	\Phi (r \cdot v : b_1 + b_2) & := \Phi (v:b_2), 
\end{align*}
where $\abs{\cdot}$ denotes the length of an input list.
Given $n \in \mathbb{N}$ and $\vec{q} = (q_1, \ldots, q_k)$, $\phi (n, \vec{q})$ is defined as
$\phi (n, \vec{q}) := \sum_{i = 1}^{k} q_i \binom{n}{i}$. 
If $n < i$, then $\binom{n}{i} = 0$. 

The typing judgment of univariate AARA has the form
\begin{equation*}
	\Gamma_{\text{anno}}; p \vdash e: B, 
\end{equation*}
where $\Gamma_{\text{anno}}$ is a resource-annotated typing context and $p \in \mathbb{Q}_{\geq 0}$. 
We sometimes write $\Sigma_{\text{anno}}; \Gamma_{\text{anno}}; p \vdash e: B$, where a resource-annotated typing context is split into $\Sigma_{\text{anno}}$ for arrow-type variables and $\Gamma_{\text{anno}}$ for base-type variables. 
The type system of univariate AARA is available in Appendix~\ref{sec:type system of univariate AARA}. 

To give examples of judgments in univariate AARA, consider two programs: (i) $\mi{append}$ that appends the first input list to the second, and (ii) $\mi{quicksort}$ that performs quicksort. 
The running time of $\mi{append}$ is proportional to the size of the first input, and the running time of $\mi{quicksort}$ is bounded by the square of the input size. 
For simplicity, we will not work out the exact coefficients of polynomial bounds. 
Instead, we simply assume that the running time of $\mi{append}$ is bounded by the function $n, m \mapsto n$, where $n$ and $m$ are the lengths of the two input lists. 
Likewise, we assume that the running time of  $\mi{quicksort}$ is bounded by $n \mapsto n^2$, respectively. 
It then makes sense that these two programs can be typed in univariate AARA as
\begin{equation*}
	\mi{append} : \abra{\abra{L^{1} (b), L^{0} (b)}, 0} \rightarrow \abra{L^{0} (b), 0} \qquad	\mi{quicksort} : \abra{L^{(1, 2)} (b), 0} \rightarrow \abra{L^{0} (b), 0}. 
\end{equation*}
The univariate resource annotation $(1,2)$ of $\mi{quicksort}$ represents polynomial $n \mapsto 1 \cdot \binom{n}{1} + 2 \cdot \binom{n}{2} = n^2$. 
The implementations of $\mi{append}$ and $\mi{quicksort}$ are given in Appendix~\ref{sec:implementation of append and quicksort presented as examples of typing judgments of univariate AARA}. 

Univariate AARA is sound with respect to the cost semantics (specifically, the running time) of RaML:  
\begin{theorem}[Soundness of univariate AARA \cite{Hoffmann2010}]
\label{theorem:soundness of univariate AARA}
Given term $e$, suppose $\Gamma_{\text{anno}}; p \vdash e : \abra{b_{\text{anno}}, q}$ is derived in univariate AARA.
Let $V$ be an environment such that $V \vdash e \Downarrow v \mid n$; that is, $e$ runs in $n$ units of time under $V$. 
We then have 
\begin{equation*}
	n \leq p + \Phi (V: \Gamma_{\text{anno}}) - q - \Phi (v: b_{\text{anno}}), 
\end{equation*} 
where $\Phi (V: \Gamma_{\text{anno}}) = \sum_{x \in \dom{\Gamma_{\text{anno}}}} \Phi (V (x) : \Gamma_{\text{anno}} (x))$. 
\end{theorem}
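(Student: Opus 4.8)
The plan is to prove the statement by induction on the derivation of the cost-semantics judgment $V \vdash e \Downarrow v \mid n$, performing in each case an inversion (case analysis) on the last rule of the typing derivation $\Gamma_{\text{anno}}; p \vdash e : \abra{b_{\text{anno}}, q}$. Doing the induction on the evaluation derivation rather than on the typing derivation is important because RaML admits general recursion, so the typing derivation of a recursive call is not a subderivation of the original one, whereas its evaluation derivation is.

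First I would dispatch the structural rules of the type system — weakening the context with unused variables, and the relaxation/subtyping rules that allow decreasing a list annotation $\vec q$ componentwise, enlarging $p$, or shrinking $q$. Since every $\binom{n}{i} \geq 0$, the potential $\Phi(v : b)$ is monotone in the annotation vector, so applying any such rule only weakens the inequality to be established. Hence it suffices to prove the theorem under the assumption that the last typing rule is syntax-directed, the structural rules being absorbed into a short monotonicity argument (equivalently, one first normalizes the derivation so that all structural rules are permuted down to the leaves).

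For the syntax-directed cases I would pair the matching evaluation rule with the matching typing rule and apply the induction hypothesis to the subderivations. The cases for a variable, $\abra{\,}$, $\m{tick}\;q'$, $[\,]$, and the two sum constructors are immediate from the definition of $\Phi$; in the $\m{tick}$ case the annotation $p$ supplied by the typing rule is exactly what pays the cost $q'$. The arithmetic core is in the list cases: for the cons constructor $x_1 :: x_2$ and for the list destructor, soundness reduces to the shift identity $\phi(m+1,\vec q) = q_1 + \phi(m, \lft \vec q)$, which is Pascal's rule $\binom{m+1}{i} = \binom{m}{i} + \binom{m}{i-1}$ repackaged, and which makes the additive potential of a list telescope correctly as elements are produced or consumed. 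The $\m{let}$ case uses additivity of $\Phi$ over the splitting of the context and of $p$ between $e_1$ and $e_2$, together with the fact that running times add over the two subcomputations; the $\m{share}$ case uses the corresponding sharing identity for $\Phi$ on the duplicated variable; the pair constructor and destructor are routine.

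The main obstacle is the function-application case under general recursion. A purely naive induction fails here, so I expect to strengthen the statement to a logical-relations-style invariant relating environments to resource-annotated contexts: in particular, that the closure bound to a function symbol, whenever applied to an argument consistent with its domain annotation, yields a result consistent with its codomain annotation while respecting the claimed cost, and that the signature context $\Sigma_{\text{anno}}$ is threaded correctly through the induction. With such a consistency predicate established for values (including closures), the application case follows by applying the strengthened induction hypothesis to the evaluation of the function body under the closure's captured environment extended with the argument, and then discharging the potential bookkeeping exactly as in the other cases. This consistency development is essentially the content of the soundness proof of \cite{Hoffmann2010}, which I would follow; note also that, as usual for a big-step semantics, the theorem speaks only about terminating runs.
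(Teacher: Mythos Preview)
The paper does not contain its own proof of this theorem: it is stated as a cited result from \cite{Hoffmann2010}, and no argument is given either in the main text or in the appendices. There is therefore nothing in the paper to compare your proposal against.

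That said, your sketch is the standard one and lines up with the proof in the cited source: induction on the big-step evaluation derivation, inversion on the last syntax-directed typing rule, monotonicity of $\Phi$ to absorb the structural rules, the Pascal-rule identity $\phi(m{+}1,\vec q)=q_1+\phi(m,\lhd\vec q)$ for the list constructor and destructor, and a strengthened ``well-annotated environment'' invariant (a logical-relations-style consistency predicate on closures) to handle application under general recursion. Your remark that the induction must be on the evaluation derivation rather than the typing derivation is exactly the point that makes the recursion case go through.
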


\subsection{Multivariate AARA}

In contrast to univariate AARA, multivariate AARA allows us to mix potential of different lists. 
For example, we can have $\abs{\ell_1} \cdot \abs{\ell_2}$'s worth of potential, where $\abs{\cdot}$ denotes the length of a list, in multivariate AARA. 
Due to this multivariate nature, multivariate AARA has a single global resource annotation represented by a multivariate polynomial over all size variables occurring in a given term. 
This global resource annotation is separate from individual types in a typing context. 

Multivariate AARA is strictly more expressive than univariate one. 
This is surprising in light of the fact that multivariate polynomials can always be bounded by univariate polynomials; e.g.~$x y$ is bounded by $x^2 + y^2$. 
Examples of programs that cannot be typed in univariate AARA but are typable in multivariate AARA are in Section~\ref{sec:high-level desgin of inherently polynomial time} and Section~\ref{sec:typable fragment of resource-aware ML}. 

\paragraph*{Resource-Annotated Types}

Resource-annotated types in multivariate AARA are formed by
\begin{alignat*}{4}
	b ::= {} & \mathbf{1} &\qquad& \text{unit type} & \qquad \qquad B ::= {} & \abra{b, Q} \\
	& b_1 + b_2 && \text{sum type} & \tau ::= {} & b &\qquad& \text{base type} \\
	& b_1 \times b_2 && \text{product type} && B_1 \rightarrow B_2 && \text{arrow type} \\
	& L (b) && \text{list type}. 
\end{alignat*}
In $\abra{b, Q}$, $Q$ is a multivariate resource annotation over the size variables inside $b$.
This will be formalized shortly.  

Given a base type $b \in \mathbb{B}$, its base polynomial is a function of type $\dbra{b} \rightarrow \mathbb{N}$, where $\dbra{b}$ is the set of semantic values of type $b$. 
The set of base polynomials associated with type $b$, denoted by $\mathcal{B} (b)$, is inductively defined as follow: 
\begin{align*}
	\mathcal{B} (\mathbf{1}) & := \{\lam v. 1 \} \\
	\mathcal{B} (b_1 + b_2) & := \{\lam (\ell \cdot v). p (v) \mid p \in \mathcal{B} (b_1) \} \cup \{\lam (r \cdot v). p (v) \mid p \in \mathcal{B} (b_2) \}\\
	\mathcal{B} (b_1 \times b_2) & := \{\lam \abra{v_1, v_2}. p_1 (v_1) \cdot p_2 (v_2) \mid p_{i} \in \mathcal{B} (b_{i}) \} \\
	\mathcal{B} (L (b)) & := \{\lam [v_1, \ldots, v_n]. \sum_{1 \leq j_1 < \cdots < j_{k} \leq n} \; \prod_{1 \leq i \leq k} p_{i} (v_{j_i}) \mid k \in \mathbb{N}, p_{i} \in \mathcal{B} (b) \}. 
\end{align*}
For $b_1 + b_2$, we have a set of base polynomials for the $\ell$-tag and another set for the $r$-tag. 
If a base polynomial is applied to a value with a wrong tag, we assume that the output is 0. 
For instance, if we feed a value $\ell \cdot \abra{\,}$ to $\lam (r \cdot v). 1$, the output should be 0. 
In the definition of $\mathcal{B} (L (b))$, if $n < k$, the function should return 0 since it is the identity of summation. 

Given base type $b$, a resource polynomial $p: \dbra{b} \rightarrow \mathbb{Q}_{\geq 0}$ is a non-negative linear combination of finitely many base polynomials from $\mathcal{B} (b)$.  
It is straightforward to prove that $\mathcal{B} (b)$ for any $b$ contains $\lam v. 1$. 
Therefore, a resource polynomial is always capable of expressing constant potential. 

For convenience, it is desirable to have a succinct notation for base polynomials. 
This is achieved by introducing indexes of base polynomials:  
\begin{align*}
	\mathcal{I} (\mathbf{1}) & := \{*\} \\
	\mathcal{I} (b_1 + b_2) & := \{\ell \cdot i \mid i \in \mathcal{I} (b_1)\} \cup \{r \cdot i \mid i \in \mathcal{I} (b_2)\} \\
	\mathcal{I} (b_1 \times b_2) & := \{\abra{i_1, i_2} \mid i_1 \in \mathcal{I} (b_1), i_2 \in \mathcal{I} (b_2) \} \\
	\mathcal{I} (L (b)) & := \{[i_1, \ldots, i_k] \mid k \in \mathbb{N}, i_{j} \in \mathcal{I} (b)\}. 
\end{align*}

An index is usually used as a subscript for a (meta)-variable representing a coefficient of a base polynomial. 
For instance, $q_{\abra{*, *}} \in \mathbb{Q}_{\geq 0}$ is a meta-variable representing a coefficient of base polynomial $\lam \abra{v_1, v_2}. 1$. 
For any base type $b$, we will write $0_{b}$ for the index $\lam v. 1$. 

For example, consider $\mathcal{I} (L (\mathbf{1})) = \{*, [*], [*, *], [*, *, *], \ldots\}$.
The index $[*, *]$ represents the polynomial function
\begin{align*}
	\lam [v_1, \ldots, v_n]. \sum_{1 \leq j_1 < j_{2} \leq n} \; \prod_{1 \leq i \leq 2} ((\lam v. 1) \; v_{j_i}) & = \lam [v_1, \ldots, v_n]. \sum_{1 \leq j_1 < j_{2} \leq n} 1 \\
	& = \lam [v_1, \ldots, v_n]. \binom{n}{2}. 
\end{align*}
Thus, the multivariate index $[*, *]$ represents a quadratic function on the input list's length. 

The degree of an index is defined by
\begin{align*}
	\degree{*} & := 0 & \degree{\abra{i_1, i_2}} & := \degree{i_1} + \degree{i_2} \\
	\degree{\ell \cdot i}, \degree{r \cdot i} & := \degree{i} & \degree{[i_1, \ldots, i_k]} & := k + \sum_{1 \leq j \leq k} \degree{i_j}. 
\end{align*}
Intuitively, $\degree{i}$ is equal to the degree of the polynomial function that index $i$ represents. 
Because a resource polynomial can only have non-zero coefficients for finitely many base polynomials, any resource polynomial (or a finite set of resource polynomials) has a bounded degree. 
In practice, we ask a user of AARA to supply an upper bound on the degree of base polynomials. 

\paragraph*{Resource Annotations of Typing Contexts}

Given a base-type typing context $\Gamma = \{x_1: b_1, \ldots, x_n: b_n\}$, its multivariate resource annotation is given by a resource polynomial of type $b_1 \times \cdots \times b_n$.
In other words, we treat a typing context as one big tuple and assign a single multivariate annotation to this tuple. 

With regard to an arrow-type typing context $\Sigma = \{f_1: b_{1,1} \rightarrow b_{1,2}, \ldots, f_{m}: b_{m, 1} \rightarrow b_{m, 2} \}$, its multivariate resource annotation has the form 
\begin{equation*}
	\Sigma_{\text{anno}} = \{f_1: B_{1,1} \rightarrow B_{1,2}, \ldots, f_{m}: B_{m, 1} \rightarrow B_{m, 2} \}, 
\end{equation*}
where each $B_{i, j}$ is a pair $\abra{b_{i, j}, Q}$ such that $Q$ is a multivariate resource annotation of $b_{i, j}$. 

\paragraph*{Typing Judgment}

The typing judgment of multivariate AARA takes the form
\begin{equation*}
	\Gamma; P \vdash e : \abra{b, Q}, 
\end{equation*}
where $\Gamma$ and $b$ are free of resource annotations.
$P$ and $Q$ are multivariate annotation over $\Gamma$ and $b$, respectively. 
The type system of multivariate AARA is available in Appendix~\ref{sec:type system of multivariate AARA}. 

To give examples of judgments in multivariate AARA, consider $\mi{append} \; \abra{\ell_1, \ell_2}$, which appends $\ell_1$ to $\ell_2$. 
Suppose that the output must store $n \mapsto n^2$ much potential, where $n$ is the output's length.
It is reasonable that the total potential required for this program is $\abs{\ell_1} + (\abs{\ell_1} + \abs{\ell_2})^2$, out of which $\abs{\ell_1}$ is used to account for the running time. 
This can be expressed by the judgment $\ell_1: L (\mathbf{1}), \ell_2: L (\mathbf{1}); P \vdash \mi{append} \; \abra{\ell_1, \ell_2} : \abra{L (\mathbf{1}), Q}$, where the positive coefficients of $P$ and $Q$ are
\begin{align*}
	& P (\abra{[*], *}) = P (\abra{[*, *], *}) = P  (\abra{[*], [*]}) = P (\abra{*, [*, *]}) = 2 &&
	P (\abra{*, [*]}) = 1 \\
	& Q ([*, *]) = 2 && Q ([*]) = 1.  
\end{align*}
$P$ amounts to $2 \cdot \left( \binom{\abs{\ell_1}}{1} + \binom{\abs{\ell_1}}{2} + \binom{\abs{\ell_1}}{1} \cdot \binom{\abs{\ell_2}}{1} + \binom{\abs{\ell_2}}{2} \right) + 1 \cdot \binom{\abs{\ell_2}}{1}$, which is equal to $\abs{\ell_1} + (\abs{\ell_1} + \abs{\ell_2})^2$ as desired. 
Similarly, $Q$ amounts to $2 \cdot \binom{n}{2} + 1 \cdot \binom{n}{1} = n^2$ as desired, where $n = \abs{\ell_1} + \abs{\ell_2}$. 

The multivariate equivalent of the soundness theorem (Theorem~\ref{theorem:soundness of univariate AARA}) holds \cite{Hoffmann2012}. 


\section{Embedding Polynomial-Time Turing Machines in AARA}
\label{sec:embedding polynomial-time Turing machines in AARA}

In this section, we show that every polynomial-time Turing machine can be expressed as a typable RaML program while preserving the semantics and worst-case cost bounds. 
More formally, we have

\begin{restatable}[Embedding of polynomial-time Turing machines in RaML]{theorem}{embeddingofTuringmachines}
\label{theorem:polynomial-time Turing machines can be embeded in RaML}
Let $M$ be a polynomial-time Turing machine that inputs and outputs bit strings from $\{0, 1\}^{*}$. 
There exists a RaML program $M': \{0, 1\}^{*} \to \{0, 1\}^{*}$ such that
\begin{itemize}
	\item For every input $w \in \{0, 1\}^{*}$, we have $M (w) = M' (w)$;
	\item The computational cost of $M'$ (according to the tick metric) is larger than or equal to the running time of $M$;
	\item Univariate AARA can infer a polynomial upper bound of the computational cost of $M'$. 
\end{itemize}
\end{restatable}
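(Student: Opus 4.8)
The plan is to present $M'$ explicitly as a \emph{clocked} RaML simulator of $M$ and then to construct, by hand, a univariate-AARA derivation for it. Fix $k$ with $\runtime{M}\le(n+1)^{k}$ on inputs of length $n$, and put $p(n):=(n+1)^{k}$. Encode a configuration of $M$ as a value of base type $\m{Conf}:=\rho\times\bigl(L(\sigma)\times L(\sigma)\bigr)$, where $\rho$ and $\sigma$ are finite sum types for the state set and the tape alphabet and the two lists hold the tape left of the head (reversed) and from the head onwards, so the head position is implicit; then write $\m{step}:\m{Conf}\to\m{Conf}$ that performs one transition of $M$ --- inspect the current state and the head symbol (head of the second list, or the blank symbol if it is empty), look the transition up with a finite nested $\m{case}$, write the new symbol, move the head by a bounded number of list constructions and case-splits, and return the argument unchanged on a halting configuration. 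Since $\m{step}$ uses no recursion and performs boundedly many operations it runs in constant time; place a single $\m{tick}\;1$ in its body.

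The simulator runs $\m{step}$ exactly $p(n)$ times on the initial configuration, but the iteration is driven by \emph{shrinking} recursions --- which AARA can pay for --- rather than a counter that would have to be copied. Define $\m{run}_{1},\ldots,\m{run}_{k}$ by
\begin{equation*}
  \m{run}_{1}\,(xs,z)=\m{case}\;xs\;\{[\,]\hookrightarrow z\mid(\_::xs')\hookrightarrow\m{run}_{1}\,(xs',\m{step}\,z)\}
\end{equation*}
and, for $j\ge1$,
\begin{equation*}
  \m{run}_{j+1}\,(xs,z)=\m{case}\;xs\;\{[\,]\hookrightarrow z\mid(\_::xs')\hookrightarrow\m{share}\;xs'\;\m{as}\;a,b\;\m{in}\;\m{run}_{j+1}\,(b,\m{run}_{j}\,(a,z))\},
\end{equation*}
so that $\m{run}_{j}(w,z)$ threads $z$ through exactly $\binom{|w|}{j}$ invocations of $\m{step}$. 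Since $(n+1)^{k}=\sum_{j=0}^{k}a_{j}\binom{n}{j}$ with all $a_{j}\ge0$, composing $\m{run}_{1},\ldots,\m{run}_{k}$ with multiplicities $a_{1},\ldots,a_{k}$ on fresh copies of $w$, preceded by $a_{0}$ bare $\m{step}$s, threads the configuration through exactly $p(n)$ steps. Wrapping this with a front end that builds the initial configuration from (a copy of) $w:\{0,1\}^{*}$ and a back end $\m{extract}$ that reads the output bit string off the final tape yields $M'$. The only $\m{tick}$s executed are the $p(n)$ copies of $\m{tick}\;1$ inside $\m{step}$, so the cost of $M'$ equals $p(n)\ge\runtime{M}$; and $M'(w)=M(w)$ because after $p(n)\ge\runtime{M}$ steps $M$ has already halted and $\m{step}$ is idempotent thereafter --- both facts follow by a routine induction over the run of $M$ with a simulation invariant relating $\m{Conf}$ values to configurations of $M$.

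It remains to give a univariate-AARA derivation for $M'$; its existence makes AARA's linear constraint system satisfiable, so inference returns a polynomial cost bound, which is valid by Theorem~\ref{theorem:soundness of univariate AARA}. Supply degree bound $k$, and annotate the input as $L^{\vec{q}}(\mathbf{1}+\mathbf{1})$ with $\phi(n,\vec{q})=\Theta(p(n))$, possible because $(n+1)^{k}$ has non-negative coefficients in the binomial basis ($n^{j}=\sum_{i}i!\,S(j,i)\binom{n}{i}$ with $S(j,i)\ge0$). Give the two tape lists the annotation $L^{(1)}(\sigma)$ --- potential linear in their length, which stays $O(p(n))$ over the run and suffices to fund $\m{extract}$ --- while each $\m{step}$ draws a constant amount of potential, for its $\m{tick}\;1$ and the single cell a rightward move may append, from the list cell consumed in that step of the recursion. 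The recursions $\m{run}_{j+1}$ typecheck thanks to the additive shift on list annotations: peeling a cell from a list with the degree-$(j{+}1)$ annotation $(0,\ldots,0,c)$ leaves the tail with $(0,\ldots,0,c,c)$, whose surplus over $(0,\ldots,0,c)$ is exactly the degree-$j$ annotation fed to the inner $\m{run}_{j}$, and $\m{share}$ splits potential additively; so $\m{run}_{j}$ needs only $c\binom{n}{j}$ potential on its list argument, and the $\m{share}$s of $w$ supply every piece. The main obstacle is bookkeeping rather than a conceptual one: all these annotation vectors must be chosen simultaneously, and every instance of the let, share, application, constructor and $\m{case}$ rules checked --- in particular, that the affine discipline, mediated entirely by $\m{share}$, never forces us to duplicate potential that is not present.
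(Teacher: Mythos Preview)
Your construction is correct and, while it rests on the same combinatorial device as the paper's proof (nested structural recursions on copies of the input, typed via the additive shift so that degree-$(j{+}1)$ potential on the list releases degree-$j$ potential for the inner call), it organizes the simulation differently. The paper first \emph{materializes} a clock: a family of functions $\mi{amp}_d$---structurally identical to your $\m{run}_j$---builds a list $\mi{ps}$ of length $p(|w|)$ and also pre-fills the right tape list with $p(|w|)$ blanks; the simulator $\mi{simulate}$ then runs a flat loop by pattern-matching on $\mi{ps}$, spending one cell per transition. You instead fuse clock construction with simulation: each leaf of the $\m{run}_j$ recursion calls $\m{step}$ directly rather than emitting a list cell, and your tape grows on demand rather than being pre-allocated. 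The typability arguments are dual---the paper shows $\mi{amp}_d$ can be typed so that its \emph{output} carries $L^{(1)}$ potential, which the loop then consumes; you show $\m{run}_j$ can be typed so that its list \emph{argument} carries degree-$j$ potential that the shift releases step by step. Your route avoids allocating $O(p(n))$ auxiliary cells and keeps the whole argument in one place; the paper's route keeps the simulator itself a simple loop and isolates the polynomial bookkeeping in a standalone lemma about $\mi{amp}_d$, which makes the typing derivation somewhat more modular.
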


Theorem~\ref{theorem:polynomial-time Turing machines can be embeded in RaML} only tells us the existence of a RaML program $M'$ that is typable in univariate AARA and that simulates $M$ faithfully. 
In our proof of the theorem, we assume that a polynomial bound on the running time of $M$ is known. 
Thus, if we do not have access to this polynomial bound, we cannot construct $M'$. 
In fact, the problem of determining whether a given Turing machine runs in polynomial time or not is undecidable \cite{Hoffmann2011}. 

It is fairly easy to prove that the cost of any program according to the tick metric is asymptotically bounded by its running time. 
Therefore, in the statement of Theorem~\ref{theorem:polynomial-time Turing machines can be embeded in RaML}, we can replace the ``tick metric'' with the ``running time'' of RaML. 

A detailed proof of Theorem~\ref{theorem:polynomial-time Turing machines can be embeded in RaML} is available in Appendix~\ref{sec:upplementary results for the embedding of polynomial-time Turing machines in AARA}. 

\subsection{Preliminaries}
\label{sec:definitions of Turing machines and Resource-Aware ML}

\begin{restatable}[Turing machine]{definition}{Turingmachine}
A (deterministic) Turing machine $M$ is specified by an 8-tuple $(Q, \Sigma, \Gamma, \vdash, \sqcup, \delta, q_0, q_{\text{final}})$, where
\begin{itemize}
	\item $Q$ is a finite set of machine states.
	\item $\Sigma$ is a finite input alphabet.
	$\Gamma$ is a finite alphabet for symbols written on $M$'s tape. 
	Since an input will be initially placed on the tape, every input symbol is also a tape symbol. 
	\item ${\vdash} \in \Gamma \setminus \Sigma$ is the left end marker that demarcates the left end of a semi-infinite working tape, and $\sqcup \in \Gamma \setminus \Sigma$ is the blank symbol for the tape. 
	\item $\delta: Q \times \Gamma \rightarrow Q \times \Gamma \times \{L, R\}$ is the transition function.
	\item $q_0 \in Q$ is the initial state, and $q_{\text{final}} \in Q$ is the final state.
\end{itemize}
\end{restatable}

In the initial configuration of a Turing machine, an input string $w$ is placed immediately after the left end marker $\vdash$ on the tape.
The state of the machine is initially $q_0$, and the read/write head is positioned on the first symbol of $w$. 
The rest of the tape is filled with $\sqcup$. 

The Turing machine first (i) reads the content of the cell currently under the tape head and (ii) identifies the current state of the machine.
The machine then overwrites the current cell (if necessary), updates the machine's state, and moves the head to the left or right according to the transition function $\delta$. 
The machine terminates as soon as it enters $q_{\text{final}}$.
Upon termination, the content of the tape before the first blank symbol is considered as the machine's output. 
The running time is defined as the number of steps the Turing machine makes before termination. 

Without loss of generality, we will henceforth only consider Turing machines with $\Sigma = \{0, 1\}$ and $\Gamma = \Sigma \cup \{\vdash, \sqcup\}$. 

To enhance clarity, we will introduce two type aliases, $\text{State}$ and $\text{Sym}$, which are defined as $L (\mathbf{1} + \mathbf{1})$; i.e.~bit strings or natural numbers. 
The type $\text{State}$ represents machine states of $M$, and $\text{Sym}$ represents tape symbols of $M$. 
In fact, because $M$ has finitely many machine states and tape symbols, $\text{State}$ and $\text{Sym}$ can alternatively be encoded as $\mathbf{1} + \cdots + \mathbf{1}$. 

\subsection{Embedding}

Fix a polynomial-time Turing machine $M = (Q, \Sigma, \Gamma, \vdash, \sqcup, \delta, q_0, q_{\text{final}})$.
Assume that the running time of $M$ is bounded above by $p (n)$ for some polynomial $p: \mathbb{N} \rightarrow \mathbb{N}$. 
The target program of the translation will be denoted by $M'$, and this is what we are about to define. 
$M'$ works as described in Algorithm~\ref{alg:target RaML program}. 
A RaML implementation of $M'$ is available in Appendix~\ref{sec:target RaML programs in the embedding of polynomial-time Turing machines}. 

\begin{algorithm}
	\begin{algorithmic}[1]
		\Require $w  \in \{0, 1\}^{*}$
		\Procedure{$M'$}{$w$}
		\State Create a singleton list $\ell_1 : L(\text{Sym})$ containing $\vdash$
		\State Create a list $\ell_2 : L(\text{Sym})$ of size $p(\abs{w})$ filled with $\sqcup$ \label{algline:l_2 is generated}
		\State Prepend $\ell_2$ with $w$ \label{algline:prepend l_2 with w}
		\State Create a list $\mi{ps} : L(\mathbf{1})$ of size $p (\abs{w})$ \Comment{Reservoir of potential}
		\State $s \gets q_0$ \Comment{Initialize the current state}
		\While{$s \neq q_{\text{final}} \land \mi{ps} \neq [\,]$}
		\State $\mi{ps} \gets \text{tail} \; \mi{ps}$ \Comment{Potential is released} \label{algline:first line of the loop's body}
		\State Compute $\delta(s, \ell_2[0])$
		\State Update $s$ and $\ell_2[0]$ appropriately
		\State Update the tape head's position by moving the head of $\ell_1$ or $\ell_2$ to the other \label{algline:last line of the loop's body}
		\EndWhile
		\State \textbf{return} $\text{append} (\text{reverse} \; \ell_1, \ell_2)$
		\EndProcedure
	\end{algorithmic}
	\caption{Operational working of target RaML program $M'$}
	\label{alg:target RaML program}
\end{algorithm}
The list $\ell_1$ represents the region on $M$'s tape to the left of the tape head (in the reverse order and excluding the cell where the tape head is currently on), and $\ell_2$ represents the region to the right of the head (including the current cell). 
Since it is assumed that $p (\abs{w})$, where $\abs{w}$ denotes the length of input list $w$, is an upper bound on $M$'s running time, we are assured that $M$ requires at most $p (\abs{w})$ many cells on the working tape.
This is why $\ell_2$ initially has size $p (\abs{w})$.
In fact, because we prepend $\ell_2$ with $w$ in line~\ref{algline:prepend l_2 with w}, we have $\abs{w}$ more cells than necessary. 

The list $\mi{ps}$ acts as a reservoir of potential, storing constant potential in each element. 
As the head of $\mi{ps}$ is removed in line~\ref{algline:first line of the loop's body}, the potential stored in this element is freed and will be consumed in subsequent lines inside the loop's body. 

It is technically possible to store potential directly in $\ell_1$ and $\ell_2$, which together simulate $M$'s working tape. 
However, not all cells on the working tape of $M$ are accessed equally often---some cells are accessed more often than others, and the maximum number of accesses to a given cell may not be bounded by a constant.
If we are to store potential in $\ell_1$ and $\ell_2$, each cell of $\ell_1$ and $\ell_2$ needs to store $p (n)$ units of potential at the beginning. 
As a result, the total amount of potential supplied to $M'$ is $p^{2} (n)$, which is a gross over-approximation of the actual running time. 
Therefore, to have a tighter cost bound, a separate list, namely $\mi{ps}$, is employed as a reservoir of potential. 


\section{Inherently Polynomial Time}
\label{sec:inherently polynomial time}

Section~\ref{sec:embedding polynomial-time Turing machines in AARA} investigates the expressive power of AARA from the viewpoint of programming language semantics, disregarding the issue of how to algorithmically turn an arbitrary Turing machine into a typable RaML program. 
By contrast, in this section, we aim to identify a typable fragment of AARA that is defined statically/axiomatically.
Henceforth, we will call the sufficient condition corresponding to the typable fragment that this section presents \emph{inherently polynomial time}. 

A key requirement is that the typable fragment should not resemble AARA's type system, which itself is also defined axiomatically.
Otherwise, it would be trivial to prove that any term in this fragment is typable in AARA. 
Because we want users of AARA to benefit from our findings of the present work, another requirement is that the definition (or at least the informal definition) of inherently polynomial time should be easy to convey to users of AARA. 
On the other hand, it is not our priority to find as large a typable fragment as we can. 

In the remaining of the article, we will focus on the running time as a cost metric of RaML, unless stated otherwise. 

\subsection{High-Level Design}
\label{sec:high-level desgin of inherently polynomial time}

By Theorem~\ref{theorem:soundness of univariate AARA} (and its multivariate equivalent), AARA is sound: if a program is typable in AARA, its resource-annotated type is a correct upper bound on the running time. 
Hence, to be typable in AARA, the worst-case running time of a program must be polynomial. 
To ensure termination of programs, we first restrict recursion to primitive recursion. 

Furthermore, the type system of AARA is compositional: if term $e$ is typable, so is every sub-expression of $e$. 
Hence, in order for $e$ to be typable, not only $e$ but also all of its sub-expressions must be polynomial-time. 
This suggests that we should define the sufficient condition inductively, hence the name \emph{inherently} polynomial time. 

It is straightforward to determine whether each of the base cases of the inductive definition is typable or not. 
It remains to work out inductive cases in the inductively defined sufficient condition for typability. 
The most interesting case is primitive recursion. 
A primitive recursion will be written as
\begin{equation*}
	e := \m{rec} \; x \; \{[\,] \hookrightarrow e_0 \mid (y::\mi{ys}) \; \m{with} \; z \hookrightarrow e_1 \}, 
\end{equation*}
where $x$ is matched against $y :: \mi{ys}$ in the second branch, and $z$ is the result of a recursive call. 
The stepping function $e_1$ can only contain $y$, $\mi{ys}$, and $z$ as free variables; i.e.~$\mathtt{FV} (e_1) \subseteq \{y, \mi{ys}, z\}$. 
From compositionality, we know that $e_0$ and $e_1$ are both typable and hence run in polynomial time. 
Under what condition does the entire $e$ run in polynomial time as well?

To answer this question, we first observe the following.
Without any restrictions on $e_0$ and $e_1$ apart from that they should be typable, AARA may project $e$'s worst-case time complexity to be exponential even if the actual running time of $e$ is polynomial. 
To illustrate this, consider 
\begin{equation} \label{eq:example of exponential blowup}
	e := \m{rec} \; x \; \{[\,] \hookrightarrow [\,] \mid (y::\mi{ys}) \; \m{with} \; z \hookrightarrow \m{share} \; z \; \m{as} \; z_1, z_2 \; \m{in} \; \mi{append} \; \abra{z_1, z_2} \}. 
\end{equation}
Although the actual running time of $e$ is $O (\abs{x})$ and hence is linear, $e$ is untypable in polynomial AARA. 
The problem of \eqref{eq:example of exponential blowup} is that the stepping function doubles the input size.
This makes AARA conclude (na\"ively) that the worst-case total running time is $O (2^{\abs{x}})$, and this cost bound is beyond the expressive power of AARA (exponential AARA \cite{Kahn2020}, however, can handle exponential cost bounds). 

To preclude the example \eqref{eq:example of exponential blowup}, it is reasonable to require the running time of $e_1$ (i.e.~a stepping function inside primitive recursion) to be constant in the size of $z$ (i.e.~the result of a recursive call). 
More concretely, if $T (\abs{y}, \abs{\mi{ys}}, \abs{z})$ is the running time of a stepping function, we demand $T (\abs{y}, \abs{\mi{ys}}, \abs{z}) \leq p (\abs{y}, \abs{\mi{ys}})$, where $p (\abs{y}, \abs{\mi{ys}})$ is a polynomial in $\abs{y}$ and $\abs{\mi{ys}}$ (i.e.~the sizes of $y$'s and $\mi{ys}$'s semantic values\footnotemark). 
We will adopt this idea in the formulation of inherently polynomial time. 

\footnotetext{A formal definition of the size of RaML's base-type semantic values is not provided in this article.
However, the idea is intuitive.
For example, the size of a list is given by the sum of all elements' sizes.}

Although this idea results in a fairly simple inductive definition of inherently polynomial time, a major drawback is that some realistic programs are not admitted by the current formulation of inherently polynomial time.
For instance, consider $\mi{multiply}$ that, given input lists $\ell_1$ and $\ell_2$, produces a list of size $\abs{\ell_1} \cdot \abs{\ell_2}$: 
\begin{equation} \label{eq:program that returns the product of two input lists}
	\mi{multiply} := \lam \ell_1. \lam \ell_2. \m{rec} \; \ell_1 \; \{[\,] \hookrightarrow \abra{\ell_2, [\,]} \mid (y::\mi{ys}) \; \m{with} \; z \hookrightarrow e_1 \},
\end{equation}
where the stepping function of primitive recursion is 
\begin{equation*}
	e_1 \equiv \m{case} \; z \; \{\abra{z_1, z_2} \hookrightarrow \m{share} \; z_1 \; \m{as} \; z_{1,1}, z_{1,2} \; \m{in} \; \abra{z_{1, 1}, \mi{append} \; \abra{z_{1, 2}, z_2}}\}. 
\end{equation*}
The first component of $z$ stores $\ell_2$, while the second component of $z$ acts as an accumulator. 
The running time of $e_1$ is polynomial in $\abs{z_1}$ but constant in $\abs{z_2}$. 
Therefore, $e_1$'s running time is only polynomial \emph{partially} in $\abs{z}$.
This is why the overall time complexity of $e$ remains polynomial instead of becoming exponential. 
Nonetheless, \eqref{eq:program that returns the product of two input lists} is not inherently polynomial time according to the current formulation, since the formulation does not allow $e_1$'s running time to have any dependence on $\abs{z}$. 

Furthermore, \eqref{eq:program that returns the product of two input lists} can only be typed in multivariate AARA and not in univariate AARA. 
This means our formulation of inherently polynomial time fails to capture some of the realistic programs that are typable only in multivariate AARA. 
In view of this, one might wonder whether inherently polynomial time is completely encapsulated by univariate AARA; that is, every inherently polynomial-time RaML program is typable in univariate AARA. 
The answer is negative. 

As a counterexample, consider the standard $\mi{append}$ defined as
\begin{equation} \label{eq:example of append that demonstrates a limitation in the expressive power of univariate polynomial AARA}
\mi{append} := \lam \ell_1. \lam \ell_2. \m{rec} \; \ell_1 \; \{[\,] \hookrightarrow \ell_2 \mid (y::\mi{ys}) \; \m{with} \; z \hookrightarrow y::z \}. 
\end{equation}
Note that it is inherently polynomial time. 
$\mi{append}$ alone is typable in univariate AARA as well as multivariate AARA. 
However, if we require the output of $\mi{append}$ to carry quadratic potential (because it will be later fed to a function that demands quadratic potential from inputs, for example), then univariate AARA cannot type $\mi{append}$---we need to resort to multivariate AARA to type it. 

In summary, our formulation of inherently polynomial time goes beyond the remit of univariate AARA, but does not capture the full range of realistic programs that require multivariate potential. 

\subsection{Formulation of Inherently Polynomial Time}
\label{sec:formulation of inherently polynomial time}

\paragraph*{Restricting the Syntax of Resource-Aware ML}

To ensure termination of programs, we require programs to use primitive recursion in place of general recursion. 
Hence, we will from now on work with a fragment of RaML wherein general recursion is replaced by primitive recursion. 
This fragment removes $\m{fun} \; f \; x = e$ from the original RaML (Section~\ref{sec:resource-aware ML}) and adds the following: 
\begin{enumerate}
	\item $\lam (x: b). e$ for a lambda abstraction, where $b \in \mathbb{B}$; 
	\item $\m{rec} \; x \; \{[\,] \hookrightarrow e_0 \mid (y::\mi{ys}) \; \m{with} \; z \hookrightarrow e_1 \}$, where $z$ denotes the result of the recursive call. 
\end{enumerate}
In primitive recursion, $e_1$ is only allowed to mention $\{y, \mi{ys}, z\}$. 
If $e_1$ needs access to a global variable $v$ (i.e.~a variable from outside the primitive recursion), $v$ should be transferred to $e_1$ by placing $v$ inside $z$. 

The reason why we deny $e_1$ access to a global variable is that every variable symbol can only be accessed at most once in RaML. 
However, this is in fact already violated by $e_1$ having access to $\mi{ys}$ (because this means some elements of the input $x$ are accessed multiple times during primitive recursion). 
Further, even if we let $e_1$ access global variables, AARA can be easily adapted.
Also, it will result in a less strict formulation of inherently polynomial time that admits $\mi{multiply}$ in \eqref{eq:program that returns the product of two input lists}. 
Nonetheless, for simplicity, this article assumes that $e_1$ can only mention $y$, $\mi{ys}$, and $z$. 

Primitive recursion can be encoded using general recursion as 
\begin{equation*}
	\m{fun} \; f \; \abra{x, \Gamma} = \m{case} \; x \; \{[\,] \hookrightarrow e_0 \mid y::\mi{ys} \hookrightarrow {} \m{share} \; \mi{ys} \; \m{as} \; \mi{ys}_1, \mi{ys}_2 \; \m{in} \; \m{let} \; z = f \; \abra{\mi{ys}_1, \Gamma} \; \m{in} \; e_1 \}.
\end{equation*}
Here, $\Gamma$ is a set/sequence of those variables that do not appear in $e_1$, but $e_0$. 
Variable $\mi{ys}_1$ is passed to the recursive call, and $\mi{ys}_2$ is used in $e_1$ (if $e_1$ mentions $\mi{ys}$). 

\paragraph*{Judgments}

The primary judgment of inherently polynomial time is
\begin{equation} \label{eq:judgment for base-type terms in partial polynomial dependence}
	\Delta; \Gamma \vdash e \; \inhpoly{V}, 
\end{equation}
where
\begin{itemize}
	\item $\Gamma$ is a typing context containing both base-type and arrow-type variables such that $\Gamma \vdash e: b$ for base type $b$.
	\item $V \subseteq \dom{\Gamma}$ is a set of variables. 
	\item $\Delta$ is a set of $f \; \m{time}$, where $f \in \dom{\Gamma}$ is an arrow-type variable and $\m{time} \in \{\m{const}, \m{poly}\}$.  
\end{itemize}
Sometimes we split $\Gamma$ into $\Sigma$ for arrow-type variables and $\Gamma$ for base-type variables, writing the judgment as $\Delta; \Sigma; \Gamma \vdash e \; \inhpoly{V}$. 
\eqref{eq:judgment for base-type terms in partial polynomial dependence} is only applicable to base-type expressions $e$. 

An informal interpretation of \eqref{eq:judgment for base-type terms in partial polynomial dependence} is
\begin{itemize}
	\item $f \; \m{const}$ denotes that the running time of $f$ is constant with respect to the input size, and likewise, $f \; \m{poly}$ denotes that $f$'s running time is polynomial\footnote{$f$'s running time being polynomial does NOT mean that it is \emph{strictly} polynomial---it can also be constant in the input size.} in the input size. 
	\item The running time of $e$ is (i) polynomial\footnote{Again, the running time of $e$ may be constant as well as polynomial in the size of any $v \in V$.} in the sizes of those variables in $V$ but (ii) constant in the sizes of $\dom{\Gamma} \setminus V$. 
	\item Every sub-expression of $e$ runs in polynomial time. 
\end{itemize}

The judgments for an arrow-type expression $e$ are
\begin{equation} \label{eq:judgment for arrow-type terms in partial polynomial dependence}
	\Delta; \Gamma \vdash e \; \m{const} \qquad \Delta; \Gamma \vdash e \; \m{poly},  
\end{equation}
$\Delta; \Gamma \vdash e \; \m{const}$ means $e$ runs in constant time with respect to the input size, and $\Delta; \Gamma \vdash e \; \m{poly}$ likewise means $e$'s running time is polynomial in the input size. 

\paragraph*{Inference Rules}

The most important inference rules defining \eqref{eq:judgment for base-type terms in partial polynomial dependence} are displayed in Figure~\ref{fig:syntax-directed rules of inherently polynomial time}. 
Throughout these rules, $b$ denotes a base type, $\m{time}$ is drawn from $\{\m{const}, \m{poly}\}$, and $V$ is a set of variables. 
The remaining rules are deferred to Figure~\ref{fig:remaining rules for inherently polynomial time} in Appendix~\ref{sec:proof of the typability theorem}. 

\begin{figure}[hbt!]
	\begin{small}
	\begin{center}
		\AxiomC{}
		\RightLabel{\textsc{(IP:Base)}}
		\UnaryInfC{$\cdot; x:b \vdash x \; \inhpoly{\emptyset}$}
		\DisplayProof
		\qquad
		\AxiomC{$\Delta = \{f \; \m{time}\}$}
		\RightLabel{\textsc{(IP:Arrow)}}
		\UnaryInfC{$\Delta; f: b_1 \rightarrow b_2 \vdash f \; \m{time}$}
		\DisplayProof
	\end{center}
	\begin{center}
		\AxiomC{$\cdot; x:b \vdash x \; \inhpoly{\emptyset}$}
		\RightLabel{\textsc{(IP:SumL)}}
		\UnaryInfC{$\cdot; x:b \vdash \ell \cdot x \; \inhpoly{\emptyset}$}
		\DisplayProof
		\qquad
		\AxiomC{$\cdot; x:b \vdash x \; \inhpoly{\emptyset}$}
		\RightLabel{\textsc{(IP:SumR)}}
		\UnaryInfC{$\cdot; x:b \vdash r \cdot x \; \inhpoly{\emptyset}$}
		\DisplayProof
	\end{center}
	\begin{prooftree}
		\AxiomC{$\cdot; x_1: b_1 \vdash x_1 \; \inhpoly{\emptyset}$}
		\AxiomC{$\cdot; x_2: b_2 \vdash x_2 \; \inhpoly{\emptyset}$}
		\RightLabel{\textsc{(IP:Pair)}}
		\BinaryInfC{$\cdot; x_1: b_1, x_2: b_2 \vdash \abra{x_1, x_2} \; \inhpoly{\emptyset}$}
	\end{prooftree}

	\begin{center}
		\AxiomC{}
		\RightLabel{\textsc{(IP:Unit)}}
		\UnaryInfC{$\cdot; \cdot \vdash \abra{\,} \; \inhpoly{\emptyset}$}
		\DisplayProof
		\quad
		\AxiomC{$\Delta = \{x_1 \; \m{const}\}$}
		\RightLabel{\textsc{(IP:App-Const)}}
		\UnaryInfC{$\Delta; x_1: b_1 \rightarrow b_2, x_2: b_1 \vdash x_1 \; x_2 \; \inhpoly{\emptyset}$}
		\DisplayProof
	\end{center}

	\begin{center}
		\AxiomC{}
		\RightLabel{\textsc{(IP:Nil)}}
		\UnaryInfC{$\cdot; \cdot \vdash [\,] \; \inhpoly{\emptyset}$}
		\DisplayProof
		\quad
		\AxiomC{$\Delta = \{x_1 \; \m{poly}\}$}
		\RightLabel{\textsc{(IP:App-Poly)}}
		\UnaryInfC{$\Delta; x_1: b_1 \rightarrow b_2, x_2: b_1 \vdash x_1 \; x_2 \; \inhpoly{\{x_2\}}$}
		\DisplayProof
	\end{center}

	\begin{prooftree}
		\AxiomC{$\cdot; x_1: b \vdash x_1 \; \inhpoly{\emptyset}$}
		\AxiomC{$\cdot; x_2: L (b) \vdash x_2 \; \inhpoly{\emptyset}$}
		\RightLabel{\textsc{(IP:Cons)}}
		\BinaryInfC{$\cdot; x_1: b, x_2: L (b) \vdash x_1::x_2 \; \inhpoly{\emptyset}$}
	\end{prooftree}

	\begin{prooftree}
		\AxiomC{$\Delta; \Gamma, y: b_1 \vdash e_{\ell} \; \inhpoly{V [x \mapsto y]}$}
		\AxiomC{$\Delta; \Gamma, y: b_2 \vdash e_{r} \; \inhpoly{V [x \mapsto y]}$}
		\RightLabel{\textsc{(IP:Case-Sum)}}
		\BinaryInfC{$\Delta; \Gamma, x: b_1 + b_2 \vdash \m{case} \; x \; \{\ell \cdot y \hookrightarrow e_{\ell} \mid r \cdot y \hookrightarrow e_{r}\} \; \inhpoly{V}$}
	\end{prooftree}
	\begin{prooftree}
		\AxiomC{$\Delta; \Gamma, x_1: b_1, x_2: b_2 \vdash e \; \inhpoly{V [x \mapsto x_1, x_2]} $}
		\RightLabel{\textsc{(IP:Case-Prod)}}
		\UnaryInfC{$\Delta; \Gamma, x: b_1 \times b_2 \vdash \m{case} \; x \; \{\abra{x_1, x_2} \hookrightarrow e\} \; \inhpoly{V}$}
	\end{prooftree}	
	\begin{prooftree}
		\AxiomC{$\Delta; \Gamma \vdash e_0 \; \inhpoly{V \setminus \{x\}}$}
		\AxiomC{$\Delta; \Gamma, x_1: b, x_2: L (b) \vdash e_1 \; \inhpoly{V [x \mapsto x_1, x_2]}$}
		\RightLabel{\textsc{(IP:Case-List)}}
		\BinaryInfC{$\Delta; \Gamma, x: L (b) \vdash \m{case} \; x \; \{[\,] \hookrightarrow e_0 \mid (x_1::x_2) \hookrightarrow e_1 \} \; \inhpoly{V}$}
	\end{prooftree}
	\begin{prooftree}
		\AxiomC{$\Delta; \Gamma \vdash e_0 \; \inhpoly{V}$}
		\AxiomC{$\cdot; y:b, \mi{ys}: L (b), z: b_2 \vdash e_1 \; \inhpoly{\{y, \mi{ys}\}}$}
		\RightLabel{\textsc{(IP:Rec)}}
		\BinaryInfC{$\Delta; \Gamma, x: L (b) \vdash \m{rec} \; x \; \{[\,] \hookrightarrow e_0 \mid (y::\mi{ys}) \; \m{with} \; z \hookrightarrow e_1\} \; \inhpoly{V \cup \{x\}}$}
	\end{prooftree}
	\begin{prooftree}
		\AxiomC{$\Delta_1; \Sigma_1; \Gamma_1 \vdash e_1 \; \inhpoly{V_1}$}
		\AxiomC{$\Delta_2; \Gamma_2, x: b \vdash e_2 \; \inhpoly{V_2}$}
		\RightLabel{\textsc{(IP:Let-Base)}}
		\BinaryInfC{$\Delta_1 \cup \Delta_2; \Sigma_1 \cup \Gamma_1 \cup \Gamma_2 \vdash \m{let} \; x = e_1 \; \m{in} \; e_2 \; \inhpoly{V_3}$}
	\end{prooftree}
	\begin{prooftree}
		\AxiomC{$\Delta; \Gamma, x_1: b, x_2: b \vdash e \; \inhpoly{V [x \mapsto x_1, x_2]}$}
		\RightLabel{\textsc{(IP:Share-Base)}}
		\UnaryInfC{$\Delta; \Gamma, x: b \vdash \m{share} \; x \; \m{as} \; x_1, x_2 \; \m{in} \; e \; \inhpoly{V}$}
	\end{prooftree}
	\end{small}
	\caption{Key inference rules of inherently polynomial time.}
	\label{fig:syntax-directed rules of inherently polynomial time}
\end{figure}

In \textsc{(IP:Case-Sum)}, the notation $V [x \mapsto y]$ refers to the result of replacing $x$ in $V$ with $y$ (if $x \in V$); otherwise, $V$ remains intact. 
If the running time of $\m{case} \; x \; \{\ell \cdot y \hookrightarrow e_{\ell} \mid r \cdot y \hookrightarrow e_{r}\}$ in the rule's conclusion is allowed to be polynomial in $\abs{x}$ (i.e.~$x \in V$), then $e_{i \in \{\ell, r\}}$ in the two premises is allowed to run in polynomial time in $\abs{y} = \abs{x} - 1$. 

Similarly, in \textsc{(IP:Case-Prod)}, $V [x \mapsto x_1, x_2]$ means $(V \setminus \{x\}) \cup \{x_1, x_2\}$ if $x \in V$; otherwise, $V$ remains unchanged. 

\textsc{(IP:Rec)} is the crux of the notion of inherently polynomial time. 
Observe that the stepping function $e_1$ must be constant-time in $\abs{z}$ (i.e.~the size of $z$'s semantic value). 

In \textsc{(IP:Let-Base)}, we use a finer-grained notation where the typing context of $e_1$ is split into $\Sigma_1$ for arrow-type variables and $\Gamma_1$ for base-type variables. 
$V_3$ is determined by
\begin{equation*}
V_3 := 
\begin{cases}
\dom{\Gamma_1} \cup (V_2 \setminus\{x\}) & \text{if } x \in V_2; \\
V_1 \cup V_2 & \text{otherwise}. 
\end{cases}
\end{equation*}
If $x \in V_2$, it means that $e_2$ runs in polynomial time in $\abs{x}$.
In the worst case, not only the running time of $e_1$ but $\abs{e_1}$ (i.e.~the output size of $e_1$) is polynomial in the sizes of those variables in $V_1$. 
Hence, in the worst case, the overall running time of $\m{let} \; x = e_1 \; \m{in} \; e_2$ is polynomial in $\dom{\Gamma_1}$, which contains all base-type variables appearing in $e_1$, and $V_2 \setminus \{x\}$. 
Note that \textsc{(IP:Let-Base)} considers the worst case---if we had information about the output size, we might be able to derive a more precise judgment. 

Finally, the judgment \eqref{eq:judgment for arrow-type terms in partial polynomial dependence} is defined by the following inference rules: 
\begin{center}
	\AxiomC{$\Delta; x:b \vdash e \; \inhpoly{\emptyset}$}
	\RightLabel{\textsc{(IP:Const)}}
	\UnaryInfC{$\Delta; \cdot \vdash \lam (x: b). e \; \m{const}$}
	\DisplayProof
	\qquad
	\AxiomC{$\Delta; x:b \vdash e \; \inhpoly{\{x\}}$}
	\RightLabel{\textsc{(IP:Poly)}}
	\UnaryInfC{$\Delta; \cdot \vdash \lam (x: b). e \; \m{poly}$}
	\DisplayProof
\end{center}
In \textsc{(IP:Const)}, because the conclusion indicates that the $\lam$-abstraction's running time is constant in the input size, the premise states that the running time of the body $e$ can only be polynomial in $\dom{\Gamma}$, which excludes $x$. 
By contrast, in the premise of \textsc{(IP:Poly)}, the set of variables contains $x$. 


\section{Typable Fragment of Resource-Aware ML}
\label{sec:typable fragment of resource-aware ML}

It is nontrivial to prove that inherently polynomial time (Section~\ref{sec:formulation of inherently polynomial time}) implies typability in multivariate AARA. 
The chief challenge is to come up with a suitable statement of a typability theorem (i) that we can prove by induction and (ii) that satisfies the following two requirements. 
Firstly, because a term $e$ may later be used as an input to a function, it must be possible to type $e$ such that a user-specified (i.e.~arbitrary) amount of potential remains in $e$'s output. 
Secondly, to type primitive recursion, we need to establish an invariant of resource annotations that is analogous to a loop invariant in Hoare logic. 
Specifically, given a primitive recursion $\m{rec} \; x \; \{[\,] \hookrightarrow e_0 \mid (y::\mi{ys}) \; \m{with} \; z \hookrightarrow e_1 \}$, we must give an (almost) identical annotation to both $z$, which is the result of a recursive call, and $e_1$, which is a stepping function. 

\paragraph*{Typability Theorem}

We have partially overcome this challenge, and this section presents the result that inherently polynomial time implies typability in multivariate AARA under some restrictions. 
Detailed proofs of Theorem~\ref{theorem:inherently polynomial time implies the existence of a multivariate annotation} and Theorem~\ref{theorem:inherently polynomial time implies the existence of a multivariate annotation where a user-specified amount of potential is available in the output} are available in Appendix~\ref{sec:proof of the typability theorem}. 

\begin{definition}[Variables with zero potential]
\label{def:variables that contain zero potential in the context of multivariate AARA}
Let $\Gamma \cup \{v: b\}$ be a base-type typing context and $P$ be its multivariate annotation. 
Variable $v$ is said to contain zero potential in $P$ if and only if $P (i, j) = 0$ for every $i \in \mathcal{I} (\Gamma)$ and $j \in \mathcal{I} (\{v: b\})$ such that $j \neq 0_{b}$. 
In other words, the potential represented by $P$ is constant with respect to $\abs{v}$. 
\end{definition}

\begin{assumption}
\label{assumption:restrictions on variable sharing and pattern matching on nested lists}
Suppose we are given $\Delta; \Sigma; \Gamma \vdash e \; t$ for $t \in \{\inhpoly{V}, \m{const}, \m{poly} \}$. 
For every sub-derivation $\Delta_{s}; \Sigma_{s}; \Gamma_{s} \vdash e_{s} \; \inhpoly{V_{s}}$ inside the derivation of $\Delta; \Sigma; \Gamma \vdash e \; t$, we assume the following:
\begin{itemize}
	\item If $e_{s} \equiv \m{share} \; v \; \m{as} \; v_1, v_2 \; \m{in} \cdots$, then $v$ must be in $V_{s}$; 
	\item If $e_{s} \equiv \m{case} \; x \; \{[\,] \hookrightarrow \cdots \mid (y::\mi{ys}) \hookrightarrow \cdots \}$, then the type of $x$ is of the form $L (b)$ where $b \in \mathbb{B}$ does not contain a list type; that is, $x$ cannot be a nested list.  
\end{itemize}
\end{assumption}

The next theorem establishes that inherently polynomial time implies typability in multivariate AARA under Assumption~\ref{assumption:restrictions on variable sharing and pattern matching on nested lists}, which restricts variable sharing and pattern matching on nested lists. 

\begin{restatable}[Inherently polynomial time implies typability]{theorem}{existenceofannotation}
\label{theorem:inherently polynomial time implies the existence of a multivariate annotation}
Suppose we are given a term $\Sigma; \Gamma \vdash e: b$ with base type $b \in \mathbb{B}$, where $\Delta; \Sigma; \Gamma \vdash e \; \inhpoly{V}$ holds for some $V \subseteq \dom{\Gamma}$. 
Additionally, assume Assumption~\ref{assumption:restrictions on variable sharing and pattern matching on nested lists}.
There exist $P$ and $Q$ satisfying $\Sigma; \Gamma; P \vdash e : \abra{b, Q}$ such that each $v \in \dom{\Gamma} \setminus V$ contains zero potential (Definition~\ref{def:variables that contain zero potential in the context of multivariate AARA}). 

Consider an arrow-type term $\Sigma; \cdot \vdash e: b_1 \rightarrow b_2$ and assume Assumption~\ref{assumption:restrictions on variable sharing and pattern matching on nested lists}.
There exist $P$ and $Q$ such that $\Sigma; \cdot; 1 \vdash e: \abra{b_1, P} \rightarrow \abra{b_2, Q}$. 
Additionally, if $\Delta; \cdot; \Gamma \vdash e \; \m{const}$ is true, $P$ contains constant potential; i.e.~$b_1$ stores zero potential in $P$. 
\end{restatable}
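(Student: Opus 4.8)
The plan is to prove the theorem by induction on the derivation of the inherently-polynomial-time judgment $\Delta; \Sigma; \Gamma \vdash e \; \inhpoly{V}$ (and, for the arrow-type part, on the derivation of $\m{const}$/$\m{poly}$), following the structure of the inference rules in Figure~\ref{fig:syntax-directed rules of inherently polynomial time}. For the statement to carry through the induction, it must be strengthened: rather than merely asserting the existence of \emph{some} $P, Q$ with the zero-potential side condition, I would phrase the induction hypothesis so that $Q$ can be an \emph{arbitrary} user-specified resource annotation on $b$, and $P$ is then constructed from $Q$ and the structure of $e$ — this is exactly the first requirement flagged in the paragraph preceding the theorem. (Theorem~\ref{theorem:inherently polynomial time implies the existence of a multivariate annotation where a user-specified amount of potential is available in the output}, referenced but not shown in the excerpt, presumably records this stronger form; I would prove that one and derive the displayed statement as the special case $Q = 0$, or more precisely as the case where $Q$ is whatever the ambient context dictates.)

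First I would dispatch the base cases — \textsc{(IP:Base)}, \textsc{(IP:Arrow)}, \textsc{(IP:Unit)}, \textsc{(IP:Nil)}, \textsc{(IP:SumL/R)}, \textsc{(IP:Pair)}, \textsc{(IP:Cons)} — where the AARA typing derivation is immediate and the zero-potential condition holds vacuously or by choosing $P$ to mirror $Q$ along the relevant structural map on indices. Then the straightforward inductive cases: \textsc{(IP:Case-Sum)}, \textsc{(IP:Case-Prod)}, \textsc{(IP:Case-List)}, \textsc{(IP:Share-Base)}, and \textsc{(IP:App-Const)}/\textsc{(IP:App-Poly)}; in each of these the AARA rule for the same syntactic form lets me push the desired output annotation $Q$ back through to the subexpressions, invoke the IH, and check that the variables outside $V$ receive no potential — the key bookkeeping being that the index substitutions $V[x \mapsto y]$, $V[x \mapsto x_1, x_2]$ in the inherently-polynomial-time rules correspond exactly to the index maps used in the matching AARA rules. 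The \textsc{(IP:Let-Base)} case needs the two-pronged definition of $V_3$: when $x \in V_2$ I must account for the fact that both the \emph{running time} and the \emph{output size} of $e_1$ may depend polynomially on all of $\Gamma_1$, so I choose the intermediate annotation on $x$ large enough (polynomial in $\dom{\Gamma_1}$) to pay for $e_2$, and this forces all of $\dom{\Gamma_1}$ into the polynomial-dependence set, matching $V_3$; when $x \notin V_2$, $e_1$ is constant-time and its output can be given zero potential, so $P_1$ is constant and we stay within $V_1 \cup V_2$.

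The main obstacle, as the authors themselves signal, is \textsc{(IP:Rec)}: primitive recursion. Here I need a resource-annotation \emph{invariant} playing the role of a loop invariant — an annotation that is preserved (up to the structural shift corresponding to consuming one list cell) by the stepping function $e_1$. Concretely, since $e_1$ is constant-time in $\abs{z}$, the IH for $e_1$ (with $z$ \emph{not} in its dependence set) gives, for any desired output annotation $R$ on $b_2$, an input annotation on $z$ that contains zero potential beyond what is needed to reproduce $R$ plus a constant for the step cost; the crux is to show one can pick $R$ so that this induced annotation on $z$ \emph{equals} $R$ again (a fixed point of the one-step transformer), with the extra constant absorbed by the potential carried on the cells of $x$. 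This is where the notion of \emph{uniform resource annotations} advertised in the introduction enters: I would define the family of annotations on $x: L(b)$ that distribute a fixed per-cell potential uniformly, show this family is closed under the step transformer induced by $e_1$ (using Assumption~\ref{assumption:restrictions on variable sharing and pattern matching on nested lists} — the ban on nested lists in the matched variable is precisely what keeps the index algebra tractable and prevents the step transformer from mixing degrees uncontrollably, and the restriction that $\m{share}$ only acts on variables in $V$ keeps sharing from duplicating non-uniform potential), and then set the overall annotation $P$ on $x$ to be the uniform annotation whose per-cell amount is the fixed constant from $e_1$ plus whatever $e_0$ needs, with the non-$V$ variables of $\Gamma$ (which feed only $e_0$) given zero potential. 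For the arrow-type part, \textsc{(IP:Const)} and \textsc{(IP:Poly)} reduce directly to the base-type result applied to the body $e$ with $x$ excluded from, respectively included in, the dependence set, and the claim ``$\m{const}$ implies $P$ has constant potential'' is then just the zero-potential conclusion of the base-type statement specialized to the single variable $x$.
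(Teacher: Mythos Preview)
Your high-level instinct---that the displayed theorem cannot be proved by a naive induction and must be supported by a stronger statement allowing the output annotation to be prescribed---is correct, and matches what the paper does. But the architecture you sketch diverges from the paper's in a way that matters, and your handling of \textsc{(IP:Rec)} has a real gap.

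First, the paper does \emph{not} prove a single strengthened statement and then specialize it. It proves Theorem~\ref{theorem:inherently polynomial time implies the existence of a multivariate annotation} directly by induction (asserting only the existence of \emph{some} $P,Q$), and invokes the separate Theorem~\ref{theorem:inherently polynomial time implies the existence of a multivariate annotation where a user-specified amount of potential is available in the output} only inside the \textsc{(IP:Let-Base)} case when $x\in V_2$. Crucially, Theorem~\ref{theorem:inherently polynomial time implies the existence of a multivariate annotation where a user-specified amount of potential is available in the output} is stated for the \emph{cost-free} metric, not the running-time metric, and it does \emph{not} allow $Q$ to be arbitrary: $Q$ must satisfy $Q\;\m{uniform}(d,n)$ (Definition~\ref{def:uniform resource annotations in multivariate AARA}), and the conclusion gives $P\;\m{uniform}(d,n,V)$. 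So your plan to ``prove the stronger form and derive the displayed statement as the special case $Q=0$'' does not type-check: the two theorems live over different cost metrics, and the paper explicitly sums a running-time annotation (from the inductive hypothesis of Theorem~\ref{theorem:inherently polynomial time implies the existence of a multivariate annotation}) with a cost-free annotation (from Theorem~\ref{theorem:inherently polynomial time implies the existence of a multivariate annotation where a user-specified amount of potential is available in the output}) to handle let-binding. If you try to fold both into one running-time statement with arbitrary $Q$, the \textsc{(IP:Rec)} case will not close.

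Second, your \textsc{(IP:Rec)} argument is the wrong shape for the strengthened statement. You look for a fixed point of the one-step transformer---an $R$ such that the annotation induced on $z$ equals $R$ again. In the paper's proof of Theorem~\ref{theorem:inherently polynomial time implies the existence of a multivariate annotation} itself this is trivially available, because the inductive hypothesis on $e_1$ with $z\notin\{y,\mi{ys}\}$ gives $z$ \emph{zero} potential in $P_1$, and Lemma~\ref{lemma:construction of a multivariate annotation for primitive recursion} then builds $R$ directly with resource-\emph{monomorphic} recursion. But for Theorem~\ref{theorem:inherently polynomial time implies the existence of a multivariate annotation where a user-specified amount of potential is available in the output}, where a prescribed uniform $Q$ must survive, there is generally no exact fixed point: the annotation that $e_1$ induces on $z$ agrees with $Q$ only at the top degree $d$ (this is precisely what uniformity buys), and the lower-degree discrepancy $\pi^{z:b_2}_0(P_1)-Q$ is handled by \emph{resource-polymorphic} recursion together with a strong induction on $d$ (Lemma~\ref{lemma:construction of a multivariate annotation for primitive recursion with resource-polymorphic recursion}). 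Your proposal mentions uniform annotations but places them on $x:L(b)$ rather than on the output type $b_2$, and it omits the degree-decreasing mechanism entirely; without it the recursion in the proof does not terminate.
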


Given a base-type expression $e$, if $\Delta; \Sigma; \Gamma \vdash e \; \inhpoly{V}$ holds, the running time of $e$ is constant in the size of any $v \in \dom{\Gamma} \setminus V$. 
In other words, such $v$ does not contribute to the computational cost of $e$. 
Therefore, it intuitively makes sense that such $v$ contains zero potential in Theorem~\ref{theorem:inherently polynomial time implies the existence of a multivariate annotation}. 

However, Theorem~\ref{theorem:inherently polynomial time implies the existence of a multivariate annotation} cannot be immediately proved by induction on $\inhpoly{V}$, since the statement of the theorem is not strong enough for an inductive proof to go through. 
Specifically, a problem arises in the inductive case for \textsc{(IP:Let-Base)}. 
In a let-binding $\m{let} \; x = e_1 \; \m{in} \; e_2$, $e_1$ must carry sufficient potential to be transferred to $e_2$. 
However, Theorem~\ref{theorem:inherently polynomial time implies the existence of a multivariate annotation} does not allow us to specify how much potential will remain available in the output of $e$. 

Prior to remedying this issue, we first introduce the notion of \emph{uniform resource annotations} for multivariate AARA. 

\begin{definition}[Uniform resource annotations for base types in multivariate AARA]
\label{def:uniform resource annotations in multivariate AARA}
Given a base type $b \in \mathbb{B}$, let $P$ be a multivariate resource annotation of $b$. 
$P$ is said to be a \emph{uniform multivariate annotation} with degree $d \in \mathbb{N}$ and number $n \in \mathbb{N}$ if and only if the following conditions hold
\begin{enumerate}
	\item The maximum degree of $P$ is at most $d$; 
	\item $P (i) = n$ for every $i \in \mathcal{I} (b)$ such that $\degree{i} = d$.
\end{enumerate}
In words, all coefficients of base polynomials with degree $d$ (which should be the maximum degree) are equal to $n$. 
This will be denoted by a judgment $P \; \m{uniform} (d, n)$. 
\end{definition}

\begin{definition}[Uniform annotations for typing contexts in multivariate AARA]
\label{def:uniform resource annotations for typing contexts in multivariate AARA}
Consider a term $\Sigma; \Gamma \vdash e : b$ of base type. 
Suppose that $\Delta; \Sigma; \Gamma \vdash e \; \inhpoly{V}$ holds. 
Let $P$ be a multivariate annotation for the base-type typing context $\Gamma$. 
We say that $P$ is \emph{uniform} with respect to degree $d \in \mathbb{N}$, number $n \in \mathbb{N}$, and set $V$ of variables if and only if the following conditions hold: 
\begin{enumerate}
	\item For any base-type variable $v \in \dom{\Gamma} \setminus V$ of type $b_{v}$, we have
	\begin{equation*}
		\forall i \in \mathcal{I} (\{v: b_{v}\}), j \in \mathcal{I} (\Gamma \setminus \{v: b_{v}\}). \degree{i} > d \implies P (i, j) = 0. 
	\end{equation*}
	In words, for any base polynomial with a non-zero coefficient in $P$, its projection on $v$ must have degree at most $d$.
	\item For any $v \in \dom{\Gamma} \setminus V$ of base type $b_{v}$, we have
	\begin{equation*}
		\forall i \in \mathcal{I} (\{v: b_{v}\}), j \in \mathcal{I} (\Gamma \setminus \{v: b_{v}\}). (\degree{i} = d \land j \neq 0) \implies P (i, j) = 0.
	\end{equation*}
	In words, if a base polynomial has a non-zero coefficient and its projection on $v$ has degree $d$, then the base polynomial is not allowed to involve size variables of any other base-type variables from $\dom{\Gamma}$. 
	\item For any $v \in \dom{\Gamma} \setminus V$ of base type $b_{v}$, we have
	\begin{equation*}
		\forall i \in \mathcal{I} (\{v: b_{v}\}). \degree{i} = d \implies P (i, 0) = n. 
	\end{equation*}
	That is, every base polynomial whose projection on $v$ has degree $d$ has coefficient $n$. 
\end{enumerate}
If these conditions hold, we denote $P$ being a uniform annotation by a judgment $P \; \m{uniform} (d, n, V)$. 
\end{definition}

Note that Definition~\ref{def:uniform resource annotations for typing contexts in multivariate AARA} is a generalization of Definition~\ref{def:uniform resource annotations in multivariate AARA}.
$P \; \m{uniform} (d, n)$ in Definition~\ref{def:uniform resource annotations in multivariate AARA} is equivalent to $P \; \m{uniform} (d, n, \emptyset)$ in Definition~\ref{def:uniform resource annotations for typing contexts in multivariate AARA}. 

Now that we have the notion of uniform annotations in place, we next present Theorem~\ref{theorem:inherently polynomial time implies the existence of a multivariate annotation where a user-specified amount of potential is available in the output} that allows us to specify the amount of potential remaining in the output of a program. 
The major difficulty of the proof lies in establishing an invariant for primitive recursion as explained at the start of Section~\ref{sec:typable fragment of resource-aware ML}. 
We employ the notion of uniform annotations to characterize this invariant. 

\begin{restatable}[Existence of a multivariate annotation with arbitrary potential in the output]{theorem}{typablilitytheorem}
\label{theorem:inherently polynomial time implies the existence of a multivariate annotation where a user-specified amount of potential is available in the output}
Given a term $\Sigma; \Gamma \vdash e: b$ with $b \in \mathbb{B}$, suppose that $\Delta; \Sigma; \Gamma \vdash e \; \inhpoly{V}$ holds, where $V \subseteq \dom{\Gamma}$.
Also, assume Assumption~\ref{assumption:restrictions on variable sharing and pattern matching on nested lists}. 
Fix a multivariate annotation $Q$ for the base type $b$ such that $Q \; \m{uniform} (d, n)$. 
Then there exists a multivariate annotation $P$ such that $\Sigma; \Gamma; P \vdash e: \abra{b, Q}$ under the cost-free metric. 
Furthermore, $P \; \m{uniform} (d, n, V)$ holds. 

Consider an arrow-type term $\Sigma; \cdot \vdash e: b_1 \rightarrow b_2$ and assume Assumption~\ref{assumption:restrictions on variable sharing and pattern matching on nested lists}. 
Fix a multivariate annotation $Q$ for base type $b_2$ such that $Q \; \m{uniform} (d, n)$. 
Then there exists $P$ such that $\Sigma; \cdot; 0 \vdash e: \abra{b_1, P} \rightarrow \abra{b_2, Q}$ under the cost-free metric. 
Furthermore, if $\Delta; \Sigma; \cdot \vdash e \; \m{const}$ is true, $P \; \m{uniform} (d, n)$ holds. 
\end{restatable}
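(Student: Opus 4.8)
The plan is to prove both halves of Theorem~\ref{theorem:inherently polynomial time implies the existence of a multivariate annotation where a user-specified amount of potential is available in the output} simultaneously by induction on the derivation of $\Delta; \Sigma; \Gamma \vdash e \; t$ with $t \in \{\inhpoly{V}, \m{const}, \m{poly}\}$, where the uniform-annotation property is exactly the strengthening of Theorem~\ref{theorem:inherently polynomial time implies the existence of a multivariate annotation} that makes the induction close. Because we reason under the cost-free metric, no node of the derivation has to account for running time, so the proof is a backward propagation of potential: starting from the fixed output annotation $Q$ with $Q \; \m{uniform}(d, n)$, at each inference rule we pull $Q$ back through the matching multivariate AARA rule to an input annotation $P$ for the context, and then discharge the two obligations --- that $\Sigma; \Gamma; P \vdash e : \abra{b, Q}$ is derivable in cost-free multivariate AARA, and that $P \; \m{uniform}(d, n, V)$. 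For the arrow-type judgments we do the analogous construction, carrying along a resource annotation for the function symbols in $\Sigma$ and reading the body's input annotation off as the function's argument annotation.

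First I would dispatch the base, constructor, and destructor cases. For \textsc{(IP:Base)}, \textsc{(IP:Unit)}, and \textsc{(IP:Nil)} there is no computation, so $P$ copies $Q$ (or $0$) across the relevant index bijection and uniformity is immediate; \textsc{(IP:Arrow)}, \textsc{(IP:App-Const)}, and \textsc{(IP:App-Poly)} are handled by the arrow annotations carried in $\Sigma$. For the constructors \textsc{(IP:SumL)}, \textsc{(IP:SumR)}, \textsc{(IP:Pair)}, and \textsc{(IP:Cons)} one transports the coefficients of $Q$ along the bijections $\mathcal{I}(b_1 + b_2) \cong \mathcal{I}(b_1) \sqcup \mathcal{I}(b_2)$ and $\mathcal{I}(b_1 \times b_2) \cong \mathcal{I}(b_1) \times \mathcal{I}(b_2)$ and the inductive description of $\mathcal{I}(L(b))$, and then uses the degree identities $\degree{\ell \cdot i} = \degree{i}$, $\degree{\abra{i_1, i_2}} = \degree{i_1} + \degree{i_2}$, and $\degree{[i_1, \dots, i_k]} = k + \sum_j \degree{i_j}$ to check that degree-$d$ coefficients stay equal to $n$. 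For the destructors \textsc{(IP:Case-Sum)}, \textsc{(IP:Case-Prod)}, and \textsc{(IP:Case-List)} I would apply the induction hypothesis to every branch with the \emph{same} $Q$ and then recombine the per-branch annotations; the subtlety is lining up the modification $V[x \mapsto \ldots]$ against the effect on degrees of folding the pattern variables back into $x$, and this is where Assumption~\ref{assumption:restrictions on variable sharing and pattern matching on nested lists} is used, so that the list case does not manufacture deeply nested indices that violate the degree bound. For \textsc{(IP:Let-Base)} $\m{let}\;x = e_1\;\m{in}\;e_2$ I would pull $Q$ back through $e_2$, observe that the resulting annotation of $x$ can be taken uniform (relaxing it if need be), feed it as the output annotation when pulling back through $e_1$, and then assemble $P$ according to the two cases defining $V_3$. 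For \textsc{(IP:Share-Base)} I would apply the contraction/sharing operation on multivariate annotations, noting that Assumption~\ref{assumption:restrictions on variable sharing and pattern matching on nested lists} forces the shared variable into $V_s$, so uniformity constrains it nowhere and the sharing operation applies unobstructed.

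The heart of the argument --- and the step I expect to be the main obstacle, as the paper itself flags at the start of Section~\ref{sec:typable fragment of resource-aware ML} --- is \textsc{(IP:Rec)}. Given $e = \m{rec}\;x\;\{[\,] \hookrightarrow e_0 \mid (y :: \mi{ys})\;\m{with}\;z \hookrightarrow e_1\}$ and the fixed $Q$ with $Q \; \m{uniform}(d, n)$ for the output type $b_2$, I would apply the induction hypothesis to $e_1$ with output annotation $Q$, obtaining $P_1 \; \m{uniform}(d, n, \{y, \mi{ys}\})$ on $\{y : b, \mi{ys} : L(b), z : b_2\}$. Since $z \notin \{y, \mi{ys}\}$, the uniformity conditions force the projection of $P_1$ onto $z$ to have degree at most $d$ with every degree-$d$ coefficient equal to $n$; after a relaxation step that cost-free AARA can absorb, this projection can be taken to be exactly $Q$. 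That equality --- the annotation \emph{demanded} of the recursive result $z$ coincides with the annotation $Q$ \emph{delivered} by the whole recursion --- is the loop invariant, and it is self-reproducing precisely because $Q$ is uniform. I would then apply the induction hypothesis to $e_0$ with output $Q$ to get $P_0 \; \m{uniform}(d, n, V)$, and instantiate multivariate AARA's recursion rule with $P_0$, $P_1$, and the shifted family of annotations on $x : L(b)$ determined by $Q$; the side condition of that rule is a shift-stability requirement on this family, and verifying it reduces to degree bookkeeping on $\mathcal{I}(L(b))$ that goes through because $Q$ is uniform and, once more, because $x$ is a non-nested list by Assumption~\ref{assumption:restrictions on variable sharing and pattern matching on nested lists}. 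The $P$ produced for $\Gamma, x : L(b)$ is then uniform $(d, n, V \cup \{x\})$; note that $x \in V \cup \{x\}$, so $P$ may --- and indeed must --- put potential on $x$ whose degree in $\abs{x}$ is unbounded, which is legitimate since the running time of $e$ is genuinely polynomial in $\abs{x}$.

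Finally, the arrow-type rules \textsc{(IP:Const)} and \textsc{(IP:Poly)} follow from the base-type part: for a $\lambda$-abstraction $\lam (x : b). e$ we read the input annotation obtained for the body $e$ as the argument annotation $P$, set the latent potential to $0$ as the statement requires, and observe that when the judgment is $\m{const}$ the body was typed with $x \notin V$ (namely $V = \emptyset$), which by Definition~\ref{def:uniform resource annotations for typing contexts in multivariate AARA} still leaves $P$ uniform $(d, n)$ in the weaker sense claimed. The leftover work --- checking that each invocation of an AARA rule is a genuine instance and that the scattered relaxation steps are admissible under the cost-free metric --- is routine but is where most of the length of the formal proof in Appendix~\ref{sec:proof of the typability theorem} is spent.
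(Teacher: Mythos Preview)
Your overall scaffolding matches the paper's: structural induction on the \textsf{inhpoly} derivation, cost-free pull-back of $Q$ through the matching AARA rule, and use of Assumption~\ref{assumption:restrictions on variable sharing and pattern matching on nested lists} in \textsc{(IP:Case-List)} and \textsc{(IP:Share-Base)}. The non-recursion cases are essentially as the paper treats them.

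The genuine gap is \textsc{(IP:Rec)}. Your loop invariant --- that after applying the induction hypothesis to $e_1$ the projection of $P_1$ onto $z$ ``can be taken to be exactly $Q$'' via a relaxation step --- does not hold. Uniformity of $P_1$ with parameters $(d,n,\{y,\mi{ys}\})$ pins down the degree-$d$ coefficients of $\pi^{z:b_2}_{0}(P_1)$ to $n$, but places no constraint on coefficients of degree $< d$; and \textsc{(M:Sup)}/\textsc{(M:Sub)} let you \emph{increase} the input or \emph{decrease} the output, not force the input requirement on $z$ down to $Q$. So in general $\pi^{z:b_2}_{0}(P_1) \neq Q$ and no cost-free relaxation closes the gap. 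You also do not address the mixed terms $\pi^{z:b_2}_{i}(P_1)$ for $i \neq 0 \in \mathcal{I}(\{y:b,\mi{ys}:L(b)\})$, which Definition~\ref{def:uniform resource annotations for typing contexts in multivariate AARA} does \emph{not} force to zero (it only kills them when the $z$-part has degree exactly $d$).

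The paper's fix is different in kind: it observes that both $\pi^{z:b_2}_{0}(P_1)$ and $Q$ are $\m{uniform}(d,n)$, so their difference $\pi^{z:b_2}_{0}(P_1) - Q$ has degree strictly below $d$, and likewise each $\pi^{z:b_2}_{i}(P_1)$ with $i\neq 0$ has degree below $d$. It then invokes \emph{resource-polymorphic recursion}: the recursive call $f$ is typed not only at $Q$ but also, under the cost-free metric, at these lower-degree output annotations, yielding auxiliary input annotations $P_{2,i}$. The existence of these $P_{2,i}$ is justified by a (deliberately suppressed) \emph{strong induction on $d$} layered on top of the structural induction; the pieces $P_1$ and $\{P_{2,i}\}$ are then assembled into the final $P$ via Lemma~\ref{lemma:construction of a multivariate annotation for primitive recursion with resource-polymorphic recursion}. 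Without this polymorphic-recursion-plus-degree-induction mechanism, the recursion case does not close.
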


The cost-free metric in Theorem~\ref{theorem:inherently polynomial time implies the existence of a multivariate annotation where a user-specified amount of potential is available in the output} refers to the cost metric in which all evaluation costs are zero. 
For instance, if $f: L (\mathbf{1}) \to L (\mathbf{1})$ is a function that doubles the size of an input list, it can be typed as $f: \abra{L^{2} (\mathbf{1}), 0} \to \abra{L^{1} (\mathbf{1}), 0}$ under the cost-free metric\footnote{For readability, I use univariate AARA instead of multivariate AARA to denote resource-annotated types, although Theorem~\ref{theorem:inherently polynomial time implies the existence of a multivariate annotation where a user-specified amount of potential is available in the output} concerns multivariate AARA}.
That is, the potential stored in each element is halved because the length of the list is doubled. 
If the cost metric is the running time, we instead have $f: \abra{L^{2 + c} (\mathbf{1}), 0} \to \abra{L^{1} (\mathbf{1}), 0}$, where $c$ is the cost of processing each list element. 
The type system of multivariate AARA under the cost-free metric is provided in Appendix~\ref{sec:type system of multivariate AARA}. 
Theorem~\ref{theorem:inherently polynomial time implies the existence of a multivariate annotation where a user-specified amount of potential is available in the output} uses the cost-free metric (as opposed to the running time) since Theorem~\ref{theorem:inherently polynomial time implies the existence of a multivariate annotation} has already considers the cost of evaluating programs. 

Theorem~\ref{theorem:inherently polynomial time implies the existence of a multivariate annotation where a user-specified amount of potential is available in the output} assumes Assumption~\ref{assumption:restrictions on variable sharing and pattern matching on nested lists} as the proof of the theorem poses technical challenges in variable sharing and pattern matching on nested lists.
We will now look at these challenges more closely. 

\paragraph*{Variable Sharing}

Theorem~\ref{theorem:inherently polynomial time implies the existence of a multivariate annotation where a user-specified amount of potential is available in the output} is false if we impose no restrictions on variable sharing. 
To illustrate this, consider $e$ defined as 
\begin{equation} \label{eq:pathological example of variable sharing}
	e := \m{rec} \; x \; \{[\,] \hookrightarrow \abra{\ell, \ell} \mid (y::\uscore) \; \m{with} \; z \hookrightarrow e_1 \}, 
\end{equation}
where the stepping function is $e_1 \equiv \m{case} \; z \; \{\abra{z_1, z_2} \hookrightarrow \m{share} \; z_1 \; \m{as} \; z_{1, 1}, z_{1, 2} \; \m{in} \; \abra{z_{1,1}, z_{1,2}} \}$. 
The typing context of $e$ in \eqref{eq:pathological example of variable sharing} is $\Gamma = \{x: L (\mathbf{1}), \ell: L (\mathbf{1})\}$. 
The stepping function satisfies $e_1 \; \inhpoly{\{y, \mi{ys}\}}$. 
Hence, \eqref{eq:pathological example of variable sharing} is indeed inherently polynomial time. 
However, inside $e_1$, we have $\m{share} \; z_1$, which Assumption~\ref{assumption:restrictions on variable sharing and pattern matching on nested lists} forbids. 

Let $\abra{\ell_1, \ell_2}$ be the output of \eqref{eq:pathological example of variable sharing}. 
Suppose that both $\ell_1$ and $\ell_2$ are to be annotated with $L^{1} (\mathbf{1})$. 
To type \eqref{eq:pathological example of variable sharing} under the cost-free metric such that $\ell_1, \ell_2 : L^{1} (\mathbf{1})$, the typing context $\Gamma$ of $e$ needs to be annotated with $2 \abs{\ell} + \abs{x} \cdot \abs{\ell}$, where $\abs{\cdot}$ denotes the size of an input list. 
Observe that we need to use multivariate AARA rather than univariate AARA to type \eqref{eq:pathological example of variable sharing}. 

In the notation\footnote{Although we are concerned with multivariate AARA, I will use univariate AARA to denote the resource annotation of $e_1$ because it happens to be describable by univariate AARA and it is easier to read.} of univariate AARA, the stepping function of \eqref{eq:pathological example of variable sharing} can be typed as 
\begin{equation*}
	y: \mathbf{1}, \mi{ys}: L^{0} (\mathbf{1}), z: L^2 (\mathbf{1}) \times L^0 (\mathbf{1}); 0 \vdash e_1 : \abra{L^1 (\mathbf{1}) \times L^1 (\mathbf{1}), 0}. 
\end{equation*}
Here, the maximum degree is $d = 1$. 
It is impossible for both $z$ and $e_1$ to have the same coefficient for all base polynomials of degree $d = 1$. 
Therefore, Theorem~\ref{theorem:inherently polynomial time implies the existence of a multivariate annotation where a user-specified amount of potential is available in the output} is false for \eqref{eq:pathological example of variable sharing}. 
To accommodate the multivariate annotation of \eqref{eq:pathological example of variable sharing}, it is necessary to relax the notion of uniform resource annotations, but this will make the typability proof more challenging. 

\paragraph*{Nested Lists in Pattern Matching}

Theorem~\ref{theorem:inherently polynomial time implies the existence of a multivariate annotation where a user-specified amount of potential is available in the output} is false for pattern matching on nested lists. 
For example, consider $e$ defined as
\begin{equation*}
	e := \m{case} \; x \; \{[\,] \hookrightarrow \uscore \mid (y::\mi{ys}) \hookrightarrow \abra{y, \mi{ys} } \}, 
\end{equation*}
where the first branch is unimportant in the present discussion. 
The typing context of $e$ is $\Gamma = \{x: L (L (\mathbf{1})) \}$. 
Assume that we consider multivariate annotations of degree up to $d = 2$. 
Let $P$ denote a multivariate annotation of $\Gamma$. 
The multivariate annotation for context $\{y: L (\mathbf{1}), \mi{ys}: L (L (\mathbf{1})) \}$ as a result of pattern matching on $x: L (L (\mathbf{1}))$ is given by the \emph{additive shift} of $P$, denoted by $\lhd (P)$.
It is defined as
\begin{equation} \label{eq:definition of multivariate additive shift in the discussion on a technical difficulty posed by nested lists in pattern matching}
	\lhd (P) (i, j) :=
	\begin{cases}
	P (0_{L (\mathbf{1})}:: j) + P (j) & \text{if } i = 0_{L (\mathbf{1})}; \\
	P (i::j) & \text{otherwise},
	\end{cases}
\end{equation}
where $i \in \mathcal{I} (\{y: L (\mathbf{1} )\})$ and $j \in \mathcal{I} (\{\mi{ys}: L (L (\mathbf{1}))\})$. 
The problem is that the base polynomial $(i, j)$ on the left hand side of \eqref{eq:definition of multivariate additive shift in the discussion on a technical difficulty posed by nested lists in pattern matching} has degree $\degree{i} + \degree{j}$, while $(i::j)$ in the second branch of the right hand side has degree $1 + \degree{i} + \degree{j}$. 
As a consequence, if $1 + \degree{i} + \degree{j} = 2$, $P (i::j)$ is required to be equal to $n$ because Theorem~\ref{theorem:inherently polynomial time implies the existence of a multivariate annotation where a user-specified amount of potential is available in the output} requires $P \; \m{uniform} (d, n)$ to be true. 
This means $\lhd (P) (i, j) = n$ must hold as well.
But $\lhd (P) (i, j) = n$ is not necessarily the case, since Theorem~\ref{theorem:inherently polynomial time implies the existence of a multivariate annotation where a user-specified amount of potential is available in the output} imposes no requirements on the coefficients of lower-degree base polynomials. 


\section{Conclusion}

In this work, we have shown that polynomial-time Turing machines can be embedded in a typable fragment of RaML in such a way that the semantics and worst-case cost bounds are preserved. 
Moreover, we have proved that if a first-order program $P$ satisfies the following conditions, it is guaranteed to be typable in multivariate polynomial AARA:
\begin{enumerate}
	\item $P$ uses primitive recursion instead of general recursion;
	\item $P$ is (axiomatically) inherently polynomial-time; 
	\item No variable sharing is applied to variable $v$, where $P$'s running time is (axiomatically) constant in $v$; 
	\item No pattern matching is applied to a nested list. 
\end{enumerate}

We have neither found a counterexample to the full typability theorem (i.e.~Theorem~\ref{theorem:inherently polynomial time implies the existence of a multivariate annotation} without Assumption~\ref{assumption:restrictions on variable sharing and pattern matching on nested lists}) nor proved it. 
As future work, we are looking to investigate how to prove or disprove the full typability theorem. 
To lift the restriction on nested lists, we expect that it suffices to modify the statement of the theorem such that we can keep track of the largest coefficient. 
However, lifting the restriction on variable sharing will be more challenging because it certainly requires a drastically different inductive hypothesis.



\bibliography{references}

\appendix

\ifLongVersion

\section{Supplementary Results for the Embedding of Polynomial-Time Turing Machines in AARA}
\label{sec:upplementary results for the embedding of polynomial-time Turing machines in AARA}

Throughout Appendix~\ref{sec:upplementary results for the embedding of polynomial-time Turing machines in AARA}, we will write $b_1 \xrightarrow{q_1/q_2} b_2$ for resource-annotated arrow types, instead of $\abra{b_1, q_1} \to \abra{b_2, q_2}$.
Here, $b_i$ is a resource-annotated base type and $q_{i} \in \mathbb{Q}_{\geq 0}$. 

\subsection{Standard-Form Polynomials in Univariate AARA}

In conventional mathematics taught at school, (univariate) polynomials are expressed in the standard form of $a_n x^{n} + \cdots + a_0 x^{0}$, where $x$ is a variable and $a_i$'s are coefficients. 
On the other hand, in univariate AARA, polynomials are encoded as linear combinations of binomial coefficients $\binom{x}{a}$, where $x$ is a variable and $a$ is a constant. 
The following proposition establishes that these two representations of univariate polynomials are interchangeable. 

\begin{lemma} 
\label{lemma:polynomial can be expressed as a linear combination of binomial coefficients}
For any $d \in \mathbb{N}$, the polynomial function $n^{d}$ can be expressed as $\sum_{i = 0}^{d} q_{i} \binom{n}{i}$, where $q_i \in \mathbb{Q}_{\geq 0}$ for all $0 \leq i \leq d$. 
\end{lemma}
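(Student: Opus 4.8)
The plan is to prove the claim by induction on $d$, exploiting the fact that the binomial coefficients $\binom{n}{0}, \binom{n}{1}, \ldots, \binom{n}{d}$ form a basis for the space of polynomials of degree at most $d$. First I would recall the standard combinatorial identity $n^d = \sum_{i=0}^{d} S(d,i) \, i! \, \binom{n}{i}$, where $S(d,i)$ are the Stirling numbers of the second kind; these count the number of surjections from a $d$-element set onto an $i$-element set divided by $i!$, and in particular $S(d,i) \geq 0$. Setting $q_i := S(d,i)\, i!$, we immediately get $q_i \in \mathbb{N} \subseteq \mathbb{Q}_{\geq 0}$, which is exactly what the lemma demands. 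So in fact the lemma is just an instance of a classical identity, and the non-negativity of the coefficients is automatic.

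If one prefers a self-contained argument avoiding named identities, I would instead proceed by induction on $d$. The base case $d = 0$ is $n^0 = 1 = \binom{n}{0}$, so $q_0 = 1$. For the inductive step, assume $n^d = \sum_{i=0}^{d} q_i \binom{n}{i}$ with all $q_i \geq 0$. Using the absorption-style identity $n \binom{n}{i} = i\binom{n}{i} + (i+1)\binom{n}{i+1}$ (which follows directly from expanding $\binom{n}{i}$ and $\binom{n}{i+1}$ in factorials, or from $n\binom{n}{i} = (i+1)\binom{n+1}{i+1} - (i+1)\binom{n}{i+1}$ combined with Pascal's rule, but the first form is cleanest), I would compute
\[
n^{d+1} = n \cdot n^d = \sum_{i=0}^{d} q_i \, n \binom{n}{i} = \sum_{i=0}^{d} q_i \left( i \binom{n}{i} + (i+1)\binom{n}{i+1} \right),
\]
and then collect terms by the index of the binomial coefficient. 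Each new coefficient is a sum of products of old coefficients with non-negative integers $i$ and $i+1$, hence stays in $\mathbb{Q}_{\geq 0}$ (indeed in $\mathbb{N}$), and the highest index appearing is $d+1$, so the degree bound is preserved. This completes the induction.

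The only mild subtlety — and the step I would be most careful about — is verifying the helper identity $n\binom{n}{i} = i\binom{n}{i} + (i+1)\binom{n}{i+1}$ and making sure the reindexing in the sum is handled correctly at the boundary $i = d$ (the term $q_d (d+1)\binom{n}{d+1}$ is what supplies the new top-degree coefficient, and there is no $\binom{n}{d+2}$ term, so the degree genuinely does not exceed $d+1$). Everything else is routine bookkeeping. I expect no real obstacle here; the lemma is elementary, and its role in the paper is merely to justify passing freely between the schoolbook representation of polynomials and the binomial-coefficient representation used internally by univariate AARA.
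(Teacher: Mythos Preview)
Your proposal is correct and takes essentially the same approach as the paper: both proceed by induction on $d$, and both rely on the identity $n\binom{n}{i} = i\binom{n}{i} + (i+1)\binom{n}{i+1}$ (the paper states it in the equivalent falling-factorial form $n \cdot n(n-1)\cdots(n-i+1) = i \cdot n(n-1)\cdots(n-i+1) + n(n-1)\cdots(n-i)$ and converts to binomials at the end). Your additional remark connecting the coefficients to Stirling numbers of the second kind is a nice bonus that the paper does not make explicit.
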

\begin{proof}
The proof goes by induction on $d$. 
The claim clearly holds when $d = 0$.

For the inductive case, by way of example, we will first illustrate how to prove the lemma for $d = 4$, given that the claim holds when $d = 3$. 
For simplicity, we will use the basis of $i! \cdot \binom{n}{i}$ rather than $\binom{n}{i}$ in this example. 
Suppose we have 
\begin{equation*}
	n^3 = p_1 n + p_2 n (n-1) + p_3 n (n-1) (n-2).
\end{equation*}
This yields
\begin{align*}
	n^4 & = n \cdot \left( p_1 n + p_2 n (n-1) + p_3 n (n-1) (n-2) \right) \\
	& = p_1 n^2 + p_2 n^2 (n-1) + p_3 n^2 (n-1) (n-2) \\
	& = p_1 n ((n-1) + 1) + p_2 n (n-1) ((n-2) + 2) + p_3 n (n-1) (n-2) ((n-3) + 3) \\
	& = p_1 n + (p_1 + 2 p_2) n (n-1) + (p_2 + 3 p_3) n (n-1) (n-2) + p_3 n (n-1) (n-2) (n-3),
\end{align*}
where all coefficients are non-negative, provided that each $p_i$ is non-negative as well. 
Generalizing this, we learn that $n^{k} = \sum_{i = 0}^{k} p_i \cdot n (n-1) \cdots (n - i + 1)$ gives
\begin{equation*}
	n^{k+1} = p_k \cdot n (n-1) \cdots (n-k) + \sum_{i = 1}^{k} (p_{i-1} + i \cdot p_{i}) \cdot n (n-1) \cdots (n-i+1).  
\end{equation*}
If the coefficients $p_i$ for $n^{k}$ are non-negative, so are the coefficients for $n^{k+1}$.

Finally, to switch from the new basis to the original basis of binomial coefficients, we use the identity $p_i = \frac{q_i}{i !}$. 
This gives
\begin{equation} \label{eq:coefficients of n to the power of k+1}
	\begin{split}
	n^{k+1} & = \frac{q_k}{k !} \cdot n (n-1) \cdots (n-k) + \sum_{i = 1}^{k} \left( \frac{q_{i-1}}{(i-1)!} + i \cdot \frac{q_{i}}{i!} \right) \cdot n (n-1) \cdots (n-i+1) \\
	& = q_k k \cdot \binom{n}{k+1} + \sum_{i = 1}^{k} i (q_{i-1} + q_{i}) \binom{n}{i}. 
	\end{split}
\end{equation}
This concludes the proof. 
\end{proof}

\subsection{Generating Lists of Polynomial Size}

We will explain how to generate, in Resource-Aware ML (RaML), a list of polynomial size $p (n)$ with constant potential stored in each cell. 
To this end, it suffices to show how to generate a list of size $\binom{n}{d}$ for a fixed $d \in \mathbb{N}$ because $n^{k}$ for any $k \in \mathbb{N}$ can be expressed as a non-negative linear combination of $\binom{n}{0}, \ldots, \binom{n}{k}$. 
This has formally been established by Lemma~\ref{lemma:polynomial can be expressed as a linear combination of binomial coefficients}.

Without loss of generality, we assume that the output is a list of blank symbols (as required in line~\ref{algline:l_2 is generated} of Algorithm~\ref{alg:target RaML program}). 
Let $\mi{amp}_{d} : L(\text{Sym}) \rightarrow L(\text{Sym}) \rightarrow L(\text{Sym})$ denote a RaML function that (i) generates a list of size $\binom{n}{d}$ in which each cell stores one unit of potential and (ii) appends it to an accumulator, which is assumed to already contain one unit of potential in each cell. 
$n$ is the size of the first input to $\mi{amp}_{d}$, and the accumulator is the second input. 
Here, $\mi{amp}$ stands for amplification.

$\mi{amp}_{0}$ is defined as 
\begin{equation} \label{eq:definition of amp_0}
	\m{fun} \; \mi{amp}_{0} \; w \; \mi{acc}= \sqcup :: \mi{acc}. 
\end{equation}
For $i \geq 0$, $\mi{amp}_{i+1}$ is inductively defined as
\begin{equation} \label{eq:definition of amp_d}
	\begin{split}
	\m{fun} \; \mi{amp}_{i+1} \; w \; \mi{acc} = \m{case} \; w \; & \{[\,] \hookrightarrow \mi{acc} \\
	& \mid x :: \mi{xs} \hookrightarrow \m{share} \; \mi{xs} \; \m{as} \; \mi{xs}_1, \mi{xs}_2 \; \m{in} \\
	& \qquad \qquad \quad \m{let} \; \uscore = \m{tick} \; 1 \; \m{in} \\
	& \qquad \qquad \quad \m{let} \; \mi{acc}' = \mi{amp}_{i} \; \mi{xs}_1 \; \mi{acc} \; \m{in} \\ 
	& \qquad \qquad \quad \mi{amp}_{i+1} \; \mi{xs}_2 \; \mi{acc}' \}. 
	\end{split}
\end{equation}
We use $\m{tick}$ to account for only the cost of function application but not costs of other operations such as the list constructor. 
This is why \eqref{eq:definition of amp_0} does not generate any costs. 
Although the syntax of RaML presented in Section~\ref{sec:resource-aware ML} only permits uncurried functions, we will use curried functions throughout Appendix~\ref{sec:upplementary results for the embedding of polynomial-time Turing machines in AARA} without loss of generality. 

This implementation is analogous to the example given in Section~7.1 of \cite{Hoffmann2010}, where given a list $\ell$, all subsets of $\ell$ with a fixed size are computed. 
However, our implementation in \eqref{eq:definition of amp_d} differs from the implementation in \cite{Hoffmann2010} in that ours uses an accumulator, while the one in \cite{Hoffmann2010} explicitly uses the $\mi{append}$ function. 
The use of an accumulator allows us to embed both generation of elements and their concatenation in the implementation of $\mi{amp}_{d}$, saving us the need to explicitly reason about the computational cost of $\mi{append}$. 
Hence, using an accumulator can simplify the cost analysis of $\mi{amp}_{d}$, although admittedly $\mi{amp}_{d}$ with an accumulator is not the most natural implementation from the perspective of programmers. 

We consider the generation of lists of size $\binom{n}{d}$ instead of $n^{d}$ for a similar reason.
Due to the identity $\binom{n+1}{d+1} = \binom{n}{d+1} + \binom{n}{d}$, we do not need to appeal to any auxiliary function in \eqref{eq:definition of amp_d}. 
On the other hand, to recursively create a list of size $n^{d} = n \cdot n^{d-1}$, one needs to use an iterator function, and this will complicate the analysis of total costs since we will need to account for the cost of invoking the iterator function. 

The next proposition establishes the correctness of the above implementation (i.e.~\eqref{eq:definition of amp_0} and \eqref{eq:definition of amp_d}) and provides an upper bound on the evaluation cost. 

\begin{lemma}[Correctness of $\mi{amp}_{d}$]
\label{lemma:correctness of the implementation of amp_d}
The computation of $\mi{amp}_{d} \; w \; \mi{acc}$ produces a list of size $\binom{\abs{w}}{d} + \abs{\mi{acc}}$, where $\binom{n}{k} = 0$ for $n < k$. 
Also, assuming that each cell in the output list is required to contain one unit of potential, the cost of evaluating $\mi{amp}_{d} \; w \; \mi{acc}$ is bounded above by $2 \abs{w}^{d}$.  
\end{lemma}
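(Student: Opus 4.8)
The plan is to prove the two claims separately, in each case by an induction whose shape mirrors the nested recursion of \eqref{eq:definition of amp_0}--\eqref{eq:definition of amp_d}: since $\mi{amp}_{i+1}$ calls both $\mi{amp}_{i}$ (smaller first index) and $\mi{amp}_{i+1}$ (shorter first argument), the natural measure is lexicographic on $(d, \abs{w})$. For the \textbf{output size}, the base case $d = 0$ is immediate from \eqref{eq:definition of amp_0}, since $\sqcup :: \mi{acc}$ has length $1 + \abs{\mi{acc}} = \binom{\abs{w}}{0} + \abs{\mi{acc}}$. For $\mi{amp}_{i+1}$ I do an inner induction on $\abs{w}$: the nil branch gives length $\abs{\mi{acc}} = \binom{0}{i+1} + \abs{\mi{acc}}$; in the cons branch $w = x :: \mi{xs}$ the $\m{share}$ produces $\mi{xs}_1, \mi{xs}_2$ both of length $\abs{w}-1$, the call $\mi{amp}_{i}\;\mi{xs}_1\;\mi{acc}$ yields length $\binom{\abs{w}-1}{i} + \abs{\mi{acc}}$ by the outer hypothesis, and then $\mi{amp}_{i+1}\;\mi{xs}_2\;\mi{acc}'$ yields length $\binom{\abs{w}-1}{i+1} + \binom{\abs{w}-1}{i} + \abs{\mi{acc}}$ by the inner hypothesis; Pascal's identity closes the case, and the convention $\binom{n}{k} = 0$ for $n < k$ makes the short-length cases uniform.

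For the \textbf{cost bound}, note that the only cost-incurring construct is $\m{tick}\;1$, which fires exactly once per cons-branch of $\mi{amp}_{i+1}$ and not at all in $\mi{amp}_0$. Hence the number of ticks $t_d(n)$ (with $n = \abs{w}$) satisfies $t_0(n) = 0$ and $t_{i+1}(n) = n + \sum_{j=0}^{n-1} t_i(j)$: the $\mi{amp}_{i+1}$-recursion walks its first argument, emitting $n$ ticks and spawning, for each prefix length $j \in \{0,\dots,n-1\}$, one call $\mi{amp}_{i}\;\mi{xs}_1\;\mi{acc}$ with $\abs{\mi{xs}_1} = j$. By the output-size claim such a subcall appends $\binom{j}{i}$ fresh cells, while $\mi{amp}_{i+1}$ itself appends none; since each fresh output cell must carry one unit of potential and the accumulator already supplies potential for its own cells, the total resource demand $c_d(n) := t_d(n) + \binom{n}{d}$ obeys, after cancelling binomial terms via the hockey-stick identity $\sum_{j=0}^{n-1}\binom{j}{i} = \binom{n}{i+1}$, the clean recurrence $c_0(n) = 1$ and $c_{i+1}(n) = n + \sum_{j=0}^{n-1} c_i(j)$. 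An induction on $d$ then gives $c_d(n) \le 2n^d$: the cases $d = 0$ ($c_0(n) = 1 \le 2$) and $d = 1$ ($c_1(n) = 2n$) are direct, and for $d \ge 1$ the step is $c_{d+1}(n) = n + \sum_{j=0}^{n-1} c_d(j) \le n + 2\sum_{j=0}^{n-1} j^d \le n + \frac{2 n^{d+1}}{d+1} \le 2 n^{d+1}$, using $\sum_{j=0}^{n-1} j^d \le \int_0^n x^d\,dx$ and then $\frac{2}{d+1} \le 1$ together with $n \le n^{d+1}$. Concretely this amounts to typing $\mi{amp}_d : L^{\vec{q}}(\text{Sym}) \to L^{1}(\text{Sym}) \to L^{1}(\text{Sym})$ in univariate AARA with $\vec{q} = (1,\dots,1,2)$ (ones in positions $1,\dots,d-1$), so that $\phi(\abs{w}, \vec{q}) = c_d(\abs{w}) \le 2\abs{w}^d$, with the stated bound following by soundness (Theorem~\ref{theorem:soundness of univariate AARA}).

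The main obstacle is bookkeeping the two interleaved recursions correctly — establishing that $\mi{amp}_{i+1}$'s own recursion is a pure accumulator pass (one tick per step, zero fresh cells), so all degree-$i$ work is concentrated in the $\mi{amp}_i$ subcalls whose first-argument lengths enumerate $0,\dots,n-1$, and checking that the $\m{share}$ of $\mi{xs}$ only splits the length-$(n{-}1)$ spine between the two calls rather than duplicating potential. Once one observes that the $\binom{n}{i+1}$ fresh-cell count is exactly the hockey-stick sum of the $\binom{j}{i}$ contributions — which is what collapses $t_{i+1}(n)$ and the output term into the single recurrence $c_{i+1}(n) = n + \sum_{j<n} c_i(j)$ — the remaining arithmetic to reach $2\abs{w}^d$ is routine.
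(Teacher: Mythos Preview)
Your output-size argument is the same as the paper's: nested induction on $(d, |w|)$ closed by Pascal's identity. For the cost bound, however, you take a genuinely different route. The paper proceeds by the same nested induction directly on the cost, writing $\text{cost}(\mi{amp}_{k+1}\,w\,\mi{acc}) = 1 + \text{cost}(\mi{amp}_k\,\mi{xs}\,\mi{acc}) + \text{cost}(\mi{amp}_{k+1}\,\mi{xs}\,\mi{acc}')$, applying the inductive bounds $2|xs|^k$ and $2|xs|^{k+1}$, and then closing with the elementary inequality $1 + 2m^k \le 2(m{+}1)^k$ (for $k\ge 1$) together with $2(m{+}1)^k + 2m(m{+}1)^k = 2(m{+}1)^{k+1}$; the case $k=0$ is handled separately using the exact value $\text{cost}(\mi{amp}_0) = 1$. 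You instead unroll the inner recursion to the summatory form $t_{i+1}(n) = n + \sum_{j<n} t_i(j)$, fold in the fresh-cell potential via the hockey-stick identity to get the single recurrence $c_{i+1}(n) = n + \sum_{j<n} c_i(j)$, and finish with the integral estimate $\sum_{j<n} j^d \le n^{d+1}/(d{+}1)$. Both arguments are correct; the paper's is shorter and stays closer to the operational recursion, while yours makes the exact cost $c_d(n) = \sum_{i=1}^{d-1}\binom{n}{i} + 2\binom{n}{d}$ visible as a byproduct.

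One caveat: your final sentence, which recasts the argument as an AARA typing with $\vec q = (1,\dots,1,2)$ and then appeals to soundness (Theorem~\ref{theorem:soundness of univariate AARA}), is a non sequitur here. You have not established that AARA actually derives that annotation for $\mi{amp}_d$ --- that is the content of the separate Lemma~\ref{lemma:resource annotation of amp_d}, proved by checking the typing rules --- so soundness cannot be invoked. Fortunately your combinatorial argument is already complete before that sentence; just drop the appeal to soundness.
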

\begin{proof}
The proof goes by nested induction: outer induction on $d$ and inner induction on $\abs{w}$.
For the base case where $d = 0$, the output size is indeed $\binom{n}{0} + \abs{\mi{acc}} = 1 + \abs{\mi{acc}}$. 
With regard to the evaluation cost, we need one unit of potential to execute \eqref{eq:definition of amp_0} since the new cell requires one unit of potential. 
As $2 n^{0} = 2$ for every $n \in \mathbb{N}$, $2 n^{d}$ is a correct upper bound in this case. 
Here, we adopt the convention of $0^{0} = 1$. 

For the inductive case, suppose that the claim holds when $d = k$ for some $k \geq 0$. 
The proof proceeds by (inner) induction on $\abs{w}$.
When $\abs{w} = 0$, we have
\begin{equation*}
	\abs{\mi{amp}_{k+1} \; w \; \mi{acc}} = 0 + \abs{\mi{acc}}
\end{equation*}  
according to the first branch of pattern matching in \eqref{eq:definition of amp_d}.
Also, the evaluation cost is 0.
Hence, the claim holds when $w$ is empty.

Conversely, if $w = x::\mi{xs}$, we have
\begin{alignat*}{2}
	\abs{\mi{amp}_{k+1} \; w \; \mi{acc}} & = \abs{\mi{amp}_{k+1} \; \mi{xs} \; (\mi{amp}_{k} \; \mi{xs} \; \mi{acc})} &\qquad& \text{by \eqref{eq:definition of amp_d}} \\
	& = \binom{\abs{xs}}{k+1} + \binom{\abs{xs}}{k} + \abs{\mi{acc}} && \text{by the inductive hypothesis} \\
	& = \binom{\abs{w} - 1}{k+1} + \binom{\abs{w} - 1}{k} + \abs{\mi{acc}} && \text{because }  w = x:: xs \\
	& = \binom{\abs{w}}{k+1} + \abs{\mi{acc}}. 
\end{alignat*}

Regarding the evaluation cost, we write $\text{cost} (\mi{amp}_{k+1} \; w \; \mi{acc})$ for the evaluation cost of $\mi{amp}_{k+1} \; w \; \mi{acc}$. 
If $k \geq 1$, we have
\begin{alignat*}{2}
	\text{cost} (\mi{amp}_{k+1} \; w \; \mi{acc}) & = 1 + \text{cost} (\mi{amp}_{k} \; \mi{xs} \; \mi{acc}) + \text{cost} (\mi{amp}_{k+1} \; \mi{xs} \; \mi{acc}') &\qquad& \text{by \eqref{eq:definition of amp_d}} \\
	& \leq 1 + 2 \abs{\mi{xs}}^{k} + 2 \abs{\mi{xs}}^{k+1} && \text{by the inductive hypothesis} \\
	& \leq 2 (1 + \abs{\mi{xs}})^{k} + 2 \abs{\mi{xs}}^{k+1} && \text{because } k \geq 1 \\
	& \leq 2 (1 + \abs{\mi{xs}})^{k} + 2 \abs{\mi{xs}} \cdot (1+\abs{\mi{xs}})^{k} \\
	& = 2 (1+\abs{\mi{xs}})^{k+1} \\
	& = 2 (\abs{w})^{k+1}. 
\end{alignat*}
If $k = 0$, we have
\begin{alignat*}{2}
	\text{cost} (\mi{amp}_{1} \; \mi{xs} \; \mi{acc}) & = 1 + \text{cost} (\mi{amp}_{0} \; \mi{xs} \; \mi{acc}) + \text{cost} (\mi{amp}_{1} \; w \; \mi{acc}') &\qquad& \text{by \eqref{eq:definition of amp_d}} \\
	& \leq 1 + 1 + 2 \abs{\mi{xs}} && \text{by the inductive hypothesis} \\
	& = 2 (\abs{w}),
\end{alignat*}
where in the second line, we use the tight bound $\text{cost} (\mi{amp}_{0} \; w \; \mi{acc}) = 1$. 
Therefore, the claim is true regardless of whether $k = 0$ or $k \geq 1$.
This concludes the proof. 
\end{proof}

$2 n^{d}$ is a tight cost bound of $\mi{amp}_{d}$ when $d = 1$. 
However, $2 n^{d}$ is not a tight bound anymore when $d = 0$ or $d > 1$. 
The general tight bound is probably complicated to express. 

The next proposition claims that AARA can infer that $2 n^{d}$ is an upper bound on the evaluation cost. 

\begin{lemma}[Typability of $\mi{amp}_{d}$]
\label{lemma:resource annotation of amp_d}
AARA can infer the resource-annotated type
\begin{equation} \label{eq:resource annotation of amp_d}
	\mi{amp}_{d}: L^{2 \vec{q}_{d}} (\text{Sym}) \xrightarrow{0/0} L^{1}(\text{Sym}) \xrightarrow{0/0} L^{1}(\text{Sym}), 
\end{equation}
where vector $\vec{q}_{d} \in \mathbb{Q}_{\geq 0}^{d}$ represents the function $n \mapsto n^{d}$. 
To be more precise, since vectors from $\mathbb{Q}_{\geq 0}^{0}$ cannot express constants, when $d = 0$, the resource-annotated type should be written as 
\begin{equation} \label{eq:resource annotation of amp_0}
	\mi{amp}_{0}: L^{0} (\text{Sym}) \xrightarrow{2/0} L^{1}(\text{Sym}) \xrightarrow{0/0} \abra{L^{1}(\text{Sym}), 0}. 
\end{equation}
Keep in mind that the resource annotation that AARA returns in reality can be a more accurate bound than \eqref{eq:resource annotation of amp_d}. 
\end{lemma}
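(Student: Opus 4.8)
The plan is to argue by induction on $d$, in each case exhibiting an explicit AARA typing derivation of the defining equation — \eqref{eq:definition of amp_0} when $d=0$ and \eqref{eq:definition of amp_d} when $d=i+1$ — in which the function being defined, and (in the step) each of its recursive occurrences, is assigned the resource-annotated arrow type claimed in the statement. Since AARA merely accumulates linear constraints on the undetermined coefficients of type templates, it suffices to present one derivation whose constraints are met by the proposed annotations together with suitable witnesses for the choices the rules leave open: the split performed by $\m{share}$ and the two budget/context splits in the $\m{let}$-bindings. The vector $\vec q_d$ representing $n \mapsto n^d$ with non-negative entries exists by Lemma~\ref{lemma:polynomial can be expressed as a linear combination of binomial coefficients}, and the output lengths needed to recognize the result type as $L^1(\text{Sym})$ are supplied by Lemma~\ref{lemma:correctness of the implementation of amp_d}.

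For $d = 0$ the body $\sqcup :: \mi{acc}$ contains a single list constructor, which under the output annotation $L^1(\text{Sym})$ must be paid one unit of potential for the freshly consed cell; this unit is taken from the arrow annotation, which is why the first input carries the superscript $2/0$ (one unit is left unused, reflecting the slack of $2 n^0 = 2$ over the true cost $1$). For $d = i+1$ we type the body of \eqref{eq:definition of amp_d} using the inductive hypothesis for $\mi{amp}_i$. The $[\,]$-branch just returns $\mi{acc}$, so it matches the output annotation directly. In the $x :: \mi{xs}$-branch, destructing $w : L^{2 \vec q_{i+1}}(\text{Sym})$ rebinds the tail $\mi{xs}$ with the additive shift $\lhd(2 \vec q_{i+1})$ of the annotation and moves its first coefficient into the potential budget; since the first coefficient of $\vec q_{i+1}$ equals $1$ for $i + 1 \ge 1$, two units become available, and they pay for the $\m{tick} \; 1$ and for the constant demand of the inner call $\mi{amp}_i \; \mi{xs}_1$. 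The decisive arithmetic fact is that the additive shift is precisely Pascal's identity $\binom{n+1}{k+1} = \binom{n}{k+1} + \binom{n}{k}$ read coefficient-wise, which yields $\lhd(2 \vec q_{i+1}) \ge 2 \vec q_i + 2 \vec q_{i+1}$ as non-negative vectors (it reduces to the coefficient inequality $(\vec q_{i+1})_{k+1} \ge (\vec q_i)_{k}$, valid for every $k$, a one-line consequence of Pascal's identity, equivalently of the recurrence for Stirling numbers of the second kind). Hence $\m{share}$ can split the tail's potential into a part $2 \vec q_i$ feeding $\mi{amp}_i$ and a part $2 \vec q_{i+1}$ feeding the recursive call $\mi{amp}_{i+1}$, discarding any surplus internally so that the top-level bound is unaffected. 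The same identity gives $\binom{\abs{w}}{i+1} = \binom{\abs{\mi{xs}}}{i+1} + \binom{\abs{\mi{xs}}}{i}$, so composing the two recursive outputs yields a list carrying exactly the potential demanded by $L^1(\text{Sym})$. Threading the budget through the two $\m{let}$-bindings and checking that it never becomes negative closes the derivation.

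I expect the main obstacle to be the constraint bookkeeping in the inductive step: one must check that the two units freed by destructing $w$ suffice simultaneously for the $\m{tick}$ and for the constant budget that the inner call $\mi{amp}_i$ requires — for $i = 0$ this pins down the precise arrow annotation under which $\mi{amp}_0$ is invoked — and that the additively shifted tail annotation genuinely splits as claimed (the coefficient-wise domination, which is stronger than the pointwise polynomial inequality $2((n+1)^{i+1} - 1) \ge 2 n^i + 2 n^{i+1}$). A secondary nuisance is the degenerate representation at $d = 0$: a length-zero annotation vector cannot carry a constant, so the constant cost must be moved onto the arrow, forcing the somewhat ad hoc $2/0$ superscript and a separate treatment of the base case and of the $i = 0$ instance appearing inside the step.
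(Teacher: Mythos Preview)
Your proposal is correct and follows essentially the same approach as the paper: induction on $d$, with the base case handled directly and the inductive step hinging on the componentwise inequality $\lhd(\vec q_{d}) \ge \vec q_{d-1} + \vec q_d$, together with the observation that $(\vec q_d)_1 = 1$ frees two constant units and that the case $d=1$ (your $i=0$) must be treated separately because $\mi{amp}_0$ carries its cost in the arrow annotation rather than in the list. The only cosmetic difference is that the paper verifies the key inequality by writing out $\vec q_d$ explicitly from the recurrence of Lemma~\ref{lemma:polynomial can be expressed as a linear combination of binomial coefficients} and comparing entries, whereas you phrase the same recurrence as the Stirling identity $(\vec q_{i+1})_{k+1} = (k{+}1)\bigl((\vec q_i)_k + (\vec q_i)_{k+1}\bigr) \ge (\vec q_i)_k$; both arguments are equivalent.
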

\begin{proof}
To prove the claim, it is sufficient to show that \eqref{eq:resource annotation of amp_d} is a valid resource annotation that satisfies all relevant typing rules of AARA.
The proof goes by induction on $d$.
For the base case of $d = 0$, the claim holds since $\mi{amp}_{d}$ requires exactly 1 potential unit.

We now turn to the inductive case.  
We can assign resource-annotated types to some variables appearing in \eqref{eq:definition of amp_d} as
\begin{equation*}
	w : L(\abra{\text{Sym}, 2 \vec{q}_{d}}) \qquad \mi{xs}_1 : L(\abra{\text{Sym}, (\triangleleft \, 2 \vec{q}_{d}) - 2 \vec{q}_{d}})  \qquad \mi{xs}_2 : L(\abra{\text{Sym}, 2 \vec{q}_{d}}).
\end{equation*}

We will now argue that this annotation correctly accounts for the evaluation cost in every recursive call. 

Firstly, from \eqref{eq:coefficients of n to the power of k+1}, we can derive that the first component of vector $\vec{q}_{d}$ for any $d \geq 1$ is 1. 
This means that 2 units of potential is available in each recursive call.
Hence, we use this constant potential to account for $\m{tick} \; 1$ in the definition of $\mi{amp}_{d}$. 

After deducting 2 units from the potential stored in $w$, we have 
$\triangleleft \, 2 \vec{q}_{d}$ units of potential remaining, and this represents $2 n^{d} - 2$.
We need to split it between $\mi{xs}_1$ and $\mi{xs}_2$ in such a way that we can pay for the costs of $\mi{amp}_{d-1} \; \mi{xs}_1 \; \mi{acc}$ and $\mi{amp}_{d} \; \mi{xs}_2 \; \mi{acc}'$. 
For the former, the inductive hypothesis suggests $\mi{amp}_{d - 1} : L^{2 \vec{q}_{d-1}} (\text{Sym}) \xrightarrow{0/0} L^{1}(\text{Sym}) \xrightarrow{0/0} L^{1}(\text{Sym})$, provided that $d \geq 2$. 
If $d = 1$, we need to conduct separate analysis since the type of $\mi{amp}_{0}$ in \eqref{eq:resource annotation of amp_0} is distinct from the type for $\mi{amp}_{d}$ for $d \geq 1$.
Nonetheless, we will assume $d \geq 2$ in the present proof as it is straightforward to adapt this proof to the case of $d = 1$. 
For the recursive call $\mi{amp}_{d} \; \mi{xs}_2 \; \mi{acc}'$, ideally, we would like to reuse the resource-annotated type of $\mi{amp}_{d}$; otherwise, resource-polymorphic recursion would arise, complicating the proof. 
As a consequence, our goal is to show 
\begin{equation*}
	2 \vec{q}_{d-1} + 2 \vec{q}_{d} \leq \triangleleft \, 2 \vec{q}_{d},
\end{equation*}
where $+$ and $\leq$ are applied component-wise. 
This is equivalent to $\vec{q}_{d-1} + \vec{q}_{d} \leq \triangleleft \, \vec{q}_{d}$ because $\triangleleft$ is linear and hence $\triangleleft \, 2 \vec{q}_{d} = 2 \cdot (\triangleleft \, \vec{q}_{d})$, where $\cdot$ is scalar multiplication. 

If $\vec{q}_{d - 1} = (q_1, q_2, \ldots, q_{d-1})$, it follows from \eqref{eq:coefficients of n to the power of k+1} that
\begin{equation*}
	\vec{q}_{d} = (q_1, 2 (q_1 + q_2), 3 (q_2 + q_3), \dots, (d-1) (q_{d-2} + q_{d-1}), (d-1) q_{q}).
\end{equation*} 
This yields 
\begin{equation*}
	\vec{q}_{d-1} + \vec{q}_{d} = (2 q_1, 2q_1 + 3q_2, \ldots, (d-1) q_{d-1} + d q_{d-1}, (d-1) q_{d-1}),
\end{equation*}
which is smaller than $\triangleleft \, \vec{q}_{d}$ component-wise.
Therefore, $\vec{q}_{d-1} + \vec{q}_{d} \leq \triangleleft \, \vec{q}_{d}$ indeed holds. 
\end{proof}

\subsection{Target RaML Programs}
\label{sec:target RaML programs in the embedding of polynomial-time Turing machines}

For convenience, the definition of Turing machines is reproduced below. 
\Turingmachine*

Given a source program $M$, the target program $M'$ can be expressed as 
\begin{equation} \label{eq:definition of M'}
\begin{split}
	\m{fun} \; M' \; w = {} & \m{share} \; w \; \m{as} \; w_1, w_2, w_3 \; \m{as} \\
	& \m{let} \; \ell_1 = {\vdash} :: [\,] \; \m{in} \\
	& \m{let} \; \ell_2' = \mi{amp}_{d, \sqcup} \; w_1 \; [\,] \; \m{in} \\
	& \m{let} \; \ell_2 = \mi{append} \; w_2 \; \ell_2' \; \m{in} \\
	& \m{let} \; \mi{ps} = \mi{amp}_{d, \abra{\,}} \; w_3 \; [\,] \; \m{in} \\
	& \mi{simulate} \; q_0 \; \ell_1 \; \ell_2 \; \mi{ps},
\end{split}
\end{equation}
where $\mi{amp}_{d, \sqcup}$ creates a list of size $\binom{\abs{w}}{d}$ filled with blank symbols, and $\mi{amp}_{d, \abra{\,}}$ performs the same task, except that the output is filled with $\abra{\,}$ instead of $\sqcup$. 
If $p (n)$, which is the polynomial representing $M$'s running time, cannot be expressed in the form of $\binom{n}{d}$ for any $d \in \mathbb{N}$, we express $p (n)$ as a linear combination of binomial coefficients (due to Lemma~\ref{lemma:polynomial can be expressed as a linear combination of binomial coefficients}) and hard-code this linear combination inside \eqref{eq:definition of M'}. 

The auxiliary functions $\mi{append}$ and $\mi{simulate}$ are defined as
\begin{alignat*}{2}
		\m{fun} \; \mi{append} \; \ell_1 \; \ell_2 = \m{case} \; \ell_1 \; & \{ [\,] \hookrightarrow \ell_2 \\
		& \mid x::\mi{xs} \hookrightarrow {} && \m{let} \; \uscore = \m{tick} \; 1 \; \m{in} \\
		& &&\m{let} \; \mi{xs}' = \mi{append} \; \mi{xs} \; \ell_2 \; \m{in} \\
		& && x:: \mi{xs}'
		\}
\end{alignat*}

\begin{equation} \label{eq:definition of simulate}
	\begin{split}
	\m{fun} \; \mi{simulate} \; & s \; \ell_1 \; \ell_2 \; \mi{ps} \\
	= \m{case} \;\mi{ps} \; & \{ [\,] \hookrightarrow && \mi{shift} \; \ell_1 \; \ell_2 \\
	& \mid p::\mi{ps}' \hookrightarrow {} &&\m{let} \; \uscore = \m{tick} \; 1 \; \m{in} \\
	& &&\m{let} \; (s', b, \text{direction}) = \delta (s, \mi{head} \; \ell_2) \; \m{in} \\
	& && \m{if} \; s' = q_{\text{final}} \; \m{then} \\
	& && \mi{shift} \; \ell_1 \; \ell_2 \\
	& && \m{else} \; \m{if} \; \text{direction} = L \; \m{then} \\
	& && \mi{simulate} \; s' \; (\mi{tail} \; \ell_1) \; ((\mi{head} \; \ell_1) :: b :: (\mi{tail} \; \ell_2)) \; \mi{ps}' \\
	& && \m{else} \\
	& && \mi{simulate} \; s' \; (b :: \ell_1) \; (\mi{tail} \; \ell_2) \; \mi{ps}' \}.
	\end{split}
\end{equation}
The if-else constructs in \eqref{eq:definition of simulate} (e.g.~$\m{if} \; s' = q_{\text{final}} \; \m{then} \; \cdots$) can be encoded using $\m{case}$ for the Boolean type (i.e.~$\mathbf{1} + \mathbf{1}$). 
The function $\mi{shift}$ used in \eqref{eq:definition of simulate} reverses the first input list and appends it to the second input list:
\begin{equation*}
	\begin{split}
	\m{fun} \; \mi{shift} \; \ell_1 \; \ell_2 = \m{case} \; \ell_1 \; & \{ [\,] \hookrightarrow \ell_2 \\
	& \mid x::\mi{xs} \hookrightarrow \m{let} \; \uscore = \m{tick} \; 1 \; \m{in} \\
	& \qquad \qquad \quad \m{let} \; \mi{ys} = x :: \ell_2 \; \m{in} \\
	& \qquad \qquad \quad \mi{shift} \; xs \; ys \}. 
	\end{split}
\end{equation*}

In \eqref{eq:definition of simulate}, for the sake of brevity, we use the standard form of function application in place of let-normal form. 
Also, $(s', b, \text{direction}) = \delta (s, \mi{head} \; \ell_2)$ is a slight abuse of notation because this is ill-formed with respect to the syntax of RaML and also because we would need to introduce a new type for directions (i.e.~$L$ or $R$).
Nevertheless, we write it this way to keep $\delta$ general.
If we are given a specific transition function, we can embed it in the code, dedicating one branch of if-else statements (or pattern matching) to each possible combination of $s$ (i.e.~the current machine state) and $\mi{head} \; \ell_2$ (i.e.~the symbol in the current cell). 

Lastly, for completeness, $\mi{head}$ and $\mi{tail}$ are defined as
\begin{align*}
\m{fun} \; \mi{head} \; \ell & = \m{case} \; \ell \; \{ [\,] \hookrightarrow \m{error} \mid x::xs \hookrightarrow x \} \\
\m{fun} \; \mi{tail} \; \ell & = \m{case} \; \ell \; \{ [\,] \hookrightarrow \m{error} \mid x::\mi{xs} \hookrightarrow \mi{xs} \}. 
\end{align*}

Finally, we are now in a position to prove the theorem about embedding polynomial-time Turing machines in RaML. 

\embeddingofTuringmachines*

\begin{proof}
Without loss of generality, assume that the running time of $M$ is bounded by $n \mapsto \binom{n}{d}$ for some fixed $d \in \mathbb{N}$. 
In this case, a desirable $M'$ is defined in \eqref{eq:definition of M'}. 
If this assumption is false, we can use Lemma~\ref{lemma:polynomial can be expressed as a linear combination of binomial coefficients} to express a polynomial as a liner combination of binomial coefficients and hard-code it in \eqref{eq:definition of M'}. 
By construction, $M' (w) = M (w)$ for every $w \in \{0, 1\}^{*}$. 

Throughout the execution of $M'$, $\ell_1$, $\ell_2$, and $\mi{ps}$ must contain one unit of potential in each cell.
The potential in $\ell_1$ and $\ell_2$ will be used to account for $\mi{shift} \; \ell_1 \; \ell_2$ right before $M'$ terminates. 
The potential stored in $\mi{ps}$ is for the execution of $\mi{simulate}$. 

Hence, the input $w$ to $M'$ must contain sufficient potential to pay for the following costs:
\begin{itemize}
	\item Creating singleton list $\ell_1$ that contains $\vdash$. 
	Due to the invariant we impose on $\ell_1$'s potential, it requires one unit of potential to create $\ell_1$ in the initial configuration. 
	\item Creating list $\ell_2'$ of size $\binom{\abs{w}}{d}$, which has one unit of potential in each cell. 
	\item Appending $w$ to $\ell_2'$ to create $\ell_2$, which has size $\abs{w} + \binom{\abs{w}}{d}$ and stores one unit of potential in each cell. 
	Thus, we have $\abs{w}$ many units of potential to execute $\mi{append}$ and another $\abs{w}$ units to be stored in the first $\abs{w}$ cells of $\ell_2$. This gives a total of $2 \abs{w}$ units of potential. 
	\item Creating list $\mi{ps}$ of size $\binom{\abs{w}}{d}$, which stores one unit of potential in each cell. 
\end{itemize}
By Proposition~\ref{lemma:resource annotation of amp_d}, the second (for $\ell_2'$) and fourth (for $\mi{ps}$) costs above can be each covered by $2 \abs{w}^{d}$ units of potential. 
Hence, summing the above three costs, we obtain $1 + 2 \abs{w}^{d} + 2 \abs{w}^{d} + 2 \abs{w}$.
This is the amount of potential that must be stored in the input $w$ to $M'$ at the start of computation. 

Each auxiliary function appearing in \eqref{eq:definition of M'} can be type-annotated. 
More concretely, $\mi{amp}_{d}$ can be type-annotated as shown in Proposition~\ref{lemma:resource annotation of amp_d}, and $\mi{simulate}$ can be assigned this type: 
\begin{equation*}
\mi{simulate} : \text{State} \rightarrow L^{1}(\text{Sym}) \xrightarrow{0/0} L^{1}(\text{Sym}) \xrightarrow{0/0} L^{1}(\text{Sym}) \xrightarrow{0/0} L^{0}(\text{Sym}). 
\end{equation*} 
Since it is relatively easy to see that this type can be inferred using AARA, we will omit its formal proof. 
In summary, univariate AARA can infer a polynomial cost bound of $M'$. 
\end{proof}

\else
\fi


\section{Resource-Aware ML (RaML)}

This section gives the type system of RaML (Section~\ref{sec:resource-aware ML}) and defines its the running time.

\subsection{Simple Type System of RaML}
\label{sec:type system of RaML}

The simple (i.e.~non-resource-annotated) type system of RaML is displayed in Figure~\ref{fig:type system of RaML}. 
Throughout Figure~\ref{fig:type system of RaML}, $b \in \mathbb{B}$ denotes a base type, and $\tau$ denotes a simple type. 
Because RaML is a first-order language, $\tau$ is either $b \in \mathbb{B}$ or $b_1 \to b_2$ (Section~\ref{sec:basics of univariate AARA}). 

\begin{figure}[hbt!]
	\begin{small}
	\begin{center}
		\AxiomC{}
		\RightLabel{\textsc{(T:Var)}}
		\UnaryInfC{$x: \tau \vdash x: \tau$}
		\DisplayProof
		\quad
		\AxiomC{$\Gamma \vdash x: b_1$}
		\RightLabel{\textsc{(T:SumL)}}
		\UnaryInfC{$\Gamma \vdash \ell \cdot x : b_1 + b_2$}
		\DisplayProof
		\quad
		\AxiomC{$\Gamma \vdash x: b_2$}
		\RightLabel{\textsc{(T:SumR)}}
		\UnaryInfC{$\Gamma \vdash r \cdot x : b_1 + b_2$}
		\DisplayProof
	\end{center}
	\begin{center}
		\AxiomC{}
		\RightLabel{\textsc{(T:Unit)}}
		\UnaryInfC{$\cdot \vdash \abra{\,} : \mathbf{1}$}
		\DisplayProof
		\quad
		\AxiomC{$\Gamma_1 \vdash x_1: b_1$}
		\AxiomC{$\Gamma_2 \vdash x_2: b_2$}
		\RightLabel{\textsc{(T:Pair)}}
		\BinaryInfC{$\Gamma_1 \cup \Gamma_2 \vdash \abra{x_1, x_2}: b_1 \times b_2$}
		\DisplayProof
	\end{center}
	\begin{center}
		\AxiomC{}
		\RightLabel{\textsc{(T:Nil)}}
		\UnaryInfC{$\cdot \vdash [\,]: L(b)$}
		\DisplayProof
		\quad
		\AxiomC{$\Gamma_1 \vdash x_1: b$}
		\AxiomC{$\Gamma_2 \vdash x_2: L (b)$}
		\RightLabel{\textsc{(T:Cons)}}
		\BinaryInfC{$\Gamma_1 \cup \Gamma_2 \vdash (x_1 :: x_2) : L(b)$}
		\DisplayProof
	\end{center}

	\begin{center}
		\AxiomC{$\Gamma, f: b_1 \to b_2, x: b_1 \vdash e: b_2$}
		\RightLabel{\textsc{(T:Fun)}}
		\UnaryInfC{$\Gamma \vdash \m{fun} \; f \; x = e : b_1 \to b_2$}
		\DisplayProof
		\quad
		\AxiomC{$\Gamma_1 \vdash x_1: b_1 \rightarrow b_2$}
		\AxiomC{$\Gamma_2 \vdash x_2: b_1$}
		\RightLabel{\textsc{(T:App)}}
		\BinaryInfC{$\Gamma_1 \cup \Gamma_2 \vdash x_1 \; x_2 : b_2$}
		\DisplayProof
	\end{center}

	\begin{prooftree}
		\AxiomC{$\Gamma_1 \vdash x: b_1 + b_2$}
		\AxiomC{$\Gamma_2, y: b_1 \vdash e_{\ell}: b_3$}
		\AxiomC{$\Gamma_2, y: b_2 \vdash e_{r}: b_3$}
		\RightLabel{\textsc{(T:Case-Sum)}}
		\TrinaryInfC{$\Gamma_1 \cup \Gamma_2 \vdash \m{case} \; x \; \{\ell \cdot y \hookrightarrow e_{\ell} \mid r \cdot y \hookrightarrow e_{r} \} : b_3$}
	\end{prooftree}
	\begin{prooftree}
		\AxiomC{$\Gamma_1 \vdash x: b_1 \times b_2$}
		\AxiomC{$\Gamma_2, x_1: b_1, x_2: b_2 \vdash e: b_3$}
		\RightLabel{\textsc{(T:Case-Product)}}
		\BinaryInfC{$\Gamma_1 \cup \Gamma_2 \vdash \m{case} \; x \; \{ \abra{x_1, x_2} \hookrightarrow e \} : b_3$}
	\end{prooftree}
	\begin{prooftree}
		\AxiomC{$\Gamma_1 \vdash x: L(b)$}
		\AxiomC{$\Gamma_2 \vdash e_0 : b_2$}
		\AxiomC{$\Gamma_2, x_1: b, x_2: L(b) \vdash e_1: b_2$}
		\RightLabel{\textsc{(T:Case-List)}}
		\TrinaryInfC{$\Gamma_1 \cup \Gamma_2 \vdash \m{case} \; x \; \{ [\,] \hookrightarrow e_0 \mid (x_1 :: x_2) \hookrightarrow e_1 \}: b_2 $}
	\end{prooftree}

	\begin{center}
		\AxiomC{}
		\RightLabel{\textsc{(T:Tick)}}
		\UnaryInfC{$\cdot \vdash \m{tick} \; q : \mathbf{1}$}
		\DisplayProof
		\quad
		\AxiomC{$\Gamma_1 \vdash e_1: \tau$}
		\AxiomC{$\Gamma_2, x: \tau \vdash e_2 : b$}
		\RightLabel{\textsc{(T:Let)}}
		\BinaryInfC{$\Gamma_1 \cup \Gamma_2 \vdash \m{let} \; x = e_1 \; \m{in} \; e_2 : b$}
		\DisplayProof
	\end{center}

	\begin{center}
		\AxiomC{$\Gamma, x_1: \tau, x_2: \tau \vdash e: \sigma$}
		\RightLabel{\textsc{(T:Share)}}
		\UnaryInfC{$\Gamma, x: \tau \vdash \m{share} \; x \; \m{as} \; x_1, x_2 \; \m{in} \; e : \sigma$}
		\DisplayProof
		\quad
		\AxiomC{$\Gamma_1 \vdash e: \tau$}
		\AxiomC{$\Gamma_1 \subseteq \Gamma_2$}
		\RightLabel{\textsc{(T:Weak)}}
		\BinaryInfC{$\Gamma_2 \vdash e : \tau$}
		\DisplayProof
	\end{center}

	\end{small}
	\caption{Simple type system of RaML.}
	\label{fig:type system of RaML}
\end{figure}

\subsection{Running Time of RaML}
\label{sec:running time of resource-aware ML}

A \emph{cost semantics} of programming language $L$ is a mapping $\cost{\cdot}: \dbra{L} \rightarrow \mathbb{Q}_{\geq 0}$.
Here, $\dbra{L}$ is the set of all programs in $L$ where inputs are already included in them; that is, programs in $\dbra{L}$ have base types as opposed to arrow types. 
In automatic amortized resource analysis (AARA), cost semantics are defined by specifying how each inference rule of the big-step operational semantics gives rise to computational costs. 
A mapping from inference rules of the operational semantics to $\mathbb{Q}$ (or vectors of $\mathbb{Q}$) is referred to as a \emph{cost metric}. 
Examples of cost metrics include the running time and memory usage. 

The running time is given by a judgment $V \vdash e \Downarrow v \mid n$, where $V$ is an environment (i.e.~a set of pairs of variable symbols and semantic values), $v$ is a semantic value, and $n \in \mathbb{N}$ is the running time of evaluating program $e$ to $v$. 
This judgment is defined in Figure~\ref{fig:inference rules defining the running time of RaML}. 

Environment $V$ may contain redundant variables that are not allowed to appear in the program. 
However, this is not problematic---specifying what variables can appear and what cannot is the job of RaML's type system (Figure~\ref{fig:type system of RaML}), not the job of RaML's cost semantics. 

\begin{figure}[hbt!]
	\begin{small}
	\begin{center}
		\AxiomC{}
		\RightLabel{\textsc{(E:Var)}}
		\UnaryInfC{$V \vdash x \Downarrow V (x) \mid 1$}
		\DisplayProof
		\qquad
		\AxiomC{}
		\RightLabel{\textsc{(E:Unit)}}
		\UnaryInfC{$V \vdash \abra{\,} \Downarrow \abra{\,} \mid 0$}
		\DisplayProof
	\end{center}
	\begin{center}
		\AxiomC{$d \in \{\ell, r\}$}
		\AxiomC{$V \vdash x \Downarrow v \mid 1$}
		\RightLabel{\textsc{(E:Sum)}}
		\BinaryInfC{$V \vdash d \cdot x \Downarrow d \cdot v \mid 2$}
		\DisplayProof
		\quad
		\AxiomC{$V \vdash x_1 \Downarrow v_1 \mid 1$}
		\AxiomC{$V \vdash x_2 \Downarrow v_2 \mid 1$}
		\RightLabel{\textsc{(E:Pair)}}
		\BinaryInfC{$V \vdash \abra{x_1, x_2} \Downarrow \abra{v_1, v_2} \mid 3$}
		\DisplayProof
	\end{center}
	\begin{center}
		\AxiomC{}
		\RightLabel{\textsc{(E:Nil)}}
		\UnaryInfC{$V \vdash [\,] \Downarrow [\,] \mid 0$}
		\DisplayProof
		\qquad
		\AxiomC{$V \vdash x_1 \Downarrow v \mid 1$}
		\AxiomC{$V \vdash x_2 \Downarrow \ell \mid 1$}
		\RightLabel{\textsc{(E:Cons)}}
		\BinaryInfC{$V \vdash (x_1 :: x_2) \Downarrow (v :: \ell) \mid 3$}
		\DisplayProof
	\end{center}
	\begin{prooftree}
		\AxiomC{}
		\RightLabel{\textsc{(E:Fun)}}
		\UnaryInfC{$V \vdash \m{fun} \; f \; x = e \Downarrow \closure{V}{f, x}{e} \mid 1 $}
	\end{prooftree}
	\begin{prooftree}
		\AxiomC{$V (x_1) = \closure{U}{f, x}{e}$}
		\AxiomC{$U, f \mapsto \closure{V}{f, x}{e}, x \mapsto V (x_2) \vdash e \Downarrow v \mid q$}
		\RightLabel{\textsc{(E:App)}}
		\BinaryInfC{$V \vdash x_1 \; x_2 \Downarrow v \mid 1+q$}
	\end{prooftree}
	\begin{prooftree}
		\AxiomC{$d \in \{\ell, r\}$}
		\AxiomC{$V (x) = d \cdot v$}
		\AxiomC{$V, y \mapsto v \vdash e_{d} \Downarrow v_{d} \mid q$}
		\RightLabel{\textsc{(E:Case-Sum)}}
		\TrinaryInfC{$V \vdash \m{case} \; x \; \{\ell \cdot y \hookrightarrow e_{\ell} \mid r \cdot y \hookrightarrow e_{r} \} \Downarrow v_{d} \mid 1+q$}
	\end{prooftree}
	\begin{prooftree}
		\AxiomC{$V (x) = \abra{v_1, v_2}$}
		\AxiomC{$V, x_1 \mapsto v_1, x_2 \mapsto v_2 \vdash e \Downarrow v \mid q$}
		\RightLabel{\textsc{(E:Case-Product)}}
		\BinaryInfC{$V \vdash \m{case} \; x \; \{\abra{x_1, x_2} \hookrightarrow e\} \Downarrow v \mid 1+q$}
	\end{prooftree}
	\begin{prooftree}
		\AxiomC{$V (x) = [\,]$}
		\AxiomC{$V \vdash e_0 \Downarrow v \mid q$}
		\RightLabel{\textsc{(E:Case-Nil)}}
		\BinaryInfC{$V \vdash \m{case} \; x \; \{[\,] \hookrightarrow e_0 \mid (x_1 :: x_2) \hookrightarrow e_1\} \Downarrow v \mid 1+q$}
	\end{prooftree}
	\begin{prooftree}
		\AxiomC{$V (x) = v :: \ell$}
		\AxiomC{$V, x_1 \mapsto v, x_2 \mapsto \ell \vdash e_1 \Downarrow v' \mid q$}
		\RightLabel{\textsc{(E:Case-Cons)}}
		\BinaryInfC{$V \vdash \m{case} \; x \; \{[\,] \hookrightarrow e_0 \mid (x_1 :: x_2) \hookrightarrow e_1\} \Downarrow v' \mid 1+q$}
	\end{prooftree}

	\begin{prooftree}
		\AxiomC{$V \vdash e_1 \Downarrow v_1 \mid q$}
		\AxiomC{$V, x \mapsto v_1 \vdash e_2 \Downarrow v_2 \mid p$}
		\RightLabel{\textsc{(E:Let)}}
		\BinaryInfC{$V \vdash \m{let} \; x = e_1 \; \m{in} \; e_2 \Downarrow v_2 \mid 1 + q + p $}
	\end{prooftree}
	\begin{prooftree}
		\AxiomC{$V (x) = v$}
		\AxiomC{$V, x_1 \mapsto v, x_2 \mapsto v \vdash e \Downarrow v' \mid q$}
		\RightLabel{\textsc{(E:Share)}}
		\BinaryInfC{$V \vdash \m{share} \; x \; \m{as} \; x_1, x_2 \; \m{in} \; e \Downarrow v' \mid q$}
	\end{prooftree}
	\end{small}
	\caption{Inference rules defining the running time of RaML.}
	\label{fig:inference rules defining the running time of RaML}
\end{figure}

In \textsc{(E:Fun)}, $\closure{V}{f, x}{e}$ is a function closure where $V$ is an environment and $e$ is code that may mention $f$ and $x$. 
The notation $f, x. e$ means that $f, x$ are bound in $e$. 

In \textsc{(E:Unit)} and \textsc{(E:Nil)}, the running time is zero. 
One might argue that these cases could have positive running time in practice. 
If we are to assign positive running time to \textsc{(E:Unit)} and \textsc{(E:Nil)}, we will need to revise the type systems of unvarite and multivariate AARA accordingly. 
However, it will not fundamentally affect the theorems and their proofs in this article. 
For instance, consider a primitive recursion $\m{rec} \; x \; \{[\,] \hookrightarrow e_0 \mid (y::\mi{ys}) \; \m{with} \; z \hookrightarrow e_1 \}$, which can be encoded using (i) general recursion and (ii) lists' pattern matching. 
Even if the stepping function $e_1$ has zero running time (e.g.~due to \textsc{(E:Nil)}), the total running time of the whole primitive recursion is strictly positive and is proportional to the length of an input list, thanks to \textsc{(E:Case-Cons)}. 
Consequently, even if \textsc{(E:Unit)} and \textsc{(E:Nil)} have zero running time as they do now, the running time of any primitive recursion is non-zero. 

We do not have a rule for $\m{tick}$, because it is only necessary for the tick metric. 


\section{Univariate Polynomial AARA}
\label{sec:univariate polynomial AARA}

\subsection{Type system}
\label{sec:type system of univariate AARA}

This section presents a type system for univariate polynomial AARA. 
The content and presentation style of this section are attributed to Hoffmann's PhD thesis \cite{Hoffmann2011}. 

The typing judgment for univariate polynomial AARA has the form 
\begin{equation} \label{eq:typing judgment of univariate AARA}
	\Gamma_{\text{anno}}; q \vdash e: A,
\end{equation}
where $q$ is a non-negative rational number and $\Gamma_{\text{anno}}$ is a resource-annotated typing context (i.e.~a set of pairs of variable symbols and their resource-annotated types). 
$A$ has the form $\abra{b, p}$, where $b$ is a resource-annotated base type and $p \in \mathbb{Q}_{\geq 0}$ (Section~\ref{sec:basics of univariate AARA}).  

\paragraph*{Syntax-Directed Rules}
The syntax-directed rules for univariate polynomial AARA are presented in Figure~\ref{fig:syntax-directed rules of univariate AARA}. 

\begin{figure}[hbt!]
	\begin{small}
	\begin{center}
		\AxiomC{}
		\RightLabel{\textsc{(U:Var)}}
		\UnaryInfC{$x: \tau; 1 \vdash x: \abra{\tau, 0}$}
		\DisplayProof
		\qquad
		\AxiomC{}
		\RightLabel{\textsc{(U:Unit)}}
		\UnaryInfC{$\cdot; 0 \vdash \abra{\,} : \abra{\mathbf{1}, 0}$}
			\DisplayProof
	\end{center}
	\begin{center}
		\AxiomC{$\Gamma; 1 \vdash x: \abra{b_1, 0}$}
		\RightLabel{\textsc{(U:SumL)}}
		\UnaryInfC{$\Gamma; 2 \vdash \ell \cdot x : \abra{b_1 + b_2, 0}$}
		\DisplayProof
		\qquad
		\AxiomC{$\Gamma; 1 \vdash x: b_2$}
		\RightLabel{\textsc{(U:SumR)}}
		\UnaryInfC{$\Gamma; 2 \vdash r \cdot x : \abra{b_1 + b_2, 0}$}
		\DisplayProof
	\end{center}

	\begin{center}
		\AxiomC{}
		\RightLabel{\textsc{(U:Pair)}}
		\UnaryInfC{$x_1: b_1, x_2: b_2; 3 \vdash \abra{x_1, x_2} : \abra{b_1 \times b_2, 0}$}
		\DisplayProof
		\quad
		\AxiomC{}
		\RightLabel{\textsc{(U:Nil)}}
		\UnaryInfC{$\cdot; 0 \vdash [\,] : \abra{L^{\vec{q}} (b), 0}$}
		\DisplayProof
	\end{center}
	\begin{prooftree}
		\AxiomC{}
		\RightLabel{\textsc{(U:Cons)}}
		\UnaryInfC{$x_1: b, x_2: L^{\lhd (\vec{q})} (b); 3+q_1 \vdash x_1 :: x_2 : \abra{L^{\vec{q}} (b), 0}$}
	\end{prooftree}

	\begin{prooftree}
		\AxiomC{$\Gamma = \abs{\Gamma}$}
		\AxiomC{$\forall (\abra{b, q} \to B_2) \in \mathcal{T}. \abs{\Gamma}, f: \mathcal{T}, x: b; q \vdash e: B_2$}
		\RightLabel{\textsc{(U:Fun)}}
		\BinaryInfC{$\Gamma; 1 \vdash \m{fun} \; f \; x = e : \abra{\mathcal{T}, 0}$}
	\end{prooftree}

	\begin{prooftree}
		\AxiomC{$(B_1 \to B_2) \in \mathcal{T}$}
		\AxiomC{$B_1 = \abra{b, q}$}
		\RightLabel{\textsc{(U:App)}}
		\BinaryInfC{$f : \mathcal{T}, x : b; 1+q \vdash f \; x : B_2$}
	\end{prooftree}

	\begin{prooftree}
		\AxiomC{$\Gamma, y: b_1; q \vdash e_{\ell}: B$}
		\AxiomC{$\Gamma, y: b_2; q \vdash e_{r}: B$}
		\RightLabel{\textsc{(U:Case-Sum)}}
		\BinaryInfC{$x: b_1 + b_2, \Gamma; 1+q \vdash \m{case} \; x \; \{\ell \cdot y \hookrightarrow e_{\ell} \mid r \cdot y \hookrightarrow e_{r} \} : B$}
	\end{prooftree}
	\begin{prooftree}
		\AxiomC{$\Gamma, x_1: b_1, x_2: b_2; q \vdash e : B$}
		\RightLabel{\textsc{(U:Case-Product)}}
		\UnaryInfC{$x: b_1 \times b_2, \Gamma; 1+q \vdash \m{case} \; x \; \{\abra{x_1, x_2} \hookrightarrow e\} : B$}
	\end{prooftree}
	\begin{prooftree}
		\AxiomC{$\Gamma; q \vdash e_0: B$}
		\AxiomC{$\Gamma, x_1: b, x_2: L^{\lhd (\vec{p})} (b); q + p_1 \vdash e_1 : B$}
		\RightLabel{\textsc{(U:Case-List)}}
		\BinaryInfC{$x: L^{\vec{p}} (b), \Gamma; 1+q \vdash \m{case} \; x \; \{[\,] \hookrightarrow e_0 \mid x_1 :: x_2 \hookrightarrow e_1 \} : B$}
	\end{prooftree}
	\begin{prooftree}
		\AxiomC{$\Gamma_1; q \vdash e_1 : \abra{\tau, p}$}
		\AxiomC{$\Gamma_2, x: \tau; p \vdash e_2 : B$}
		\RightLabel{\textsc{(U:Let)}}
		\BinaryInfC{$\Gamma_1 \cup \Gamma_2; 1+ q \vdash \m{let} \; x = e_1 \; \m{in} \; e_2 : B$}
	\end{prooftree}
	\begin{prooftree}
		\AxiomC{$\tau \curlyveedownarrow (\tau_1, \tau_2)$}
		\AxiomC{$\Gamma, x_1: \tau_1, x_2: \tau_2; q \vdash e: \abra{\tau, q}$}
		\RightLabel{\textsc{(U:Share)}}
		\BinaryInfC{$\Gamma, x: \tau; q \vdash \m{share} \; x \; \m{as} \; x_1, x_2 \; \m{in} \; e : \abra{\tau, q}$}
	\end{prooftree}
	\end{small}
	\caption{Syntax-directed rules of univariate AARA.}
	\label{fig:syntax-directed rules of univariate AARA}
\end{figure}

In \textsc{(U:Cons)}, $q_1$ is not a fresh variable---it denotes the first component of vector $\vec{q}$. 

In \textsc{(U:Fun)} and \textsc{(U:App)}, $\mathcal{T}$ is a (finite or infinite) set of resource-annotated arrow types of the form $B_1 \to B_2$, where $b_{i} = \abra{b_{i}, q_{i}}$. 
Each function $\m{fun} \; f \; x = e$ is associated with such set $\mathcal{T}$. 
For instance, the identity function $\mi{id} : L (\mathbf{1}) \to L (\mathbf{1})$ always consumes one unit of potential. 
Therefore, the largest possible $\mathcal{T}$ for $\mi{id}$ is 
\begin{equation*}
	\begin{split}
	\mathcal{T} = \{\abra{L^{\vec{p}} (\mathbf{1}), q + c} \to \abra{L^{\vec{p}} (\mathbf{1}), q} & \mid q \in \mathbb{Q}_{\geq 0}, c \in \mathbb{Q}_{\geq 1}, \\
	& \vec{p} \text{ is any valid univariate polynomial annotation} \}. 
	\end{split}
\end{equation*}
Of course, $\mathcal{T}$ need not be as large as it can---it only needs to be self-sufficient in the sense that every $(B_1 \to B_2) \in \mathcal{T}$ can be derived using $\mathcal{T}$ itself (See \textsc{(U:Fun)}). 
Since $\mathcal{T}$ is a set of types rather than a single type, the typing judgment $f: \mathcal{T}$ does not conform to \eqref{eq:typing judgment of univariate AARA}. 
Nonetheless, we will keep writing $f: \mathcal{T}$ because it conveniently conveys that we are allowed to assign multiple resource annotations to $f$. 

The use of $f: B_1 \to B_2$ to derive a different arrow type $f: C_1 \to C_2$ of $f$ itself, where $f$ is recursively defined (i.e.~$f$ mentions itself), is called \emph{resource-polymorphic recursion} \cite{Hoffmann2010}. 
This phenomenon arises in many recursive functions. 
An example is provided in \cite{Hoffmann2010}.

It is reasonable to wonder how we can possibly ``deterministically'' infer resource annotations, given that $\mathcal{T}$ is allowed to be infinite. 
For instance, we may have $\mathcal{T} = \{B_1 \to C_1, B_2 \to C_2, B_3 \to C_3, \ldots\}$, where in order to type $f: B_{i} \to C_{i}$ for each $i$, we need to use $f: B_{i+1} \to C_{i+1}$. 
Consequently, resource-polymorphic recursion may induce an infinite chain of \textsc{(U:Fun)}. 
This problem will be discussed in detail in the following section. 

The first premise of \textsc{(U:Fun)} states $\Gamma = \abs{\Gamma}$. 
Informally, $\abs{\cdot}$ deletes all resource annotations inside resource-annotated base types (but does nothing to resource-annotated arrow types). 
Formally, $\abs{\tau}$, where $\tau$ is a resource-annotated simple type, is defined as follows:
\begin{align*}
	\abs{\mathbf{1}} & := \mathbf{1} & \abs{L^{\vec{q}} (b)} & := L^{0} (\abs{b}) \\
	\abs{b_1 + b_2} & := \abs{b_1} + \abs{b_2} & 	\abs{\abra{b_1, q_1} \to \abra{b_2, q_2}} & := \abra{b_1, q_1} \to \abra{b_2, q_2} \\
	\abs{b_1 \times b_2} & := \abs{b_1} \times \abs{b_2}. 
\end{align*}
This can be generalized to typing contexts as follows: if $\Gamma = {x_1: \tau_1, \ldots, x_n : \tau_n}$, we have $\abs{\Gamma} = {x_1: \abs{\tau_1}, \ldots, x_n : \abs{\tau_n}}$. 

The reason why we require $\Gamma = \abs{\Gamma}$ in the first premise of \textsc{(U:Fun)} is that we want function closures (i.e.~semantic values $\closure{V}{f, x}{e}$) to store zero potential. 
If a function $f$ is defined in the presence of a positive-potential base-type variable, it means $f$ consumes potential every time it is invoked. 
However, in AARA, we do not keep track of how many times $f$ is invoked---instead, we let it be invoked freely. 
Of course, we could modify AARA such that it computes an upper bound on the number of times $f$ is invoked. 
However, for simplicity, we do not adopt this approach. 

In \textsc{(U:Share)}, $\tau \curlyveedownarrow (\tau_1, \tau_2)$ means the resource-annotated type $\tau$ can be split into $\tau_1$ and $\tau_2$. 
This is defined in Figure~\ref{fig:definition of the sharing relation of types}. 
When we split a variable $x: B_1 \rightarrow B_2$ into $x_1$ and $x_2$ (i.e.~$\m{share} \; x \; \m{as} \; x_1, x_2$), $x_1$ and $x_2$ are allowed to have distinct annotations as long as these annotations can be legally derived using \textsc{(U:Fun)}. 
Thus, it is possible to use a different resource annotation for a function $f$ each time $f$ is invoked. 
In Figure~\ref{fig:definition of the sharing relation of types}, $\mathcal{T}_{f}$ denotes a set of resource-annotated arrow types associated with function $f$ that we consider now. 

\begin{figure}[hbt!]
	\begin{small}
	\begin{center}
		\AxiomC{}
		\UnaryInfC{$\mathbf{1} \curlyveedownarrow (\mathbf{1}, \mathbf{1})$}
		\DisplayProof
		\quad
		\AxiomC{$b_1 \curlyveedownarrow (b_{1,1}, b_{1, 2})$}
		\AxiomC{$b_2 \curlyveedownarrow (b_{2,1}, b_{2, 2})$}
		\BinaryInfC{$(b_1 + b_2) \curlyveedownarrow (b_{1,1} + b_{2, 1}, b_{1,2} + b_{2, 2})$}
		\DisplayProof
		\quad
		\AxiomC{$b_1 \curlyveedownarrow (b_{1,1}, b_{1, 2})$}
		\AxiomC{$b_2 \curlyveedownarrow (b_{2,1}, b_{2, 2})$}
		\BinaryInfC{$(b_1 \times b_2) \curlyveedownarrow (b_{1,1} \times b_{2, 1}, b_{1,2} \times b_{2, 2})$}
		\DisplayProof
	\end{center}
	
	\begin{center}
		\AxiomC{$\vec{p} = \vec{p_1} + \vec{p_2}$}
		\AxiomC{$b \curlyveedownarrow (b_1, b_2)$}
		\BinaryInfC{$L^{\vec{p}} (b) \curlyveedownarrow (L^{\vec{p_1}} (b_1), L^{\vec{p_2}} (b_2))$}
		\DisplayProof
		\quad
		\AxiomC{$(B_1 \to B_2), (C_1 \to C_2), (D_1 \to D_2) \in \mathcal{T}_{f}$}
		\UnaryInfC{$(B_1 \to B_2) \curlyveedownarrow (C_1 \to C_1, D_1 \to D_2)$}
		\DisplayProof
	\end{center}
	\end{small}
	\caption{Definition of the sharing relation $\tau \curlyveedownarrow (\tau_1, \tau_2)$.}
	\label{fig:definition of the sharing relation of types}
\end{figure}

In \textsc{(U:Cons)} and \textsc{(U:Case-List)}, we use the univariate additive shift operator (denoted by $\lhd$) that splits the potential of a list between the head element and the tail. 
This is formally defined below. 

\begin{definition}[Additive shift of potential vectors]
\label{def:additive shift of potential vectors}
Given a potential vector $\vec{q} = (q_1, q_2, \cdots, q_{k})$, its additive shift is defined as 
\begin{equation*}
	\lhd (\vec{q}) := (q_1 + q_2, q_2 + q_3, \cdots, q_{k-1} + q_{k}, q_k). 
\end{equation*}
\end{definition}

\paragraph*{Structural Rules}
The structural rules of univariate AARA are displayed in Figure~\ref{fig:structural rules of univariate AARA}. 
Also, the subtyping relation $\tau_1 <: \tau_2$ is defined in Figure~\ref{fig:definition of the subtyping relation for univariate AARA}.

\begin{figure}[hbt!]
	\begin{small}
	\begin{center}
		\AxiomC{$\Gamma; q \vdash e: \abra{\tau, p}$}
		\AxiomC{$\tau <: \tau'$}
		\RightLabel{\textsc{(U:Sub)}}
		\BinaryInfC{$\Gamma; q \vdash e: \abra{\tau'. p}$}
		\DisplayProof
		\quad
		\AxiomC{$\Gamma, x: \tau; q \vdash e: B$}
		\AxiomC{$\tau' <: \tau$}
		\RightLabel{\textsc{(U:Sup)}}
		\BinaryInfC{$\Gamma, x: \tau'; q \vdash e: B$}
		\DisplayProof
	\end{center}
	\begin{center}
		\AxiomC{$\Gamma_1; q \vdash e: B$}
		\RightLabel{\textsc{(U:Weak)}}
		\UnaryInfC{$\Gamma_1, \Gamma_2; q \vdash e: B $}
		\DisplayProof
		\;
		\AxiomC{$\Gamma; p_1 \vdash e: \abra{\tau, p_2}$}
		\AxiomC{$q_1 \geq p_1$}
		\AxiomC{$q_1 - q_2 \geq p_1 - p_2$}
		\RightLabel{\textsc{(U:Relax)}}
		\TrinaryInfC{$\Gamma; q_1 \vdash e: \abra{\tau, q_2}$}
		\DisplayProof
	\end{center}
	\end{small}
	\caption{Structural rules of univariate AARA. }
	\label{fig:structural rules of univariate AARA}
\end{figure}

\begin{figure}[hbt!]
	\begin{small}
	\begin{center}
		\AxiomC{}
		\UnaryInfC{$\mathbf{1} <: \mathbf{1}$}
		\DisplayProof
		\quad
		\AxiomC{$b_1 <: c_1$}
		\AxiomC{$b_2 <: c_2$}
		\BinaryInfC{$b_1 + b_2 <: c_1 + c_2$}
		\DisplayProof
		\quad
		\AxiomC{$b_1 <: c_1$}
		\AxiomC{$b_2 <: c_2$}
		\BinaryInfC{$b_1 \times b_2 <: c_1 \times c_2$}
		\DisplayProof
	\end{center}
	\begin{center}
		\AxiomC{$b_1 <: b_2$}
		\AxiomC{$\vec{p_1} \geq \vec{p_2}$}
		\BinaryInfC{$L^{\vec{p_1}} (b_1) <: L^{\vec{p_2}} (b_2)$}
		\DisplayProof
		\quad
		\AxiomC{$C_1 <: B_1$}
		\AxiomC{$B_2 <: C_2$}
		\BinaryInfC{$B_1 \to B_2 <: C_1 \to C_2$}
		\DisplayProof
		\quad
		\AxiomC{$b_1 <: b_2$}
		\AxiomC{$q_1 \geq q_2$}
		\BinaryInfC{$\abra{b_1, q_1} <: (b_2, q_2)$}
		\DisplayProof
	\end{center}
	\end{small}
	\caption{Definition of the subtyping relation $\tau_1 <: \tau_2$ in univariate AARA.}
	\label{fig:definition of the subtyping relation for univariate AARA}
\end{figure}

\subsection{Practical Type Inference}
\label{sec:practical type inference for univariate AARA}

As mentioned above, resource-polymorphic recursion may induce an infinite chain of \textsc{(U:Fun)}, causing type checking/inference to continue indefinitely. 
To work around this problem, the type inference algorithm of AARA in \cite{Hoffmann2010,Hoffmann2017} uses the following heuristic. 
Suppose we are to derive a resource-annotated arrow type $f: B_1 \to B_2$, where $f$ is recursively defined. 
To derive $f: B_1 \to B_2$, suppose we need to use a distinct type $f: C_1 \to C_2$. 
Now consider the type $(C_1 - B_1) \to (C_2 - B_2)$, where $C_{i} - B_{i}$ denotes annotation-wise subtraction. 
For instance, if $C_i \equiv L^{\vec{p}_{C}} (L^{\vec{q}_{C}} (\mathbf{1}))$ and $B_{i} \equiv L^{\vec{p}_{B}} (L^{\vec{q}_{B}} (\mathbf{1}))$, then $C_{i} - B_{i} \equiv L^{\vec{p}_{C} - \vec{p}_{B}} (L^{\vec{q}_{C} - \vec{q}_{B}} (\mathbf{1}))$, where $\vec{p}, \vec{q}$ are vectors of $\mathbb{Q}_{\geq 0}$. 

AARA's type inference algorithm requires the following:
\begin{itemize}
	\item $f: (C_1 - B_1) \to (C_2 - B_2)$ can be derived under the \emph{cost-free} metric;
	\item $(C_1 - B_1) \to (C_2 - B_2)$ has a strictly lower degree than $B_1 \to B_2$ and $C_1 \to C_2$. 
\end{itemize}
In the first condition, the cost-free metric means any computational cost (e.g.~function application, pattern matching, and variable lookups) is zero. 
The cost-free metric (as opposed to the cost metric of the running time) is used to derive $f: (C_1 - B_1) \to (C_2 - B_2)$ because the derivations of $B_1 \to B_2$ and $C_1 \to C_2$ already consider the actual computational cost (i.e.~running time). 

To paraphrase the second condition, $B_1 \to B_2$ and $C_1 \to C_2$ must share the same coefficients for the maximum degree. 
For example, consider $B_1 \equiv (6, 6, 1)$, which represents $6 \cdot \binom{n}{3} + 6 \cdot \binom{n}{2} + 1 \cdot \binom{n}{1} = n^{3}$. 
The degree of $B_1$ is 3 because $B_1$ is a vector of size (that is, $B_1$ represents a cubic polynomial function). 
Therefore, $C_1$ must have the shape $C_1 \equiv (6, p_2, p_3)$ for some rational numbers $p_2, p_3$ so that $C_1 - B_1$ has a strictly lower degree than $B_1$ and $C_1$.
Thanks to this restriction, AARA's inference algorithm is guaranteed to terminate. 
However, it is not complete with respect to the type system in Figure~\ref{fig:syntax-directed rules of univariate AARA}: some polynomial-time recursive functions require an infinite set $\mathcal{T}$ in \textsc{(U:Fun)}. 

One might wonder whether the derivation of $f: (C_1 - B_1) \to (C_2 - B_2)$ under the cost-free metric requires resource-polymorphic recursion; that is, whether we will have a cascade/chain of resource-polymorphic recursion. 
Interestingly enough, according to \cite{Hoffmann2011}, the derivations of cost-free types (e.g.~$f: (C_1 - B_1) \to (C_2 - B_2)$) in univariate AARA only need resource-monomorphic recursion; i.e.~$\mathcal{T}$ in \textsc{(U:Fun)} is a singleton set. 
In other words,  in univariate AARA, the type inference algorithm only applies resource-polymorphic recursion once, which will then be followed by resource-monomorphic recursion. 
There is no justification for using resource-monomorphic recursion instead of resource-polymorphic recursion in the derivations of cost-free types. 
This approach probably works well for all programs that (i) are tested in \cite{Hoffmann2011} and (ii) are typable in univariate AARA even if we use a cascade of resource-polymorphic recursion. 
By contrast, in multivariate AARA, we sometimes have a cascade of resource-polymorphic recursion under the cost-free metric. 

Finally, it is necessary to clarify what we mean by typability in this article. 
In univariate (and multivariate) AARA, we have two notions of typability:
\begin{enumerate}
	\item A proof based on Figure~\ref{fig:syntax-directed rules of univariate AARA}, where $\mathcal{T}$ may be an infinite set, exists;
	\item The type inference algorithm of AARA actually terminates and produces a polynomial cost bound, provided that the user specifies a sufficiently high degree.
\end{enumerate}
In this article, we will use the latter, stronger notion of typability. 
For example, Theorem~\ref{theorem:inherently polynomial time implies the existence of a multivariate annotation} and Theorem~\ref{theorem:inherently polynomial time implies the existence of a multivariate annotation where a user-specified amount of potential is available in the output} use the stronger notion of typability: after each invocation of \textsc{(U:Fun)} (actually, the multivariate version \textsc{(M:Fun)}), the degree strictly decreases. 

\subsection{Examples}
\label{sec:implementation of append and quicksort presented as examples of typing judgments of univariate AARA}

Section~\ref{sec:basics of univariate AARA} presents univariate resource annotations of $\mi{append}$ and $\mi{quicksort}$. 
They are implemented as follows: 
\begin{align*}
	\mi{append} & : \abra{L (\mathbf{1}), L (\mathbf{1})} \rightarrow L (\mathbf{1}) \\
	\mi{append} & := \m{fun} \; f \; \abra{\ell_1, \ell_2} = \m{case} \; \ell_1 \; \{[\,] \hookrightarrow \ell_2 \mid (y::\mi{ys}) \hookrightarrow y :: (f \; \abra{\mi{ys}, \ell_2}) \} \\
	\mi{quicksort} & : L (b) \rightarrow L (b) \\
	\mi{quicksort} & := \m{fun} \; f \; \ell = \m{case} \; \ell \; \{[\,] \hookrightarrow [\,] \mid (y::\mi{ys}) \hookrightarrow e_1 \} \\
	e_1 & \equiv \m{let} \; \abra{\ell_1, \ell_2} = \mi{split} \; \abra{y, \mi{ys}}, x_1 = f \; \ell_1, x_2 = f \; \ell_2 \; \m{in} \; \mi{append} \; \abra{x_1, y :: x_2}. 
\end{align*}
For readability, we have abbreviate $\m{let} \; x_1 = e_1 \; \m{in} \; \m{let} \; x_2 = e_2 \; \m{in} \; e_3$ to $\m{let} \; x_1 = e_1, x_2 = e_2 \; \m{in} \; e_3$. 
Also, we use the notation $\m{let} \; \abra{x_1, x_2} = e_1$ as syntactic sugar for $\m{let} \; x = e_1 \; \m{in} \; \m{case} \; e_1 \; \{\abra{x_1, x_2} \hookrightarrow \cdots \}$. 

Here, $\mi{split} \; \abra{y, \mi{ys}}$ classifies each element in the list $\mi{ys}$ according to whether it is smaller than $y$ or not. 
$\mi{split}$ is defined as
\begin{align*}
	\mi{split} & : (b \times L (b)) \rightarrow (L (b) \times L (b)) \\
	\mi{split} & := \m{fun} \; f \; \abra{x, \ell} = \m{case} \; \ell \; \{[\,] \hookrightarrow \abra{[\,], [\,]} \mid (y::\mi{ys}) \hookrightarrow e_1 \} \\
	e_1 & \equiv \m{let} \; \abra{\ell_1, \ell_2} = f \; \abra{x, \mi{ys}} \; \m{in} \; \m{if} \; y < x \; \m{then} \; \abra{y::\ell_1, \ell_2} \; \m{else} \; \abra{\ell_1, y::\ell_2}. 
\end{align*}


\section{Multivariate Polynomial AARA}

This section describes multivariate polynomial AARA. 
As before, the content and presentation style of this section are attributed to Hoffmann's PhD thesis \cite{Hoffmann2011}. 

\subsection{Notation}

We will introduce the projection and extension operators on multivariate annotations. 
To define projection, suppose we are given a base-type typing context $\Gamma_1 \cup \Gamma_2$. 
If $j \in \mathcal{I} (\Gamma_2)$, the projection $\pi^{\Gamma_1}_{j} (Q)$ is defined as
\begin{equation} \label{eq:definition of projection of a multivariate resource annotation}
	\pi^{\Gamma_1}_{j} (Q) (i) := Q (i, j). 
\end{equation}

Next, to define the extension operator, let $Q$ be a multivariate annotation over $\Gamma_1$. 
Given $r \in \mathcal{I} (\Gamma_2)$, the extension $\eta^{\Gamma_1 \cup \Gamma_2}_{r} (Q)$ is defined as
\begin{equation} \label{eq:definition of extension of a multivariate resource annotation}
	\eta^{\Gamma_1 \cup \Gamma_2}_{r} (Q) (i, j) := 
	\begin{cases}
		Q (i) & \text{if } j = r; \\
		0 & \text{otherwise}. 
	\end{cases}
\end{equation}

Denoted by $\lhd$, additive shift specifies how to split the potential of a list between the head element and the tail. 
We now explain additive shift of multivariate AARA. 
Suppose that we are given a base-type context $\Gamma_1 = \Gamma \cup \{\ell: L (b)\}$, where $\ell$ is the last element without loss of generality. 
Let $Q$ be a multivariate annotation over $\Gamma \cup \{\ell: L (b)\}$. 
$\ell: L (b)$ is split into the head element $x: b$ and the tail $\mi{xs}: L (b)$, yielding a new typing context $\Gamma_2 = \Gamma \cup \{x: b, \mi{xs}: L (b)\}$. 
The resource annotation for the new context is given by the (multivariate) additive shift of $Q$ with respect to $\ell: L (b)$. 
This is denoted by $\lhd_{\ell} Q$ and is defined as 
\begin{equation} \label{eq:definition of multivariate additive shift}
	\lhd (Q) (i, j, k) := 
	\begin{cases}
		Q (i, 0_{b}::k) + Q (i, k) & \text{if } j = 0_{b}; \\
		Q (i, j::k) & \text{otherwise}. 
	\end{cases}
\end{equation}
Here, $i$, $j$, and $k$ are indexes of $\Gamma$, $x: b$, and $\mi{xs}: L (b)$, respectively. 
Recall that $j = 0_{b}$ refers to the base polynomial $\lam v. 1$ for type $b$. 
The notation $j::k$ denotes the concatenation of $j$, which is an index for type $b$, to $k$, which is an index for type $L (b)$. 

Lastly, for variable sharing, it is sufficient to know that for any multivariate resource annotation $Q$ for a base-type typing context $\Gamma \cup \{x_1: b, x_2: b\}$, we have a way to produce $P$ for the unified typing context $\Gamma \cup \{x: b\}$ such that no potential from $Q$ is lost. 
The relation between $Q$ and $P$ will be denoted by $P \curlyveedownarrow_{x_1, x_2} Q$. 
The details of variable sharing are deferred to \cite{Hoffmann2011,Hoffmann2012}. 

\subsection{Type System}
\label{sec:type system of multivariate AARA}

We will consider both (i) the type system with the cost metric being the running time and (ii) the type system under the cost-free metric. 

A typing judgment in multivariate AARA has the form
\begin{equation*}
	\Sigma_{\text{anno}}; \Gamma; Q \vdash e: \abra{\tau, P}. 
\end{equation*}
$\Sigma_{\text{anno}}$ is a resource-annotated arrow-type typing context, where each variable binding is of the form $f: \abra{b_1, Q_1} \rightarrow \abra{b_2, Q_2}$. 
$\Gamma$ is a base-type typing context without resource annotations, and $Q$ is a multivariate resource annotation/polynomial for $\Gamma$. 
Likewise, $P$ is a multivariate resource annotation for $\tau$. 
When $\tau$ is a resource-annotated arrow type, $P$ will just be a constant. 

\paragraph*{Syntax-Directed Rules}

Syntax-directed rules of multivariate AARA are presented in Figure~\ref{fig:syntax-directed rules of multivariate AARA}. 
Anything enclosed by square brackets should be ignored if we use the cost-free metric. 
For instance, in \textsc{(M:Var-Base)}, the resource annotation of the typing context in the conclusion is $Q$ if we use the cost-free metric. 
Otherwise, if the cost metric is the running time, the resource annotation should be $1+Q$. 

\begin{figure}[hbt!]
	\begin{small}
	\begin{center}
		\AxiomC{$Q (0_{b}) = 0$}
		\RightLabel{\textsc{(M:Var-Base)}}
		\UnaryInfC{$\cdot; x: b; [1+] Q \vdash x: \abra{b, Q}$}
		\DisplayProof
		\quad
		\AxiomC{$\tau \equiv B_1 \to B_2$}
		\RightLabel{\textsc{(M:Var-Arrow)}}
		\UnaryInfC{$f: \tau; \cdot; [1] \vdash f: \abra{\tau, 0}$}
		\DisplayProof
	\end{center}

	\begin{center}
		\AxiomC{$Q (0_{b}) = 0$}
		\RightLabel{\textsc{(M:SumL)}}
		\UnaryInfC{$\cdot; x: b_1; [1+]Q \vdash \ell \cdot x : \abra{b_1 + b_2, Q}$}
		\DisplayProof
		\;
		\AxiomC{$Q (0_{b}) = 0$}
		\RightLabel{\textsc{(M:SumR)}}
		\UnaryInfC{$\cdot; x: b_2; [1+]Q \vdash r \cdot x : \abra{b_1 + b_2, Q}$}
		\DisplayProof
	\end{center}

	\begin{center}
		\AxiomC{}
		\RightLabel{\textsc{(M:Unit)}}
		\UnaryInfC{$\cdot; \cdot; 0 \vdash \abra{\,} : \abra{\mathbf{1}, 0}$}
		\DisplayProof
		\quad
		\AxiomC{$Q (0_{b}) = 0$}
		\RightLabel{\textsc{(M:Pair)}}
		\UnaryInfC{$\cdot; x_1: b_1, x_2: b_2; [3+]Q \vdash \abra{x_1, x_2}: \abra{b_1 \times b_2, Q}$}
		\DisplayProof
	\end{center}

	\begin{center}
		\AxiomC{$Q (0_{b}) = 0$}
		\RightLabel{\textsc{(M:Nil)}}
		\UnaryInfC{$\cdot; \cdot; 0 \vdash [\,] : \abra{L (b), Q}$}
		\DisplayProof
		\quad
		\AxiomC{$Q = \lhd_{x_1 :: x_2} (Q')$}
		\RightLabel{\textsc{(M:Cons)}}
		\UnaryInfC{$\cdot; x_1: b, x_2: L (b); [3+]Q \vdash x_1 :: x_2 : \abra{L (b), Q'}$}
		\DisplayProof
	\end{center}

	\begin{prooftree}
		\AxiomC{$\Gamma = \abs{\Gamma}$}
		\AxiomC{$\forall (\abra{b, Q} \to B_2) \in \mathcal{T}. \Sigma, f: \mathcal{T}; \Gamma, x: b; \eta_{0}^{\Gamma, x: b} (Q) \vdash e: B_2$}
		\RightLabel{\textsc{(M:Fun)}}
		\BinaryInfC{$\Sigma; \Gamma; 1 \vdash \m{fun} \; f \; x = e : \abra{\mathcal{T}, 0}$}
	\end{prooftree}

	\begin{prooftree}
		\AxiomC{$(B_1 \to B_2) \in \mathcal{T}$}
		\AxiomC{$B_1 = \abra{b, Q}$}
		\RightLabel{\textsc{(M:App)}}
		\BinaryInfC{$f: \mathcal{T}; x: b; [1+]Q \vdash f \; x : B_2$}
	\end{prooftree}

	\begin{prooftree}
		\AxiomC{$\Sigma; y: b_1, \Gamma; Q [(\ell \cdot y) / x] \vdash e_{\ell}: B$}
		\AxiomC{$\Sigma; y: b_2, \Gamma; Q [(r \cdot y) / x] \vdash e_{r}: B$}
		\RightLabel{\textsc{(M:Case-Sum)}}
		\BinaryInfC{$\Sigma; x: b_1 + b_2, \Gamma; [1+]Q \vdash \m{case} \; x \; \{\ell \cdot y \hookrightarrow e_{\ell} \mid r \cdot y \hookrightarrow e_{r} : B \}$}
	\end{prooftree}
	\begin{prooftree}
		\AxiomC{$\Sigma; x_1: b_1, x_2: b_2, \Gamma; Q \vdash e: B$}
		\RightLabel{\textsc{(M:Case-Prod)}}
		\UnaryInfC{$\Sigma; x: b_1 \times b_2, \Gamma; [1+]Q \vdash \m{case} \; x \; \{\abra{x_1, x_2} \hookrightarrow e \} : B$}
	\end{prooftree}
	\begin{prooftree}
		\AxiomC{$\Sigma; \Gamma; \pi^{\Gamma}_{0} (Q) \vdash e_0 : B$}
		\AxiomC{$\Sigma; \Gamma, y: b, \mi{ys}: L (b); \lhd_{x} (Q) \vdash e_1: B$}
		\RightLabel{\textsc{(M:Case-List)}}
		\BinaryInfC{$\Sigma; x: L (b), \Gamma; [1+]Q \vdash \m{case} \; x \; \{[\,] \hookrightarrow e_0 \mid (y::\mi{ys}) \hookrightarrow e_1 \} : B$}
	\end{prooftree}
	\begin{prooftree}
		\AxiomC{$\Sigma_1; \Gamma_1; \pi^{\Gamma_1}_{0} (Q) \vdash e_1 : \abra{b_1, \pi^{x: b_1}_{0} (R)}$}
		\AxiomC{$\Sigma_2; x: b_1, \Gamma_2; R \vdash e_2 : B$}
		\RightLabel{\textsc{(M:Let)}}
		\BinaryInfC{$\Sigma_1 \cup \Sigma_2; \Gamma_1 \cup \Gamma_2; [1+]Q \vdash \m{let} \; x = e_1 \; \m{in} \; e_2 : B$}
	\end{prooftree}
	\begin{prooftree}
		\AxiomC{$\Sigma; \Gamma, x_1: b, x_2: b; P \vdash e: B$}
		\AxiomC{$Q \curlyveedownarrow_{x_1, x_2} P$}
		\RightLabel{\textsc{(M:Share-Base)}}
		\BinaryInfC{$\Sigma; \Gamma, x: b; Q \vdash \m{share} \; x \; \m{as} \; x_1, x_2 \; \m{in} \; e : B$}
	\end{prooftree}
	\begin{prooftree}
		\AxiomC{$\Sigma, f_1: \tau_1, f_2: \tau_2; \Gamma; Q \vdash e : B$}
		\AxiomC{$\tau, \tau_1, \tau_2 \in \mathcal{T}_{f}$}
		\RightLabel{\textsc{(M:Share-Arrow)}}
		\BinaryInfC{$\Sigma, f: \tau; \Gamma; Q \vdash \m{share} \; f \; \m{as} \; f_1, f_2 \; \m{in} \; e : B$}
	\end{prooftree}
	\end{small}
	\caption{Syntax-directed rules of multivariate AARA.
	Anything enclosed by square brackets is ignored if the cost-free metric is used. 
	For instance, if we have $[1+]Q$, it represents $Q$ in case of the cost-free metric and $1+Q$ in case the cost metric is the running time.}
	\label{fig:syntax-directed rules of multivariate AARA}
\end{figure}

In \textsc{(M:Var-Base)}, $c + Q$, where $c \in \mathbb{Q}_{\geq 0}$ and $Q$ is a multivariate resource annotation, denotes the addition of $x$ to $Q (0)$; i.e.~the constant potential of $Q$. 
The coefficients of all the other indexes, i.e.~$i \neq 0 \in \mathcal{I} (\Gamma)$, remain unchanged. 

In \textsc{(M:Case-Sum)}, $Q [(\ell \cdot y) / x]$ denotes the result of discarding the resource annotations within $Q$ for $x = r \cdot y$, which has a wrong tag. 

In \textsc{(M:Case-List)}, $\pi^{\Gamma}_{i} (Q)$ denotes the projection of $Q$ on $\Gamma$, which is defined in \eqref{eq:definition of projection of a multivariate resource annotation}. 

According to the first premise of \textsc{(M:Let)}, $e_1$ must carry $\pi^{x: b_1}_{0} (R)$ much potential after computation.
This potential is part of $R$ that only concerns size variables inside $x: b_1$. 
However, being multivariate in nature, $R$ may involve a product of both size variables outside $x$ and those inside $x$. 
How can we squeeze such multivariate potential into $Q$ such that it becomes available in $R$? 
To achieve this, we additionally impose this requirement: for each $j \in \mathcal{I} (\Gamma_2)$ such that $j \neq 0_{\Gamma_2}$, we have 
\begin{equation*}
	\Sigma_1; \Gamma_1; \pi^{\Gamma_1}_{j} Q \vdash e_1 : \abra{b_1, \pi^{x:b_1}_{j} (R)}
\end{equation*}
under the cost-free metric.
We could not place this requirement inside the rule \textsc{(M:Let)} simply due to the shortage of space. 

$Q \curlyveedownarrow_{x_1, x_2} P$ in \textsc{(M:Share-Base)} is a multivariate extension of the sharing relation $\tau \curlyveedownarrow (\tau_1, \tau_2)$ in Figure~\ref{fig:definition of the sharing relation of types}. 
In the interest of space, $Q \curlyveedownarrow_{x_1, x_2} P$ is not formally defined in this article. 

Finally, in the premise of \textsc{(M:Share-Arrow)}, $\tau_{i}$ must be a valid resource annotation of $f_{i}$ derived using \textsc{(M:Fun)}.
However, $\tau_1$ does not need to be identical to $\tau_2$.

Like in univariate AARA (Appendix~\ref{sec:univariate polynomial AARA}), the type system of multivariate AARA in Figure~\ref{fig:syntax-directed rules of multivariate AARA} can give rise to an infinite chain of \textsc{(M:Fun)}. 
To circumvent this, the type inference algorithm of multivariate AARA requires adopts the heuristic described in Appendix~\ref{sec:practical type inference for univariate AARA}. 
The only difference is that, under the cost-free metric, multivariate AARA's type inference creates a cascade of \textsc{(M:Fun)} where each invocation of \textsc{(M:Fun)} reduces the degree by one. 
On the other hand, under the cost-free metric, univariate AARA's type inference uses resource-monomorphic recursion rather than resource-polymorphic recursion. 

\paragraph*{Structural Rules}
Structural rules are presented in Figure~\ref{fig:structural rules of multivariate AARA}. 

\begin{figure}[hbt!]
	\begin{small}
	\begin{center}
		\AxiomC{$\Sigma; \Gamma; Q \vdash e: \abra{\tau, P_1}$}
		\AxiomC{$P_1 <: P_2$}
		\RightLabel{\textsc{(M:Sub)}}
		\BinaryInfC{$\Sigma; \Gamma; Q \vdash e : \abra{\tau, P_2}$}
		\DisplayProof
		\quad
		\AxiomC{$\Sigma; \Gamma; Q_1 \vdash e: B$}
		\AxiomC{$Q_2 <: Q_1$}
		\RightLabel{\textsc{(M:Sup)}}
		\BinaryInfC{$\Sigma; \Gamma; Q_2 \vdash e : B$}
		\DisplayProof
	\end{center}
	\begin{prooftree}
		\AxiomC{$\Sigma_1 \subseteq \Sigma_2$}
		\AxiomC{$\Sigma_1; \pi^{\Gamma}_{0} (Q) \vdash e: B$}
		\RightLabel{\textsc{(M:Weak)}}
		\BinaryInfC{$\Sigma_2; \Gamma; Q \vdash e: B$}
	\end{prooftree}
	\begin{prooftree}
		\AxiomC{$\Sigma; \Gamma; P \vdash e: \abra{\tau, P'}$}
		\AxiomC{$Q = P + c$}
		\AxiomC{$Q' = P' + c$}
		\RightLabel{\textsc{(M:Relax)}}
		\TrinaryInfC{$\Sigma; \Gamma; Q \vdash e: \abra{\tau, Q'}$}
	\end{prooftree}
	\end{small}
	\caption{Structural rules of multivariate AARA.}
	\label{fig:structural rules of multivariate AARA}
\end{figure}

The subtyping relationship is determined by the point-wise inequality of coefficients of resource polynomials: $Q <: P$ if and only if $\forall i \in \mathcal{I} (\Gamma). Q (i) \geq P (i)$.

\ifLongVersion

\section{Proof of the Typability Theorem}
\label{sec:proof of the typability theorem}

This section provides detailed proofs of Theorem~\ref{theorem:inherently polynomial time implies the existence of a multivariate annotation} and Theorem~\ref{theorem:inherently polynomial time implies the existence of a multivariate annotation where a user-specified amount of potential is available in the output}. 
The inductive proof of Theorem~\ref{theorem:inherently polynomial time implies the existence of a multivariate annotation} makes use of Theorem~\ref{theorem:inherently polynomial time implies the existence of a multivariate annotation where a user-specified amount of potential is available in the output}, while the inductive proof of Theorem~\ref{theorem:inherently polynomial time implies the existence of a multivariate annotation where a user-specified amount of potential is available in the output} is self-contained; that is, its statement is strong enough for an inductive proof to go through. 

First of all, in Figure~\ref{fig:remaining rules for inherently polynomial time}, we present three inference rules for inherently polynomial time that are missing from Figure~\ref{fig:syntax-directed rules of inherently polynomial time}. 

\begin{figure}[hbt!]
	\begin{small}
	\begin{prooftree}
		\AxiomC{$\Delta_1; \cdot \vdash e_1 \; \m{time}$}
		\AxiomC{$\Delta_2, x \; \m{time}; \Gamma, x: b \vdash e_2 \; \inhpoly{V}$}
		\RightLabel{\textsc{(IP:Let-Arrow)}}
		\BinaryInfC{$\Delta_1 \cup \Delta_2; \Gamma \vdash \m{let} \; x = e_1 \; \m{in} \; e_2 \; \inhpoly{V}$}
	\end{prooftree}
	\begin{prooftree}
		\AxiomC{$\Delta [x \; \m{time} \mapsto x_1 \; \m{time}, x_2 \; \m{time}]; \Gamma, x_1: b_1 \rightarrow b_2, x_2: b_1 \rightarrow b_2 \vdash e \; \inhpoly{V}$}
		\RightLabel{\textsc{(IP:Share-Arrow)}}
		\UnaryInfC{$\Delta; \Gamma, x: b_1 \rightarrow b_2 \vdash \m{share} \; x \; \m{as} \; x_1, x_2 \; \m{in} \; e \; \inhpoly{V}$}
	\end{prooftree}
	\begin{prooftree}
		\AxiomC{$\Delta_1; \Gamma_1 \vdash e \; \inhpoly{V_1}$}
		\AxiomC{$\Delta_1 \subseteq \Delta_2$}
		\AxiomC{$\Gamma_1 \subseteq \Gamma_2$}
		\AxiomC{$V_1 \subseteq V_2$}
		\RightLabel{\textsc{(IP:Weaken-Base)}}
		\QuaternaryInfC{$\Delta_2; \Gamma_2 \vdash e \; \inhpoly{V_2}$}
	\end{prooftree}
	\end{small}
	\caption{Remaining rules for inherently polynomial time}
	\label{fig:remaining rules for inherently polynomial time}
\end{figure}
In \textsc{(IP:Weaken-Base)}, we require $V_2$ to be a subset of base-type variables in $\dom{\Gamma}$. 

\existenceofannotation*

\begin{proof}
The proof proceeds by structural induction on $\Delta; \Sigma; \Gamma \vdash e \; \inhpoly{V}$. 

For base cases, we have \textsc{(IP:Base)}, \textsc{(IP:Arrow)}, \textsc{(IP:Unit)}, and \textsc{(IP:Nil)}. 
In all of them, since the running time is constant, we only need constant potential. 
Thus, the theorem is indeed true. 

Moving on to inductive cases, let us consider \textsc{(IP:SumL)}: 
\begin{prooftree}
	\AxiomC{$\cdot; x:b \vdash x \; \inhpoly{\emptyset}$}
	\RightLabel{\textsc{(IP:SumL)}}
	\UnaryInfC{$\cdot; x:b \vdash \ell \cdot x \; \inhpoly{\emptyset}$}
\end{prooftree}
Again, $\ell \cdot x$ runs in constant time. 
Therefore, it is easy to see that the theorem holds. 
The same reasoning applies to \textsc{(IP:SumR)}, \textsc{(IP:Pair)}, and \textsc{(IP:Cons)}. 

\textsc{(IP:Const)} and \textsc{(IP:Poly)} are straightforward. 

Next, we consider \textsc{(IP:App-Const)}: 
\begin{prooftree}
	\AxiomC{$\Delta = \{x_1 \; \m{const}\}$}
	\RightLabel{\textsc{(IP:App-Const)}}
	\UnaryInfC{$\Delta; x_1: b_1 \rightarrow b_2, x_2: b_1 \vdash x_1 \; x_2 \; \inhpoly{\emptyset}$}
\end{prooftree}
From the inductive hypothesis, it is given that $x_1: b_1 \rightarrow b_2$ can be typed as $\abra{b_1, P} \rightarrow \abra{b_2, Q}$, where $P$ contains constant potential. 
Therefore, $x_1 \; x_2$ can be annotated in such a way that the annotation for $x_2$ contains zero potential; that is, only constant potential is needed. 
Thus, the theorem holds.
The inductive case for \textsc{(IP:App-Poly)} can be proved in the same manner. 

The next case we consider is \textsc{(IP:Case-Sum)}: 
\begin{prooftree}
	\AxiomC{$\Delta; \Gamma, y: b_1 \vdash e_{\ell} \; \inhpoly{V [x \mapsto y]}$}
	\AxiomC{$\Delta; \Gamma, y: b_2 \vdash e_{r} \; \inhpoly{V [x \mapsto y]}$}
	\RightLabel{\textsc{(IP:Case-Sum)}}
	\BinaryInfC{$\Delta; \Gamma, x: b_1 + b_2 \vdash \m{case} \; x \; \{\ell \cdot y \hookrightarrow e_{\ell} \mid r \cdot y \hookrightarrow e_{r}\} \; \inhpoly{V}$}
\end{prooftree}
Applying the inductive hypothesis to the two premises, we obtain that both $e_{\ell}$ and $e_{r}$ are typable in multivariate AARA. 
Let $P_1$ and $P_2$ be the multivariate annotations for the two premises. 
We can derive a multivariate annotation for the conclusion by taking $\max \{P_1 (i), P_2 (i)\}$ for each index/base polynomial $i \in \mathcal{I} (\Gamma \cup \{ y:b_{1,2} \})$. 
Furthermore, it follows from the inductive hypothesis that any variable $v \in \dom{\Gamma} \cup \{y\} \setminus V [x \mapsto y]$ has zero potential in both $P_1$ and $P_2$. 
Consequently, $\max \{P_1 (i), P_2 (i)\} = \max \{0, 0\} = 0$ if index $i$ involves a size variable from $v$. 
This establishes the theorem. 
\textsc{(IP:Case-Prod)} can be proved in the same fashion. 

We next consider \textsc{(IP:Case-List)}: 
\begin{prooftree}
	\AxiomC{$\Delta; \Gamma \vdash e_0 \; \inhpoly{V \setminus \{x\}}$}
	\AxiomC{$\Delta; \Gamma, x_1: b, x_2: L (b) \vdash e_1 \; \inhpoly{V [x \mapsto x_1, x_2]}$}
	\RightLabel{\textsc{(IP:Case-List)}}
	\BinaryInfC{$\Delta; \Gamma, x: L (b) \vdash \m{case} \; x \; \{[\,] \hookrightarrow e_0 \mid (x_1::x_2) \hookrightarrow e_1 \} \; \inhpoly{V}$}
\end{prooftree}
Assume that the inductive hypothesis gives us annotations $P_0$ and $P_1$ for the two premises of \textsc{(IP:Case-List)}. 
From $P_1$, we can construct an annotation $P_2$ over the typing context $\Gamma \cup \{x: L (b)\}$ such that $P_1 = \lhd (P_2)$. 
It follows from the definition of $\lhd$ that
\begin{equation} \label{eq:definition of the multivariate additive shift in the inductive case for (IP:Case-List) in the proof for the first typability theorem}
	\lhd (P_2) (i, j, k) := 
	\begin{cases}
		P_2 (i, 0_{b} :: k) + P_2 (i, k) & \text{if } j = 0_{b}; \\
		P_2 (i, j::k) & \text{otherwise}. 
	\end{cases}
\end{equation}
We will define $P_2$ as follows. 
For any variable $v \in (\dom{\Gamma} \cup \{x\}) \setminus V$, the coefficient of a base polynomial in $P_2$ that involves $v$'s size variables is set to $0$.
All the other coefficients are set to the largest coefficient that appears in $P_1$. 
Consequently, by construction, every $v \in (\dom{\Gamma} \cup \{x\}) \setminus V$ contains zero potential in $P_2$.
Let $M$ denote the largest coefficient in $P_1$. 

It remains to ascertain that $\lhd (P_2)$ is a subtype of $P_1$. 
We will first consider the first clause of \eqref{eq:definition of the multivariate additive shift in the inductive case for (IP:Case-List) in the proof for the first typability theorem}. 
Suppose that $P_1 (i, j, k) > 0$ for some $i, j, k$. 
This implies that $(i, j, k)$ does not involve any variables from $(\dom{\Gamma} \cup \{x_1, x_2\}) \setminus V[x \mapsto x_1, x_2]$; otherwise, $P_1 (i, j, k) = 0$ due to the inductive hypothesis of the second premise. 
If $j = 0_{b}$, then $P_2 (i, k) = M$ holds because
\begin{equation*}
	\begin{split}
	& (i, 0_{b}, k) \text{ contains size variables of some } v \in (\dom{\Gamma} \cup \{x_1, x_2\}) \setminus V [x \mapsto x_1, x_2] \\
	& \Longleftrightarrow (i, k) \text{ contains size variables of some } v \in (\dom{\Gamma} \cup \{x\}) \setminus V. 
	\end{split}
\end{equation*} 
Thus, $\lhd (P_2) (i, j, k) \geq P_1 (i, j, k)$ in this case.
Conversely, if $j \neq 0_{b}$, $P_2 (i, j::k)$ is guaranteed to be $M$. 
Hence, $\lhd (P_2) (i, j, k) \geq P_1 (i, j, k)$ is true in this case as well.
Therefore, $\lhd (P_2)$ is indeed a subtype of $P_1$. 
Finally, we can easily combine $P_2$ with $P_0$ to yield a desirable annotation for the conclusion of \textsc{(IP:Case-List)}. 

Next is \textsc{(IP:Rec)}: 
\begin{prooftree}
	\AxiomC{$\Delta; \Gamma \vdash e_0 \; \inhpoly{V}$}
	\AxiomC{$\cdot; y:b, \mi{ys}: L (b), z: b_2 \vdash e_1 \; \inhpoly{\{y, \mi{ys}\}}$}
	\RightLabel{\textsc{(IP:Rec)}}
	\BinaryInfC{$\Delta; \Gamma, x: L (b) \vdash \m{rec} \; x \; \{[\,] \hookrightarrow e_0 \mid (y::\mi{ys}) \; \m{with} \; z \hookrightarrow e_1\} \; \inhpoly{V \cup \{x\}}$}
\end{prooftree}
Appealing to the inductive hypothesis, we know that $e_0$ and $e_1$ can be annotated as
\begin{equation*}
	\Gamma; P_0 \vdash e_0 : \abra{b_2, Q_0} \qquad y:b, \mi{ys}: L (b), z: b_2; P_1 \vdash e_1 : \abra{b_2, Q_1}, 
\end{equation*}
where $P_1$ assigns zero potential to $z$ because of $e_1 \; \inhpoly{\{y, \mi{ys}\}}$. 
From $P_0$ and $P_1$, it is possible to construct a multivariate annotation for the conclusion of \textsc{(IP:Rec)}. 
Furthermore, any variable in $\dom{\Gamma} \setminus V$ stores zero potential. 
The details of this construction are presented in Lemma~\ref{lemma:construction of a multivariate annotation for primitive recursion}. 

The next case we consider is \textsc{(IP:Let-Base)}: 
\begin{prooftree}
	\AxiomC{$\Delta_1; \Sigma_1; \Gamma_1 \vdash e_1 \; \inhpoly{V_1}$}
	\AxiomC{$\Delta_2; \Gamma_2, x: b_1 \vdash e_2 \; \inhpoly{V_2}$}
	\RightLabel{\textsc{(IP:Let-Base)}}
	\BinaryInfC{$\Delta_1 \cup \Delta_2; \Sigma_1 \cup \Gamma_1 \cup \Gamma_2 \vdash \m{let} \; x = e_1 \; \m{in} \; e_2 \; \inhpoly{V_3}$}
\end{prooftree}
The inductive hypothesis tells us that there exists a multivariate annotation
\begin{equation*}
	\Gamma_2, x: b; P_2 \vdash e_2 : \abra{b_2, Q_2}. 
\end{equation*}
Let $P$ be a multivariate annotation of the entire let-binding that we aim to derive. 
According to \textsc{(M:Let)}, $P$ must satisfy the following two conditions: 
\begin{enumerate}
	\item $\Sigma_1; \Gamma_1; \pi^{\Gamma_1}_{0} (P) \vdash e_1 : \abra{b_1, \pi^{x: b_1}_{0} (P_2)}$ holds under the cost metric of the running time;
	\item For all $i \neq 0_{b} \in \mathcal{I} (\Gamma_2)$, we have $\Sigma_1; \Gamma_1; \pi^{\Gamma_1}_{i} (P) \vdash e_1 : \abra{b_1, \pi^{x:b_1}_{i} (P_2)}$ under the cost-free metric. 
\end{enumerate}

We will now conduct case analysis on whether $x \in V_2$. 
Assume $x \notin V_2$. 
It follows from the inductive hypothesis that $x$ contains zero potential in $P_2$. 
As a result, $\pi^{x: b_1}_{0} (P_2)$ in the first condition above is essentially constant potential. 
Applying the inductive hypothesis to $e_1$, we obtain a multivariate annotation of $e_1$ under the cost metric of the running time. 
If the output of $e_1$ in this annotation contains less potential than $\pi^{x: b_1}_{0} (P_2)$, we can always inject constant potential into the annotation of $e_1$. 
The resulting annotation will serve as a suitable $\pi^{\Gamma_1}_{0} (P)$. 
Likewise, $\pi^{x:b_1}_{i} (P_2)$ in the second condition above, where $i \neq 0_{b}$, is constant potential. 
Therefore, a suitable $\pi^{\Gamma_1}_{i} (P)$ can be constructed (note that the second condition concerns the cost-free metric). 

Lastly, we need to ensure that any $v \in \dom{\Gamma_1} \cup \dom{\Gamma_2} \setminus V_3$ has zero potential in $P$. 
If $v \in \dom{\Gamma_2} \setminus V_2$, since $\pi^{\Gamma_1}_{i} (P)$ will be constant potential, it is impossible for any base polynomial with a size variable of $v$ to have a non-zero coefficient. 
The same reasoning applies to the case of $v \in \dom{\Gamma_1} \setminus V_1$. 

Conversely, if $x \in V_2$, $e_2$ demands potential from $x = e_1$, meaning that we need a multivariate annotation for $e_1$ with some potential available in the output of $e_1$. 
Although the theorem gives us \emph{some} multivariate annotation of $e_1$, we have no guarantee that the output of $e_1$ contains a desired arbitrary amount of potential. 
This is where Theorem~\ref{theorem:inherently polynomial time implies the existence of a multivariate annotation where a user-specified amount of potential is available in the output} comes in. 
By Theorem~\ref{theorem:inherently polynomial time implies the existence of a multivariate annotation where a user-specified amount of potential is available in the output}, we can derive a cost-free annotation of $e_1$ such that its output stores a desirable amount of potential. 
Finally, summing this cost-free annotation with the annotation given by Theorem~\ref{theorem:inherently polynomial time implies the existence of a multivariate annotation}, we obtain a suitable $\pi^{\Gamma_1}_{0} (P)$ in the first condition above. 
Regarding $\pi^{\Gamma_1}_{i} (P)$ for $i \neq 0_{b}$ in the second condition, again, we resort to Theorem~\ref{theorem:inherently polynomial time implies the existence of a multivariate annotation where a user-specified amount of potential is available in the output}. 

Lastly, we need to ensure that any $v \in \dom{\Gamma_1} \cup \dom{\Gamma_2} \setminus V_3$ contains zero potential in $P$. 
As we assume $x \in V_2$, we have $V_3 = \dom{\Gamma_1} \cup (V_2 \setminus \{x\})$. 
Therefore, it is guaranteed that $v \in \dom{\Gamma_2} \setminus V_2$. 
Due to the inductive hypothesis of the theorem on $e_2$, $v$ contains zero potential in $P_2$. 
Thus, for any $i \in \mathcal{I} (\Gamma)$ that involves a size variable of $v$, $\pi^{x: b_1}_{i} (P_2)$ is essentially constant potential. 
Therefore, $\pi^{\Gamma_1}_{i} (P)$ will be constant potential as well. 
Consequently, $v$ will contain zero potential in $P$, which is the annotation of the whole let-binding. 

The final inductive case we consider is \textsc{(IP:Share-Base)}: 
\begin{prooftree}
	\AxiomC{$\Delta; \Gamma, x_1: b, x_2: b \vdash e \; \inhpoly{V [x \mapsto x_1, x_2]}$}
	\RightLabel{\textsc{(IP:Share-Base)}}
	\UnaryInfC{$\Delta; \Gamma, x: b \vdash \m{share} \; x \; \m{as} \; x_1, x_2 \; \m{in} \; e \; \inhpoly{V}$}
\end{prooftree}
By the inductive hypothesis, the premise of \textsc{(IP:Share-Base)} can be assigned a multivariate annotation. 
Using this, we can build a multivariate annotation for the conclusion by merging the coefficients for those base polynomials that mention either $x_1$ or $x_2$. 
Moreover, if $x \notin V$, then $x_1, x_2 \notin V [x \mapsto x_1, x_2]$. 
According to the inductive hypothesis, both $x_1$ and $x_2$ store zero potential.
As a result, in the multivariate annotation of the rule's conclusion, $x$ stores zero potential as well.
This establishes the theorem. 

The inductive cases for \textsc{(IP:Let-Arrow)} and \textsc{(IP:Share-Arrow)} are straightforward to prove since they do not affect base-type variables; hence, we will not formally present their proof. 
Likewise, it is immediate to prove \textsc{(IP:Weaken-Base)}. 
This concludes the proof. 
\end{proof}

\typablilitytheorem*

\begin{proof}
Like in the above proof of Theorem~\ref{theorem:inherently polynomial time implies the existence of a multivariate annotation}, this proof will proceed by induction on $\Delta; \Gamma \vdash e \; \inhpoly{V}$. 
Although this document does not present inference rules for the cost-free annotations in multivariate AARA, the absence of their formal presentation should not affect the typability proof. 

As before, nothing interesting happens in the base cases: \textsc{(IP:Base)}, \textsc{(IP:Arrow)}, \textsc{(IP:Unit)}, and \textsc{(IP:Nil)}. 
Likewise, it is straightforward to prove the inductive cases of \textsc{(IP:SumL)}, \textsc{(IP:SumR)}, and \textsc{(IP:Pair)}. 

By contrast, \textsc{(IP:Cons)} is nontrivial: 
\begin{prooftree}
	\AxiomC{$\cdot; x_1: b \vdash x_1 \; \inhpoly{\emptyset}$}
	\AxiomC{$\cdot; x_2: L (b) \vdash x_2 \; \inhpoly{\emptyset}$}
	\RightLabel{\textsc{(IP:Cons)}}
	\BinaryInfC{$\cdot; x_1: b, x_2: L (b) \vdash x_1::x_2 \; \inhpoly{\emptyset}$}
\end{prooftree}
Suppose that we would like $x_1 :: x_2$ to be annotated with $Q$. 
If $Q$ is univariate, it is clear that there exists $P$ such that 
\begin{equation*}
	x_1: b, x_2: L (b); P \vdash x_1 :: x_2 : \abra{L (b), Q}
\end{equation*}
and more importantly, $P$ and $Q$ only differ in lower-degree terms. 
This is because $\vec{q}$ and $\lhd (\vec{q})$ have the same coefficient of the maximum degree for any potential vector $\vec{q}$. 
If $Q$ is multivariate, the proof is more complicated and is deferred to Lemma~\ref{lemma:additive shift of multivariate potential preserves uniformity of multivariate annotations}. 

As before, it is straightforward to prove \textsc{(IP:Const)} and \textsc{(IP:Poly)}. 

We next consider \textsc{(IP:App-Const)}: 
\begin{prooftree}
	\AxiomC{$\Delta = \{x_1 \; \m{const}\}$}
	\RightLabel{\textsc{(IP:App-Const)}}
	\UnaryInfC{$\Delta; x_1: b_1 \rightarrow b_2, x_2: b_1 \vdash x_1 \; x_2 \; \inhpoly{\emptyset}$}
\end{prooftree}
From the inductive hypothesis of $x_1 \; \m{const}$, we know that the annotation of $x_1$'s input is identical to that of $x_1$'s output when restricting our attention to base polynomials of degree $d$. 
Therefore, the claim holds. 
\textsc{(IP:App-Poly)} can be proved straightforwardly. 

Next is \textsc{(IP:Case-Sum)}: 
\begin{prooftree}
	\AxiomC{$\Delta; \Gamma, y: b_1 \vdash e_{\ell} \; \inhpoly{V [x \mapsto y]}$}
	\AxiomC{$\Delta; \Gamma, y: b_2 \vdash e_{r} \; \inhpoly{V [x \mapsto y]}$}
	\RightLabel{\textsc{(IP:Case-Sum)}}
	\BinaryInfC{$\Delta; \Gamma, x: b_1 + b_2 \vdash \m{case} \; x \; \{\ell \cdot y \hookrightarrow e_{\ell} \mid r \cdot y \hookrightarrow e_{r}\} \; \inhpoly{V}$}
\end{prooftree}
From the inductive hypothesis, we have $P_1$ and $P_2$ satisfying
\begin{equation*}
	\Gamma, y: b_1; P_1 \vdash e_{\ell} : \abra{b, Q} \qquad \Gamma, y: b_2; P_2 \vdash e_{r} : \abra{b, Q}. 
\end{equation*}
To construct a desirable multivariate annotation for the conclusion, we simply need to integrate $P_1$ and $P_2$ by taking the maximum coefficient for each base polynomial. 
\textsc{(IP:Case-Prod)} can be proved similarly. 

Next, we consider \textsc{(IP:Case-List)}: 
\begin{prooftree}
	\AxiomC{$\Delta; \Gamma \vdash e_0 \; \inhpoly{V \setminus \{x\}}$}
	\AxiomC{$\Delta; \Gamma, x_1: b, x_2: L (b) \vdash e_1 \; \inhpoly{V [x \mapsto x_1, x_2]}$}
	\RightLabel{\textsc{(IP:Case-List)}}
	\BinaryInfC{$\Delta; \Gamma, x: L (b) \vdash \m{case} \; x \; \{[\,] \hookrightarrow e_0 \mid (x_1::x_2) \hookrightarrow e_1 \} \; \inhpoly{V}$}
\end{prooftree}
Lemma~\ref{lemma:pattern matching on lists preserves uniformity of multivariate annotations} provides details of how to construct an annotation for $\m{case} \; x \; \{[\,] \hookrightarrow e_0 \mid (x_1::x_2) \hookrightarrow e_1 \}$. 

The next inductive case is \textsc{(IP:Rec)}: 
\begin{prooftree}
	\AxiomC{$\Delta; \Gamma \vdash e_0 \; \inhpoly{V}$}
	\AxiomC{$\cdot; y:b, \mi{ys}: L (b), z: b_2 \vdash e_1 \; \inhpoly{\{y, \mi{ys}\}}$}
	\RightLabel{\textsc{(IP:Rec)}}
	\BinaryInfC{$\Delta; \Gamma, x: L (b) \vdash \m{rec} \; x \; \{[\,] \hookrightarrow e_0 \mid (y::\mi{ys}) \; \m{with} \; z \hookrightarrow e_1\} \; \inhpoly{V \cup \{x\}}$}
\end{prooftree}
Suppose that the inductive hypothesis of the second premise yields
\begin{equation*}
	y: b, \mi{ys}: L (b), z: b_2; P_1 \vdash e_1 : \abra{b_2, Q}, 
\end{equation*}
where $Q \; \m{uniform} (d, n)$ and $P_1 \; \m{uniform} (d, n, \{y, \mi{ys}\})$.
It follows that $\pi^{z: b_2}_{0} (P) - Q$, which will be paid by resource-polymorphic recursion, has a strictly lower degree than $d$. 
Therefore, resource-polymorphic recursion will derive a multivariate annotation for $z$ such that it has the annotation $\pi^{z: b_2}_{0} (P) - Q$. 
To formally prove this, we should have performed strong induction on $d$ as well as structural induction on $\Delta; \Sigma; \Gamma \vdash e \; \inhpoly{V}$. 
However, since it might make the proof overly complicated and thereby confuse the readers, we decided to hide this detail until now. 
Note that if $d = 0$, the theorem clearly holds.
If we only demand $\Sigma; \Gamma \vdash e: b$ to have constant potential in the output, it is immediate to type $e$'s context under the cost-free metric. 

Assume that resource-polymorphic recursion yields $P_{2, i}$ for $i \in \mathcal{I} (\{y: b, \mi{ys}: L (b)\})$ defined by 
\begin{alignat*}{2}
	\Gamma, x: L (b); P_{2, 0} & \vdash e: \abra{b_2, \pi^{z: b_2}_{0} (P) - Q} && \qquad \text{if } i = 0; \\
	\Gamma, x: L (b); P_{2, i} & \vdash e: \abra{b_2, \pi^{z: b_2}_{i} (P)} && \qquad \text{otherwise}, 
\end{alignat*}
where $e$ refers to the entire primitive recursion. 
Note that all of $\pi^{z: b_2}_{0} (P) - Q$ and $\pi^{z: b_2}_{i} (P)$ for any $i \neq 0$ have degrees lower than $d$; hence, the existence of their annotation can be proved by strong induction on $d$. 
Our goal is to build an annotation $P$ for $e$ from $P_1$ and $P_{2, i}$. 
The details of $P$'s construction are provided in Lemma~\ref{lemma:construction of a multivariate annotation for primitive recursion with resource-polymorphic recursion}.  

Next is \textsc{(IP:Let-Base)}: 
\begin{prooftree}
	\AxiomC{$\Delta_1; \Sigma_1; \Gamma_1 \vdash e_1 \; \inhpoly{V_1}$}
	\AxiomC{$\Delta_2; \Gamma_2, x: b_1 \vdash e_2 \; \inhpoly{V_2}$}
	\RightLabel{\textsc{(IP:Let-Base)}}
	\BinaryInfC{$\Delta_1 \cup \Delta_2; \Sigma_1 \cup \Gamma_1 \cup \Gamma_2 \vdash \m{let} \; x = e_1 \; \m{in} \; e_2 \; \inhpoly{V_3}$}
\end{prooftree}
Suppose that the inductive hypothesis of the second premise yields
\begin{equation*}
	\Gamma_2, x: b_1; P_2 \vdash e_2 : \abra{b_2, Q}, 
\end{equation*}
where $Q \; \m{uniform} (d, n)$. 
As we currently work with the cost-free metric, the premises of \textsc{(M:Let)} can be simplified to
\begin{equation} \label{eq:simplified premise for resource-typing a let-binding under the cost-free metric in multivariate AARA}
	\forall i \in \mathcal{I} (\Gamma_2). \Sigma_1; \Gamma_1; \pi^{\Gamma_1}_{i} (P) \vdash e_1 : \abra{b_1, \pi^{x: b_1}_{i} (P_2)}
\end{equation}
under the cost-free metric. 
Here, $P$ is a multivariate annotation of the whole let-binding---it is what we aim to derive in this proof. 

We now conduct case analysis. 
If $x \in V_2$, it is fairly easy to establish the claim. 
First of all, we fix $i \in \mathcal{I} (\Gamma_2)$. 
From $\pi^{x: b_1}_{i} (P_2)$, the inductive hypothesis allows us to create a suitable $\pi^{\Gamma_1}_{i} (P)$ that satisfies \eqref{eq:simplified premise for resource-typing a let-binding under the cost-free metric in multivariate AARA}. 
If $\pi^{x: b_1}_{i} (P_2)$ is not uniform at the maximum degree (this is required by the theorem), we can easily create a uniform annotation that is a subtype of $\pi^{x: b_1}_{i} (P_2)$. 
It remains to ensure $P \; \m{uniform} (d, n, V_3)$. 
The proof of this case is identical to the proof of the next case; hence, we omit it. 

Next, assume $x \notin V_2$. 
As above, we can create a multivariate annotation for the whole let-binding. 
It remains to ensure that $P$ is a uniform annotation. 
Let $v$ be a variable drawn from $\dom{\Gamma_1 \cup \Gamma_2} \setminus V_3$, where $V_3 = V_1 \cup V_2$ due to the assumption $x \notin V_2$. 
If $v \in \dom{\Gamma_2} \setminus V_2$, the inductive hypothesis of $e_2$ already implies $P_2 \; \m{uniform} (d, n, V_2)$. 
It is easy to see that the first condition of Definition~\ref{def:uniform resource annotations for typing contexts in multivariate AARA} holds for $P$ with respect to variable $v$. 

To establish the third condition (and also the second condition) of Definition~\ref{def:uniform resource annotations for typing contexts in multivariate AARA}, consider 
\begin{equation*}
	i = (h, g) \in \mathcal{I} (\Gamma_2 \setminus \{v: b\}) \times \mathcal{I} (\{v: b\}), 
\end{equation*}
where $h = 0_{\Gamma_2 \setminus \{v: b\}}$ and $\degree{g} = d$. 
Due to the inductive hypothesis $P_2 \; \m{uniform} (d, n, V_2)$, $\pi^{x: b_1}_{i} (P_2)$ must essentially be constant potential of $n$.
Hence, $\pi^{\Gamma_1}_{i} (P)$ in \eqref{eq:simplified premise for resource-typing a let-binding under the cost-free metric in multivariate AARA} is constant potential of $n$ as well. 
Therefore, $P (0_{\Gamma_1}, i) = n$ holds, thereby establishing the third condition of $P \; \m{uniform} (d, n, V)$. 
The second condition of $P \; \m{uniform} (d, n, V)$ can be established by the same reasoning. 
This proof is also applicable to the previous case, where $x \in V_2$ and $v \in \dom{\Gamma_2} \setminus V_2$. 

Lastly, let us consider the final case of the case analysis: $x \notin V_2$ and $v \in \dom{\Gamma_1} \setminus V_1$. 
Due to the inductive hypothesis $P_2 \; \m{uniform} (d, n, V_2)$, $\pi^{x:b_1}_{i} (P_2)$ has degree at most $d$. 
Therefore, it follows from the inductive hypothesis of $e_1$ that, for any base polynomial $r$ in $\pi^{\Gamma_1}_{i} (P)$ with a non-zero coefficient, $r$'s projection on $v$ must have degree at most $d$.
This establishes the first condition of $P \; \m{uniform} (d, n, V)$ with respect to $v$. 

Furthermore, if $i \neq 0_{\Gamma_2}$, then $\pi^{x: b_1}_{i} (P_2)$ has degree at most $d - 1$. 
This means that $\pi^{\Gamma_1}_{i} (P)$ cannot have a base polynomial whose projection on $v$ has degree $d$.
It can only be degree $d - 1$ at largest. 
Consequently, we have established the second condition of $P \; \m{uniform} (d, n, V)$ with respect to $v$. 

Finally, consider $i = 0_{\Gamma_2}$. 
$\pi^{x: b_1}_{0} (P_2)$ satisfies $\m{uniform} (d, n)$.  
Therefore, by the inductive hypothesis of $e_1$, we have $\pi^{\Gamma_1}_{0} (P) \; \m{uniform} (d, n, V_1)$. 
This establishes the third condition of Definition~\ref{def:uniform resource annotations for typing contexts in multivariate AARA}. 

Finally, we will discuss \textsc{(IP:Share-Base)}: 
\begin{prooftree}
	\AxiomC{$\Delta; \Gamma, x_1: b, x_2: b \vdash e \; \inhpoly{V [x \mapsto x_1, x_2]}$}
	\RightLabel{\textsc{(IP:Share-Base)}}
	\UnaryInfC{$\Delta; \Gamma, x: b \vdash \m{share} \; x \; \m{as} \; x_1, x_2 \; \m{in} \; e \; \inhpoly{V}$}
\end{prooftree}
Assume the inductive hypothesis of the premise yields 
\begin{equation*}
	\Gamma, x_1: b, x_2: b; P \vdash e : \abra{b_2, Q}, 
\end{equation*} 
where $Q$ is specified by a user and satisfies $Q \; \m{uniform} (d, n)$. 
From $P$, we can easily obtain a multivariate annotation $P'$ for the context $\Gamma \cup \{x: b\}$. 
If $x \in V$, the theorem is true for $P'$. 
Conversely, if $x \notin V$, we need to establish $\pi^{x:b}_{0} (P') \; \m{uniform} (d, n)$.
Although the inductive hypothesis gives us $\pi^{x:b}_{0} (P) \; \m{uniform} (d, n)$, it is not always the case that $\pi^{x:b}_{0} (P') \; \m{uniform} (d, n)$ holds. 
To circumvent this problem, we impose the restriction that no variable sharing is permitted on those variables that are outside $V$ (or their constituent variables derived by pattern matching). 

The remaining cases (i.e.~\textsc{(IP:Let-Arrow)}, \textsc{(IP:Share-Arrow)}, and \textsc{(IP:Weaken-Base)}) are immediate to prove. 
This concludes the proof. 
\end{proof}

\begin{lemma}[Construction of a multivariate annotation for primitive recursion]
\label{lemma:construction of a multivariate annotation for primitive recursion}
Consider a primitive recursion $e$ of the form
\begin{equation*}
	\Sigma; \Gamma, x: L (b) \vdash \m{rec} \; x \; \{[\,] \hookrightarrow e_0 \mid (y::\mi{ys}) \; \m{with} \; z \hookrightarrow e_1 \} : b_2. 
\end{equation*}
Suppose we have $P_{i}, Q_{i}$ for $i \in \{0, 1\}$ such that
\begin{equation*}
	\Gamma; P_0 \vdash e_0 : \abra{b_2, Q_0} \qquad y: b, \mi{ys}: L (b), z: b_2; P_1 \vdash e_1 : \abra{b_2, Q_1}
\end{equation*}
under the cost metric of the running time. 
Here, $z$ contains zero potential in $P_1$. 
Then there exists a multivariate annotation $R$ such that $\Sigma; \Gamma, x: L (b); R \vdash e : \abra{b_2, Q_1}$ holds under the cost metric of the running time. 

Furthermore, assume $\Delta; \Sigma; \Gamma \vdash e_0 \; \inhpoly{V}$. 
For any $v \in \dom{\Gamma} \setminus V$, if $v$ contains zero potential in $P_0$, then $v$ contains zero potential in $R$ as well. 
\end{lemma}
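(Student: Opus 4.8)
The plan is to reduce the statement to typing the general-recursive encoding of primitive recursion from Section~\ref{sec:formulation of inherently polynomial time}, namely
$\m{fun}\; f\; \abra{x, \Gamma} = \m{case}\; x\; \{[\,] \hookrightarrow e_0 \mid y::\mi{ys} \hookrightarrow \m{share}\; \mi{ys}\; \m{as}\; \mi{ys}_1, \mi{ys}_2\; \m{in}\; \m{let}\; z = f\; \abra{\mi{ys}_1, \Gamma}\; \m{in}\; e_1\}$,
and to type $f$ with a \emph{single} resource-annotated arrow type $\abra{L (b) \times b_\Gamma,\, R} \to \abra{b_2, Q_1}$, i.e.\ by resource-\emph{monomorphic} recursion (here $b_\Gamma$ is the product type of $\Gamma$'s base-type variables, and $R$ is the sought annotation viewed over the split context $\Gamma, x: L (b)$). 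The reason monomorphic recursion suffices here, in contrast to Lemma~\ref{lemma:construction of a multivariate annotation for primitive recursion with resource-polymorphic recursion}, is exactly the hypothesis that $z$ carries zero potential in $P_1$: the recursive call then imposes no nonconstant demand on the output potential of $f$, so there is no degree to pay down and the output annotation $\abra{b_2, Q_1}$ can be reused verbatim at every recursion depth. Throughout we work with the running-time metric, so the $[1+]$/$[3+]$ offsets in Figure~\ref{fig:syntax-directed rules of multivariate AARA} are active.

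The core of the argument is the construction of $R$ as a loop invariant. I would split $R$ into (i) a part over $\Gamma$, taken to be exactly the $\Gamma$-part of $P_0$ (projected at index $0_{L (b)}$ of $x$), which is legitimate because in the $[\,]$-branch the recursion just returns $e_0$, and $e_1$ — hence $P_1$ — mentions none of $\Gamma$; and (ii) a part over $x: L (b)$, which must be a fixed point of the one-step operation dictated by \textsc{(M:Case-List)}, \textsc{(M:Share-Base)} and \textsc{(M:App)}: take the additive shift $\lhd_x (R)$ onto $\{y:b, \mi{ys}:L (b)\} \cup \Gamma$, share $\mi{ys}$ into $\mi{ys}_1$ (fed to the recursive call, hence required to carry $R$ again) and $\mi{ys}_2$, and out of the head-potential on $y$ together with $\mi{ys}_2$'s share cover both the per-step constant cost and the $\{y, \mi{ys}\}$-demands of $P_1$. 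Since $P_1$ has bounded degree and no polymorphic recursion is involved, this is a finite system of linear constraints on the coefficients of $R$'s list-annotation, satisfiable by taking each coefficient large enough; I would additionally set every cross-coefficient of $R$ linking $x$ to a variable of $\Gamma$ to $0$, which is consistent precisely because the iteration never touches $\Gamma$.

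With $R$ fixed, the two branches are glued by the standard rules. For the $[\,]$-branch I retype $e_0$ so that its output annotation becomes $Q_1$ instead of $Q_0$, using \textsc{(M:Sub)}/\textsc{(M:Relax)} (lowering the output to $Q_1$ and topping up the context with constant potential as needed); the resulting context annotation is a subtype of $R$ restricted to $\Gamma$. For the $y::\mi{ys}$-branch the recursive call delivers $z : \abra{b_2, Q_1}$, which I weaken by subtyping (\textsc{(M:Sup)}) to $z : \abra{b_2, 0}$ — legitimate since $Q_1 \geq 0$ and, by hypothesis, $e_1$ needs only zero potential on $z$ — and then I apply the given derivation $y:b, \mi{ys}:L (b), z:b_2; P_1 \vdash e_1 : \abra{b_2, Q_1}$; assembling via \textsc{(M:Share-Base)}, \textsc{(M:Let)}, \textsc{(M:App)}, \textsc{(M:Case-List)} and \textsc{(M:Fun)} yields $\Sigma; \Gamma, x: L (b); R \vdash e : \abra{b_2, Q_1}$. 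For the ``furthermore'' claim, if $\Delta; \Sigma; \Gamma \vdash e_0 \; \inhpoly{V}$ and $v \in \dom{\Gamma} \setminus V$ is zero-potential in $P_0$, then the only coefficients of $R$ involving $v$'s size variables lie in $R$'s $\Gamma$-part (no $x$–$v$ cross terms were introduced), and that part is $P_0$'s $\Gamma$-part, in which $v$ is zero-potential by assumption — so $v$ is zero-potential in $R$. The step I expect to be the main obstacle is showing the list-annotation of $R$ can genuinely be chosen as a fixed point of the additive-shift-plus-sharing step while simultaneously paying the $\{y,\mi{ys}\}$-part of $P_1$ and the per-step cost at every depth \emph{and} introducing no cross term between $x$ and $\Gamma$ (which would break the zero-potential conclusion); reconciling $Q_0$ with $Q_1$ in the base branch is, by contrast, only a matter of lower-order and constant terms and is absorbed by \textsc{(M:Sub)}/\textsc{(M:Relax)}.
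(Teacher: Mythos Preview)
Your proposal is correct and follows essentially the same route as the paper: encode the primitive recursion via general recursion, type it with a single (resource-\emph{monomorphic}) arrow type because $z$ carries zero potential in $P_1$, and build $R$ with the $x$-part and the $\Gamma$-part completely separated (no cross terms), so that the zero-potential claim for $v \in \dom{\Gamma} \setminus V$ is inherited directly from $P_0$. The only substantive difference is that where you argue abstractly that the additive-shift fixed-point constraints on the $x$-annotation are satisfiable, the paper computes the solution explicitly as $A(i{::}j) = \pi^{\{y,\mi{ys}\}}_{0}(P_1)(i,j)$; your remark that reconciling $Q_0$ with $Q_1$ in the base branch is ``only a matter of lower-order and constant terms'' is not literally justified, but since one may always drop both output annotations via \textsc{(M:Sub)} (and the lemma is only used where any output annotation suffices), this does not create a gap.
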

\begin{proof}
Using general recursion, the primitive recursion $e$ can be written as
\begin{equation*}
	f := \lam x, \Gamma. \m{case} \; x \; \{[\,] \hookrightarrow e_0 \mid (y::\mi{ys}) \hookrightarrow \m{share} \; \mi{ys} \; \m{as} \; \mi{ys}_1, \mi{ys}_2 \; \m{in} \; \m{let} \; z = f \; \mi{ys}_1 \; \Gamma \; \m{in} \; e_1 \}. 
\end{equation*}
Here, $\mi{ys}_1$ is used in the recursive call, and $\mi{ys}_2$ is (possibly) used inside $e_1$. 

It is safe to assume that the potential for $x: L (b)$ and that for $\Gamma$ are ``completely separated'' in $R$.
In other words, for any index $(i_{x}, i_{\Gamma}) \in \mathcal{I} (x: b) \times \mathcal{I} (\Gamma)$, if $i_{x} \neq 0_{b} \land i_{\Gamma} \neq 0_{\Gamma}$, then $R (i_{x}, i_{\Gamma}) = 0$. 
For example, if we have some variable $(u: L (\mathbf{1})) \in \Gamma$, then $R$ never contains a factor like $\abs{x} \cdot \abs{u}$; i.e.~multiplication of size variables from $x: L (b)$ and size variables from $(u: L (\mathbf{1})) \in \Gamma$. 
This is a reasonable assumption because (i) variables in $\Gamma$ are each used only once (namely, the very first iteration of the primitive recursion) and (ii) $z$ is assumed to contain zero potential in $P_1$. 
Therefore, intuitively, the multivariate polynomial function represented by $R$ should have the shape $p_{x} (\abs{x}) + p_{\Gamma} (\abs{\Gamma})$, where $p_{x}, p_{\Gamma}$ are some multivariate polynomial functions and $\abs{\cdot}$ is the set of size variables in the input expression, instead of $p_{x} (\abs{x}) + p_{\Gamma} (\abs{\Gamma}) + p_{x, \Gamma} (\abs{x}, \abs{\Gamma})$ which ``mixes'' the potential from $x$ and the potential from $\Gamma$.  

In the second branch of $e$, $x = (y :: \mi{ys})$ is annotated with $\pi^{x: L (b)}_{0} (R)$, and $\Gamma$ is annotated with $\pi^{\Gamma}_{0} (R)$. 
For $\pi^{x: L (b)}_{0} (R)$, it becomes $\lhd (\pi^{x: L (b)}_{0} (R))$ as a result of pattern matching on $x$. 
Because $\mi{ys}_1$ is used in the recursive call, it ought to be supplied with the same potential as $x: L (b)$; i.e.~$\pi^{x: L (b)}_{0} (R)$. 
The remaining potential for $y$ and $\mi{ys}_2$, which is given by $\lhd (\pi^{x: L (b)}_{0} (R)) - \eta^{\{y: b, \mi{ys}: L (b) \}}_{0} (\pi^{x: L (b)}_{0} (R))$, should be equal to $Q_1$. 

Therefore, our goal is to find a suitable $\pi^{x: L (b)}_{0} (R)$ for $Q_1$. 
After this step, $\pi^{x: L (b)}_{0} (R)$ is combined with $P_0$, yielding a desirable $R$. 
Fortunately, as $Q_1$ does not need potential from $z$, we need not be concerned about it. 

Let $A$ be a multivariate annotation for $x: L (b)$.
$A$ represents $\pi^{x: L (b)}_{0} (R)$, and we will now work out what $A$ should be. 
$\lhd (A)$, which is over the typing context $\{y: b, \mi{ys}: L (b) \}$, is given as
\begin{equation*}
	\lhd (A) (i, j) := 
	\begin{cases}
		A (0_{b} :: j) + A (j) & \text{if } i = 0_{b}; \\
		A (i :: j) & \text{otherwise}. 
	\end{cases}
\end{equation*}
If $i = 0_{b}$, we have
\begin{align*}
	\lhd (A) (0_{b}, j) - A (j) & = A (0_{b} :: j) + A (j) - A (j) \\
	& = A (0_{b} :: j), 
\end{align*}
and this should be equal to $\pi^{\{y: b, \mi{ys}: L (b) \}}_{0} (Q_1) (0_{b}, j)$. 
If $i \neq 0_{b}$, we should have $\lhd (A) (i, j) = A (i::j) = \pi^{\{y: b, \mi{ys}: L (b) \}}_{0} (Q_1) (i, j)$. 
In summary, we have
\begin{equation} \label{eq:condition for A in the construction of a multivariate annotation for a primitive recursion under the cost metric of the running time}
	A (i::j) := \pi^{\{y: b, \mi{ys}: L (b) \}}_{0} (Q_1) (i, j) 
\end{equation}
regardless of whether $i = 0_{b}$ or not. 
It is clear that, given $Q_1$, such $A$ exists. 
As a sanity check for \eqref{eq:condition for A in the construction of a multivariate annotation for a primitive recursion under the cost metric of the running time}, we can check whether it is correct when $Q_1$ is univariate (and hence $A$ is univariate).
However, we will omit the details of the sanity check. 

Finally, if $v \in \dom{\Gamma} \setminus V$ contains zero potential in $P_0$, $v$ also has zero potential in $R$ constructed as above. 
This is essentially because $\pi^{x: L (b)}_{0} (R)$ and $P_0$ are completely separated in $R$. 
Note that the assumption that $v \in \dom{\Gamma} \setminus V$ contains zero potential in $P_0$ is true due to Theorem~\ref{theorem:inherently polynomial time implies the existence of a multivariate annotation}. 
However, it is treated as an assumption rather than a fact in the present lemma. 
This concludes the proof. 
\end{proof}

\begin{lemma}[Preservation of uniformity by list constructors]
\label{lemma:additive shift of multivariate potential preserves uniformity of multivariate annotations}
Consider a typing judgment $x_1: b, x_2: L (b) \vdash x_1 :: x_2 : L (b)$. 
Fix a multivariate annotation $Q$ of degree $d$ for $x_1 :: x_2$ such that $Q \; \m{uniform} (d, n)$. 
There exists a cost-free multivariate annotation $P$ such that
\begin{itemize}
	\item $x_1: b, x_2 : L (b); P \vdash x_1 :: x_2 : \abra{L (b), Q}$; 
	\item $P \; \m{uniform} (d, n, \emptyset)$. 
\end{itemize}
\end{lemma}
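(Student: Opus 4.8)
The plan is to take $P$ to be the multivariate additive shift of $Q$ along the cons, $P := \lhd_{x_1::x_2}(Q)$, and to argue that this single choice both produces the typing derivation and transports uniformity. The typing part is immediate from the cost-free instance of \textsc{(M:Cons)}: that rule derives $\cdot;\, x_1:b,\, x_2:L(b);\, R \vdash x_1::x_2 : \abra{L(b), R'}$ from the side condition $R = \lhd_{x_1::x_2}(R')$, so instantiating the rule's output annotation $R'$ with the lemma's $Q$ and its context annotation $R$ with $\lhd_{x_1::x_2}(Q)$ makes the side condition hold by construction and gives $x_1:b,\, x_2:L(b);\, P \vdash x_1::x_2 : \abra{L(b), Q}$.

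The substance of the lemma is the uniformity claim $P\;\m{uniform}(d,n,\emptyset)$. I would unfold \eqref{eq:definition of multivariate additive shift} with the empty outer context dropped, so that for a head index $j \in \mathcal{I}(b)$ and a tail index $k \in \mathcal{I}(L(b))$ one has $P(j,k) = Q(0_b::k) + Q(k)$ when $j = 0_b$, and $P(j,k) = Q(j::k)$ otherwise. The one arithmetic fact driving the argument is the degree identity $\degree{j::k} = 1 + \degree{j} + \degree{k}$, read off the clause $\degree{[a_1,\dots,a_m]} = m + \sum_{t} \degree{a_t}$, together with $\degree{(j,k)} = \degree{j} + \degree{k}$ for the product context $b \times L(b)$. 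With this in hand I would verify the three conditions of Definition~\ref{def:uniform resource annotations for typing contexts in multivariate AARA} with $V = \emptyset$, checking the variables $x_1:b$ and $x_2:L(b)$ in turn. For the maximal-degree condition: whenever a context index has degree exceeding $d$, each $Q$-index appearing on the right-hand side above (namely $k$, $0_b::k$, or $j::k$) already has degree exceeding $d$, hence is $0$ because $Q$ has maximal degree $d$, so $P$ vanishes there. For the two top-degree conditions: on a context index of degree exactly $d$ the companion summand $Q(0_b::k)$ lies at degree $1+\degree{k}=d+1$, and $Q(j::k)$ with $j \neq 0_b$ lies at degree at least $d+1$, so both are $0$; the only surviving contribution is $Q(k)$ in the branch $j = 0_b$, where $\degree{k}=d$ and uniformity of $Q$ yields $Q(k)=n$. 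Hence $P$ is forced to $0$ on the mixed top-degree indices and to $n$ on the remaining ones, which is exactly what the second and third conditions of the definition demand.

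The step I expect to be the real obstacle is the third condition of the context-uniformity definition applied to the head variable $x_1:b$: when $b$ is not a flat type, $\mathcal{I}(b)$ contains indices $j \neq 0_b$ of degree $d$, and for such $j$ one computes $P(j, 0_{x_2}) = Q([j])$, which sits at degree $d+1$ and is therefore $0$ rather than $n$. So the verification really does need $b$ to be free of list constructors --- equivalently, it needs the same restriction on nested lists already imposed for pattern matching in Assumption~\ref{assumption:restrictions on variable sharing and pattern matching on nested lists} --- and I would either add that hypothesis or specialise to flat $b$. With that restriction the head component carries no index of positive degree, $\lhd$ collapses to the univariate additive shift, and the whole verification reduces to the familiar observation that $\vec{q}$ and $\lhd(\vec{q})$ share their leading coefficient (Definition~\ref{def:additive shift of potential vectors}); I would dispatch that special case first and then use the degree identity to carry the general step. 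The remaining work is purely notational care in keeping the indices of the empty outer context, the head, and the tail straight while matching them against the three conditions.
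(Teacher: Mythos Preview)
Your approach is the same as the paper's: set $P := \lhd_{x_1::x_2}(Q)$, invoke the cost-free \textsc{(M:Cons)} for the typing derivation, and then verify the three clauses of Definition~\ref{def:uniform resource annotations for typing contexts in multivariate AARA} using the degree identity $\degree{j::k} = 1 + \degree{j} + \degree{k}$. Your computations for conditions one and two, and for condition three at the tail variable $x_2$, match the paper's.

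The divergence is in how you resolve the obstacle you correctly flag at the head variable $x_1$. You observe that when $b$ carries indices of degree $d$, the raw shift gives $P(i, 0_{L(b)}) = Q(i::0_{L(b)}) = 0$ rather than $n$, and you propose to repair this by restricting $b$ to be list-free. The paper does not restrict $b$; instead it bumps these finitely many coefficients up from $0$ to $n$ and appeals to \textsc{(M:Sup)}: since the modified $P$ dominates $\lhd(Q)$ pointwise, the typing judgment still holds. The bumped indices $(i, 0_{L(b)})$ have total degree exactly $d$, so condition one is preserved; their projection onto $x_2$ is $0_{L(b)}$, so condition two for $x_2$ is untouched; and their projection onto $x_1$ has degree $d$ with the complementary part equal to $0$, so condition two for $x_1$ does not constrain them. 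Hence the patch is harmless and the lemma goes through for arbitrary $b$. Your proposed restriction is therefore unnecessary, and with it you would prove only a weaker statement than the one claimed.
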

\begin{proof}
$P$ is given by $\lhd (Q)$, which is defined as 
\begin{equation} \label{eq:definition of P in the proof of the lemma that the uniformity of annotations is preserved by list constructors}
	P (i, j) := 
	\begin{cases}
		Q (0_{b} :: j) + Q (j) & \text{if } i = 0_{b}; \\
		Q (i :: j) & \text{otherwise}. 
	\end{cases}
\end{equation}
Here, $i \in \mathcal{I} (\{x_1: b\})$ and $j \in \mathcal{I} (\{x_2: L (b)\})$. 

We will now prove $P \; \m{uniform} (d, n, \emptyset)$. 
Firstly, due to the definition of $P$ in \eqref{eq:definition of P in the proof of the lemma that the uniformity of annotations is preserved by list constructors}, if the maximum degree of $Q$ is $d$, so is the maximum degree of $P$. 
That is, $P (i, j) > 0$ implies $\degree{i} + \degree{j} \leq d$. 
Hence, the first and second conditions for $P \; \m{uniform} (d, n, \emptyset)$ (Definition~\ref{def:uniform resource annotations for typing contexts in multivariate AARA}) are met. 

Secondly, in $P$, any base polynomial (i) that has degree $d$ and (ii) that only involves size variables of either $x_1$ or $x_2$ (but not both) must have coefficient $n$. 
To see this, let us first consider $x_2$. 
Fix an arbitrary $j \in \mathcal{I} (\{x_2: L (b)\})$ such that $\degree{j} = d$. 
We get
\begin{align*}
	P (0_{b}, j) & = Q (0_{b} :: j) + Q (j) \\
	& = 0 + Q (j) \\
	& = Q (j). 
\end{align*}
In the second line, $0_{b} :: j$ would have degree $d+1$, which exceeds the highest degree of $Q$. 
Hence, we must have $Q (0::j) = 0$. 
As a result, we obtain $P (0_{b} :: j) = Q (j)$, where both terms have the same degree, namely $d$. 

Next, let us consider $x_1$.
Fix an arbitrary $i \in \mathcal{I} (\{x_1: b\})$ such that $\degree{i} = d$. 
This gives
\begin{align*}
	P (i, 0_{L (b)}) & = Q (i :: 0_{L (b)}) \\
	& = 0, 
\end{align*}
where the last line follows from the fact that $\degree{i :: 0_{L (b)}}$ exceeds $d$ and hence $Q (i :: 0_{L (b)}) = 0$ must hold. 
Thus, \textsc{(M:Sup)} allows us to increase $P (i, 0_{L (b)})$ such that the claim holds. 
Finally, the case where $i = 0_{b}$ is equivalent to the case where $d = 0$, and it is immediate to prove this case. 
This concludes the proof. 
\end{proof}

\begin{lemma}[Preservation of uniformity by list destructors]
\label{lemma:pattern matching on lists preserves uniformity of multivariate annotations}
Consider $e \equiv \m{case} \; x \; \{[\,] \hookrightarrow e_0 \mid (x_1::x_2) \hookrightarrow e_1 \}$, where $\Delta; \Sigma; \Gamma, x : L(b) \vdash e \; \inhpoly{V}$. 
Assume that the typing judgments of $e_0$ and $e_1$ are
\begin{equation*}
	\Gamma \vdash e_0 : b_2 \qquad \Gamma, x: b, x_2: L (b) \vdash e_1 : b_2, 
\end{equation*}
where $b$ does not contain list types inside; that is, $L (b)$ is a non-nested list type. 
Additionally, suppose we are given $P_0$ and $P_1$ such that
\begin{equation*}
	\Gamma; P_0 \vdash e_0 : \abra{b_2, Q} \qquad \Gamma, x_1: b, x_2: L (b); P_1 \vdash e_1 : \abra{b_2, Q}, 
\end{equation*}
where $Q \; \m{uniform} (d, n)$, $P_0 \; \m{uniform} (d, n, V \setminus \{x\})$, and  $P_1 \; \m{uniform} (d, n, V [x \mapsto x_1, x_2])$. 
Then there exists a multivariate annotation $P$ for the entire $e$ such that (i) the output is annotated with $Q$ and (ii) $P \; \m{uniform} (d, n, V)$ holds. 
\end{lemma}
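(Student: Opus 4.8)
The plan is to construct the required annotation $P$ for $\Gamma, x : L(b)$ explicitly, running the forward construction of Lemma~\ref{lemma:additive shift of multivariate potential preserves uniformity of multivariate annotations} in reverse. Under the cost-free metric, rule \textsc{(M:Case-List)} tells us that it suffices to produce $P$ with $\pi^{\Gamma}_{0}(P) <: P_0$ and $\lhd_{x}(P) <: P_1$: the output annotation $Q$ is then inherited from the two premises and the two subtyping steps are absorbed by \textsc{(M:Sup)}. The one structural fact about $b$ that the argument exploits is that, $b$ containing no list type, every $j \in \mathcal{I}(b)$ has $\degree{j} = 0$, so an index $m = [m_1, \dots, m_\ell] \in \mathcal{I}(L(b))$ has $\degree{m} = \ell$ and $\degree{j :: k} = \degree{k} + 1$.

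First I would dispatch the base case $d = 0$, where all annotations are forced to represent constant potential: taking $P$ to supply a large enough constant (the pointwise maximum of what $P_0$, the un-shift of $P_1$, and the number $n$ demand) works, and $P \; \m{uniform}(0, n, V)$ is immediate. For $d \geq 1$ I would define $P$ by recursion on the degree of indices, from degree $d$ downward. For an $x$-index $m$ with $\degree{m} = d$ I set $P(0_\Gamma, m) := n$ and $P(i, m) := 0$ for $i \neq 0_\Gamma$, as uniformity for the new variable $x$ forces. For $\degree{m} < d$ I choose $P(i, m)$ just large enough (using $\max(0, \cdot)$ to stay nonnegative) that the two additive-shift inequalities $P(i, j :: k) \geq P_1(i, j, k)$ for $j \neq 0_b$ and $P(i, 0_b :: k) + P(i, k) \geq P_1(i, 0_b, k)$ hold whenever $j :: k$, respectively $0_b :: k$, is an index of the degree currently being filled; because $\degree{j :: k} = \degree{k}+1$ this recursion is well founded, and because $P_1$ has maximum degree $d$ no index of degree exceeding $d$ ever receives a nonzero coefficient. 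Finally I set the constant-in-$x$ block $P(i, [\,]) = \pi^\Gamma_0(P)(i)$ to be the pointwise maximum of $P_0(i)$ and the value already required of it above; this only raises constant-in-$x$ coefficients, so the uniformity conditions for the $\Gamma$-variables are not disturbed.

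With $P$ fixed, the typing conditions $\pi^\Gamma_0(P) <: P_0$ and $\lhd_x(P) <: P_1$ hold by construction, so \textsc{(M:Case-List)} together with \textsc{(M:Sup)} yields $\Sigma; \Gamma, x : L(b); P \vdash e : \abra{b_2, Q}$ under the cost-free metric. It then remains to check $P \; \m{uniform}(d, n, V)$, and I would split on $x \in V$ versus $x \notin V$. When $x \in V$ only the $\Gamma$-variables $v \in \dom{\Gamma} \setminus V$ are constrained, and since $\pi^\Gamma_0$ and $\lhd_x$ act coordinatewise on the $\Gamma$-block, the conditions for $P$ follow from those of $P_0$ (uniform with $V \setminus \{x\}$) and of $P_1$ (uniform with $V[x \mapsto x_1, x_2]$, whose restriction to $\dom{\Gamma}$ is the same set). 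When $x \notin V$ we have $V[x \mapsto x_1, x_2] = V$, so $P_1$ is already uniform with respect to both $x_1$ and $x_2$; for $d \geq 1$ the conditions on $x_1$ are vacuous (all indices of $b$ have degree $0$), and the conditions on $x_2$ must be transported through the inverse shift to establish the three conditions on $x$ in $P$: no nonzero coefficient at an $x$-index of degree exceeding $d$ (this is precisely where non-nestedness of $b$ is essential, as the degree-$(d{+}1)$ promotion discussed for nested lists in Section~\ref{sec:typable fragment of resource-aware ML} is what would otherwise occur), a degree-$d$ $x$-index only in combination with the constant $\Gamma$-index, and value $n$ there, the last two being built into the construction.

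The step I expect to be the main obstacle is reconciling the rigid requirement that the degree-$d$ $x$-coefficients of $P$ equal exactly $n$ with the inequality $\lhd_x(P) <: P_1$: concretely, one must verify that every coefficient of $P_1$ that the additive shift sends up into degree $d$ is itself bounded by $n$, so that pinning the degree-$d$ part of $P$ to $n$ does not under-provision the $e_1$-branch. This is exactly the point at which Assumption~\ref{assumption:restrictions on variable sharing and pattern matching on nested lists} must be invoked --- both the ban on pattern matching nested lists (keeping $b$ non-nested, so that the shift does not move coefficients across degrees in the bad direction) and the ban on sharing variables outside $V$ (preventing $P_1$ from accumulating oversized lower-degree potential in the style of the pathological example~\eqref{eq:pathological example of variable sharing}). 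Once this boundedness is in place, everything else is index bookkeeping.
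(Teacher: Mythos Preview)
Your construction works in outline but is considerably more elaborate than the paper's, and the ``obstacle'' you flag at the end is an artifact of that elaboration. The paper does not build $P$ by downward recursion on degree; it simply sets $P_2(i,k) := P_1(i, 0_b, k)$ and verifies everything directly. With non-nested $b$ the paper takes $j = 0_b$ throughout, so the shift computation is just $\lhd(P_2)(i, 0_b, k) = P_1(i, 0_b, 0_b{::}k) + P_1(i, 0_b, k) \geq P_1(i, 0_b, k)$, and uniformity of $P_2$ with respect to $x$ (when $x \notin V$) is \emph{inherited} from uniformity of $P_1$ with respect to $x_2$: conditions~2 and~3 of Definition~\ref{def:uniform resource annotations for typing contexts in multivariate AARA} applied to $x_2$ in $P_1$ already force $P_1(0_\Gamma, 0_b, k) = n$ and $P_1(i, 0_b, k) = 0$ for $\degree{k}=d$, $i\neq 0_\Gamma$, so the degree-$d$ layer of $P_2$ automatically has the right values. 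The case $v \in \dom{\Gamma}\setminus V$ is the same transport. Finally $P_2$ is merged with $\eta^{\Gamma\cup\{x:L(b)\}}_0(P_0)$.

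By imposing the degree-$d$ values first and then checking consistency with $P_1$, you manufacture a boundedness question that the paper's direct copy never faces. For the $j = 0_b$ branch of the shift your worry is in any case illusory, since the extra summand $P(i,k)$ at degree $d{-}1$ can be chosen freely to absorb whatever $P_1$ demands; for the $j \neq 0_b$ branch (present only when $b$ contains sums) there is no such slack, and the paper's proof simply does not treat it. Either way, your appeal to the variable-sharing half of Assumption~\ref{assumption:restrictions on variable sharing and pattern matching on nested lists} is misplaced: that restriction is what salvages the \textsc{(IP:Share-Base)} case of Theorem~\ref{theorem:inherently polynomial time implies the existence of a multivariate annotation where a user-specified amount of potential is available in the output}, not this lemma. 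Here only the non-nested-list half is used, and purely structurally---to ensure $\degree{j}=0$ so that the shift raises degree by exactly one---not to bound coefficients.
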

\begin{proof}
From $P_1$, we will construct an annotation $P_2$ over the typing context $\Gamma \cup \{x: L (b) \}$. 
$P_2$ should satisfy $\lhd (P_2) = P_1$ and $P_2 \; \m{uniform} (d, n, V)$. 
$\lhd (P_2)$ is defined as
\begin{equation*}
	\lhd (P_2) (i, j, k) := 
	\begin{cases}
		P_2 (i, 0_{b} :: k)  + P_2 (i, k) & \text{if } j = 0_{b}; \\
		P_2 (i, j::k) & \text{otherwise}. 
	\end{cases}
\end{equation*}
Because $L (b)$ is a non-nested list type by assumption, $j$ is always $0_{b}$. 
If we set 
\begin{equation} \label{eq:definition of P_2 in the proof of the lemma that the uniformity of annotations is preserved by list destructors}
	P_2 (i, k) := P_1 (i, 0_{b}, k), 
\end{equation}
we obtain 
\begin{alignat*}{2}
	\lhd (P_2) (i, 0_{b}, k) & = P_2 (i, 0_{b} :: k) + P_2 (i, k) && \qquad \text{by definition} \\
	& = P_1 (i, 0_{b}, 0_{b} :: k) + P_1 (i, 0_{b}, k) && \qquad \text{by \eqref{eq:definition of P_2 in the proof of the lemma that the uniformity of annotations is preserved by list destructors}} \\
	& \geq P_1 (i, 0_{b}, k). 
\end{alignat*}
Therefore, $\lhd (P_2)$ is a subtype of $P_1$; hence, $\lhd (P_2)$ can be converted to $P_1$ by \textsc{(M:Sub)}. 

It remains to check that $P_2 \; \m{uniform} (d, n, V)$ holds. 
Let $v$ be a variable from $\dom{\Gamma} \cup \{x\} \setminus V$. 
We now conduct case analysis on $v$.
Suppose $v \not\equiv x$; that is, $x \in \dom{\Gamma}$. 
This gives
\begin{alignat*}{2}
	P_2 (i, k) > 0 & \implies P_1 (i, 0_{b}, k) > 0 && \qquad \text{by \eqref{eq:definition of P_2 in the proof of the lemma that the uniformity of annotations is preserved by list destructors}} \\
	& \implies \degree{i} < d \lor (\degree{i} = d \land \degree{k} = 0)  && \qquad \text{by the $\m{uniform}$ assumption}.  
\end{alignat*}
Therefore, the first and second conditions of Definition~\ref{def:uniform resource annotations for typing contexts in multivariate AARA} are satisfied.  
Further, if a base polynomial in $P_2$ only contains size variables of $v$ and has degree $d$, its coefficient is $n$ due to \eqref{eq:definition of P_2 in the proof of the lemma that the uniformity of annotations is preserved by list destructors} and the assumption $P_1 \; \m{uniform} (d, n, V [x \mapsto x_1, x_2])$. 
Hence, the third condition of Definition~\ref{def:uniform resource annotations for typing contexts in multivariate AARA} is true. 

Conversely, suppose $v \equiv x$. 
Consider $k \in \mathcal{I} (\{x: L (b)\})$. 
This yields
\begin{alignat*}{2}
	P_2 (i, k) > 0 & \implies P_1 (i, 0_{b}, k) > 0 && \qquad \text{by \eqref{eq:definition of P_2 in the proof of the lemma that the uniformity of annotations is preserved by list destructors}} \\
	& \implies \degree{k} < d \lor (\degree{k} = d \land \degree{i} = 0) && \qquad \text{by the $\m{uniform}$ assumption}. 
\end{alignat*}
This satisfies the first and second conditions of Definition~\ref{def:uniform resource annotations for typing contexts in multivariate AARA}. 
Furthermore, for any $k \in \mathcal{I} (\{x: L (b)\})$, if $\degree{k} = d$, we have $P_2 (0_{\Gamma}, k) = n$. 
This is because $P_1 (0_{\Gamma}, 0_{b}, k) = n$ holds due to the inductive hypothesis of $P_1$. 
Therefore, $P_2 \; \m{uniform} (d, n, V)$ holds as required.

Finally, we can merge $P_2$ and $\eta^{\Gamma \cup \{x: L (b)\}}_{0} (P_0)$ into $P$ that satisfies $P \; \m{uniform} (d, n, V)$.
This concludes the proof. 
\end{proof}

The following lemma concerns the construction of a multivariate annotation for a primitive recursion and is more general than Lemma~\ref{lemma:construction of a multivariate annotation for primitive recursion}. 

\begin{lemma}[Resource annotation for resource-polymorphic recursion]
\label{lemma:construction of a multivariate annotation for primitive recursion with resource-polymorphic recursion}
Consider a primitive recursion of the form
\begin{equation*}
	f := \lam x, \Gamma. \m{case} \; x \; \{[\,] \hookrightarrow e_0 \mid y::\mi{ys} \hookrightarrow \m{let} \; z = f \; \mi{ys} \; \Gamma \; \m{in} \; e_1 \}. 
\end{equation*}
Let $P_{i}$ and $Q_{i}$ for $i \in \{0, 1\}$ be multivariate annotations that satisfy
\begin{equation*}
	\Gamma; P_0 \vdash e_0 : \abra{b_2, Q_0} \qquad y: b, \mi{ys}: L (b), z: b_2; P_1 \vdash e_1 : \abra{b_2, Q_1}. 
\end{equation*}
For each $i \in \mathcal{I} (\{y: b, \mi{ys}: L (b) \})$, a multivariate annotation $P_{2, i}$ is defined by 
\begin{alignat*}{2}
	\Gamma, x: L (b); P_{2, 0} & \vdash f \; x \; \Gamma : \abra{b_2, \pi^{z: b_2}_{0} (P_1) - Q_1} && \qquad \text{if } i = 0; \\
	\Gamma, x: L (b); P_{2, i} & \vdash f \; x \; \Gamma : \abra{b_2, \pi^{z: b_2}_{i} (P_1)} && \qquad \text{otherwise}, 
\end{alignat*}
where $f \; x \; \Gamma$ denotes the whole primitive recursion. 
From $P_1$ and $P_{2, i}$ for $i \in \mathcal{I} (\{y: b, \mi{ys}: L (b) \})$, it is possible to build a multivariate annotation $P$ for the entire primitive recursion such that its final output is annotated with $Q_1$.

Furthermore, assume $\Delta; \Gamma \vdash e_0 \; \inhpoly{V}$.
If we have
\begin{itemize}
	\item $P_0 \; \m{uniform} (d, n, V)$, where $d > 0$, and 
	\item $P_{2, i} \; \m{uniform} (d - 1, n_{i}, V \cup \{x\})$ for each $i \in \mathcal{I} (\{y: b, \mi{ys}: L (b)\})$ and some $n_{i} \in \mathbb{Q}_{> 0}$
\end{itemize}
then $P \; \m{uniform} (d, n, V \cup \{x\})$ is true as well. 
\end{lemma}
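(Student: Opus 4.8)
The plan is to imitate the construction in Lemma~\ref{lemma:construction of a multivariate annotation for primitive recursion}, but to let the recursive call inside the stepping function re-enter $f$ at a \emph{larger}, resource-polymorphic type, using the supplied judgments $P_{2,i}$ as the cost-free, strictly-lower-degree ``difference'' derivations that license this. I would unfold the general-recursion encoding of the primitive recursion and build the multivariate derivation top-down through \textsc{(M:Fun)}, \textsc{(M:Case-List)}, \textsc{(M:Share-Base)} (for the duplication of $\mi{ys}$, which the displayed encoding elides), \textsc{(M:Let)}, and \textsc{(M:App)}, choosing $P$ one slice at a time so that every premise is discharged by one of the given judgments: $P_0$ for $e_0$, $P_1$ for $e_1$, and the $P_{2,i}$ for the recursive call. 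For the first (non-uniform) part of the lemma this construction suffices; for the ``furthermore'' part I would additionally verify the three clauses of Definition~\ref{def:uniform resource annotations for typing contexts in multivariate AARA}.

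First I fix $P$ on the $x = [\,]$ slice by demanding $\pi^{\Gamma}_{0}(P) = P_0$, so that the $[\,]$-branch premise of \textsc{(M:Case-List)} is exactly the given judgment for $e_0$ (up to \textsc{(M:Sup)}). On the $x = y::\mi{ys}$ slice, after the additive shift $\lhd_{x}$ and the split of $\mi{ys}$ into $\mi{ys}_1$ (fed to the recursive call) and $\mi{ys}_2$ (used by $e_1$), I must arrange that the context annotation handed to $e_1$ coincides with $P_1$ --- this is precisely the loop invariant: the same annotation for $z$, $\mi{ys}_2$, $y$, and for the body $e_1$ --- while $\mi{ys}_1$ carries the input demanded by the recursive call. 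Inverting $\lhd_{x}$ against the $\{y,\mi{ys}_2\}$-part of $P_1$ determines $P$ on the pure-$x$ indices exactly by the equation for $A$ in Lemma~\ref{lemma:construction of a multivariate annotation for primitive recursion}; and, following the usual convention, I keep the $x$-potential and the $\Gamma$-potential ``completely separated'' in $P$, so that the $\Gamma$-potential simply threads through all recursive calls and is consumed by $e_0$ in the base case.

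The genuinely new ingredient is the recursive call. \textsc{(M:Let)} forces $f\;\mi{ys}_1\;\Gamma$ to produce output $\pi^{z:b_2}_{0}(P_1)$ under the running-time metric and $\pi^{z:b_2}_{j}(P_1)$ under the cost-free metric for every non-constant $\{y,\mi{ys}_2\}$-index $j$, whereas the ``main'' type of $f$ being constructed produces only $Q_1$. Resource-polymorphic recursion bridges the gap: for the $j = 0$ call the difference type is $f : \cdots \to \abra{b_2, \pi^{z:b_2}_{0}(P_1) - Q_1}$ and for $j \neq 0$ it is $f : \cdots \to \abra{b_2, \pi^{z:b_2}_{j}(P_1)}$, and these are precisely the given $P_{2,i}$ --- which are cost-free and of degree $\leq d-1$ (projecting $P_1$ onto a non-trivial $z$-index, and subtracting the top-degree-uniform $Q_1$, strips off the degree-$d$ part), so the resource-polymorphic cascade terminates. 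Hence $\mathcal{T}$ in \textsc{(M:Fun)} consists of the main type together with the ``used'' types obtained by adding $P_{2,i}$ to the input annotation and $\pi^{z:b_2}_{0}(P_1) - Q_1$ (resp. $\pi^{z:b_2}_{j}(P_1)$) to the output; I would then pin down the remaining, lower-degree coefficients of $P$ so that after the $\mi{ys}$-split exactly $P + P_{2,0}$ is left on $\{\mi{ys}_1\}\cup\Gamma$ for the running-time call and the cost-free \textsc{(M:Let)} side conditions hold for all $j \neq 0$ as well.

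For the uniformity claim I would check $P\;\m{uniform}(d,n,V\cup\{x\})$ directly: since $x \in V\cup\{x\}$, the only constraints are on variables $v \in \dom{\Gamma}\setminus V$, and they concern only the degree-$d$ coefficients of $P$. Those coefficients arise solely from the $[\,]$-slice ($= P_0$, which is $\m{uniform}(d,n,V)$ by hypothesis) and from the inverted-additive-shift part built out of $P_1$'s degree-$d$ coefficients, handled exactly as in Lemma~\ref{lemma:pattern matching on lists preserves uniformity of multivariate annotations} and Lemma~\ref{lemma:construction of a multivariate annotation for primitive recursion}; the resource-polymorphic correction terms $P_{2,i}$, having degree $\leq d-1$, contribute nothing at degree $d$. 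I expect the main obstacle to be making the loop invariant precise in the presence of resource-polymorphic recursion --- simultaneously (i) recovering exactly $P_1$ for the stepping function after $\lhd_{x}$ and the $\mi{ys}$-split, (ii) leaving on $\mi{ys}_1$ precisely the input annotations required by the finitely many resource-polymorphic uses of $f$, and (iii) keeping the resulting $P$ uniform at degree $d$ --- into one consistent definition of $P$, and then verifying the cost-free \textsc{(M:Let)} side conditions for every $j \neq 0$ rather than only for $j = 0$.
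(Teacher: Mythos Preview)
Your approach is essentially the paper's: classify the potential needed at the recursive call into the part reusing $P$ itself (output $Q_1$), the pure-$\{y,\mi{ys}\}$ part of $P_1$, and the resource-polymorphic ``difference'' parts supplied by the $P_{2,i}$; then invert the additive shift to solve for $P$, and argue uniformity at degree $d$ because the $P_{2,i}$ contribute only at degree $\leq d-1$. One caution: the ``completely separated'' convention you import from Lemma~\ref{lemma:construction of a multivariate annotation for primitive recursion} does \emph{not} survive here---the paper's explicit formula $P(i,j{::}k)=\eta^{\zeta}_{0_\Gamma}(\pi^{\{y,\mi{ys}\}}_0(P_1))(i,j,k)+\sum_r (P_{2,r}\cdot r)(i,j,k)$ has genuinely mixed $\Gamma$-$x$ coefficients coming from the $P_{2,r}$ terms, so your ``pin down the remaining, lower-degree coefficients'' step is not a cosmetic adjustment but exactly where the separation must be abandoned.
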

\begin{proof}
It is given that each recursive call needs $P_1$ much potential. 
This information does not directly tell us the total amount of potential.
This is because $P_1$ involves $z$, which is the result of the recursive call, and it is not immediately obvious how to relate $z$ back to $x$, which is the original input. 
Hence, the first task is to work out a multivariate annotation for each recursive call in terms of $\{y, \mi{ys}\} \cup \Gamma$ (instead of $\{y, \mi{ys}, z\}$). 

As stated in the theorem, let $P$ be the annotation for the entire primitive recursion that we aim to construct in this proof. 
In the second branch of the primitive recursion, as a result of pattern matching in $x$, $P$ becomes $\lhd (P)$ whose domain is $\Gamma \cup \{y: b, \mi{ys}: L (b)\}$. 

Base polynomials in $P_1$ can be classified into three categories: 
\begin{itemize}
	\item Base polynomials that only concern $\{y, \mi{ys}\}$. 
	The coefficients of these polynomials are given by $\pi^{\{y: b, \mi{ys}: L (b) \}}_{0} (P_1)$. 
	\item Base polynomials that only concern $z$.
	The coefficients of these base polynomials are given by $\pi^{z: b_2}_{0} (P_1)$. 
	Out of this potential, $Q_1$ will be paid by $P$ because $P$ is exactly what we are trying to establish at the moment and we are allowed to reuse $P$ as an input annotation for $Q_1$. 
	The remaining potential, $\pi^{z: b_2}_{0} (P_1) - Q_1$, will be paid by $P_{2, 0}$ in the assumption. 
	\item Base polynomials that concern both $\{y, \mi{ys}\}$ and $z$.
	We can replace $z$ with $\{y, \mi{ys}\}$ in the representation of these base polynomials by using $P_{2, i}$, where $i \neq 0 \in \mathcal{I} (\{y: b, \mi{ys}: L (b)\})$. 
\end{itemize}

Let $\zeta$ be $\Gamma \cup \{y: b, \mi{ys}: L (b)\}$. 
The total potential needed by each recursive call is then given by
\begin{equation*}
	\eta^{\zeta}_{0_{\Gamma}} (\pi^{\{y: b, \mi{ys}: L (b) \}}_{0} (P_1)) + \eta^{\zeta}_{0_{b}} (P) + \sum_{i \in \mathcal{I} (\{y: b, \mi{ys}: L (b) \})} P_{2, i} \cdot i,
\end{equation*}
where $P_{2, i} \cdot i$ denotes the product of $P_{2, i}$, whose typing context is $\Gamma \cup \{x: L (b)\}$, and base polynomial $i$, whose typing context is $\{y: b, \mi{ys}: L (b)\}$.
Notice that the typing context of $P_{2, i}$ and that of $i$ overlap---they have $\{\mi{ys}: L (b)\}$ in common. 

Because this should be equal to $\lhd (P)$, we obtain
\begin{equation} \label{eq:condition on P in the construction of a multivariate annotation for a primitive recursion with resource-polymorphic recursion}
	\lhd (P) - \eta^{\zeta}_{0_{b}} (P) = \eta^{\zeta}_{0_{\Gamma}} (\pi^{\{y: b, \mi{ys}: L (b) \}}_{0} (P_1)) + \sum_{i \in \mathcal{I} (\{y: b, \mi{ys}: L (b) \})} P_{2, i} \cdot i. 
\end{equation}
From the right hand side, we can construct a desirable $P$ such that \eqref{eq:condition on P in the construction of a multivariate annotation for a primitive recursion with resource-polymorphic recursion} holds, as done in the proof of Lemma~\ref{lemma:construction of a multivariate annotation for primitive recursion}. 
Specifically, $\lhd (P)$ is defined as
\begin{equation*}
	\lhd (P) (i, j, k) := 
	\begin{cases}
		P (i, 0_{b} :: k) + P (i, k) & \text{if } j = 0_{b}; \\
		P (i, j::k) & \text{otherwise}. 
	\end{cases}
\end{equation*}
Hence, $\lhd (P) - \eta^{\zeta}_{0} (P)$ is given as follows. 
If $j = 0_{b}$, then 
\begin{align*}
	(\lhd (P) - \eta^{\zeta}_{0} (P)) (i, j, k) & = \lhd (P) (i, 0_{b}, k) - \eta^{\zeta}_{0} (P) (i, 0_{b}, k) \\
	& = P (i, 0_{b} :: k) + P (i, k) - P (i, k) \\
	& = P (i, 0_{b} :: k). 
\end{align*}
Conversely, if $j \neq 0_{b}$, then
\begin{align*}
	(\lhd (P) - \eta^{\zeta}_{0} (P)) (i, j, k) & = \lhd (P) (i, j, k) - \eta^{\zeta}_{0} (P) (i, j, k) \\
	& = P (i, j :: k) + P (i, k) - 0 \\
	& = P (i, j::k). 
\end{align*}
In conclusion, we obtain
\begin{equation*}
	(\lhd (P) - \eta^{\zeta}_{0} (P)) (i, j, k) = P (i, j::k)
\end{equation*}
regardless of whether $j = 0_{b}$ or not. 
This must be equal to the right hand side of \eqref{eq:condition on P in the construction of a multivariate annotation for a primitive recursion with resource-polymorphic recursion}, and it is clear that such $P$ is guaranteed to exist. 
In addition, it must be possible to extract $Q_0$ from $P$ when $x$ is the empty list. 
This yields $P$'s definition: 
\begin{equation} \label{eq:definition of P in the construction of a multivariate annotation for a primitive recursion with resource-polymorphic recursion}
	\begin{split}
	P (i, j::k) & := \eta^{\zeta}_{0_{\Gamma}} (\pi^{\{y: b, \mi{ys}: L (b) \}}_{0} (P_1)) (i, j, k) + \sum_{r \in \mathcal{I} (\{y: b, \mi{ys}: L (b) \})} (P_{2, r} \cdot r) (i, j, k) \\
	P (i, 0_{b}) & := Q_0 (i). 
	\end{split}
\end{equation}

It remains to check that $P \; \m{uniform}(d, n, V)$ holds. 
Consider $v \in \dom{\Gamma} \setminus V$. 
We will first prove the first and second conditions of Definition~\ref{def:uniform resource annotations for typing contexts in multivariate AARA} for $P$. 
If $\degree{i} \geq d > 0$ (which is a stronger condition than $\degree{i} > d$), it gives
\begin{align*}
	P (i, j::k) & = \eta^{\zeta}_{0_{\Gamma}} (\pi^{\{y: b, \mi{ys}: L (b) \}}_{0} (P_1)) (i, j, k) + \sum_{r \in \mathcal{I} (\{y: b, \mi{ys}: L (b) \})} (P_{2, r} \cdot r) (i, j, k) \\
	& = \sum_{r \in \mathcal{I} (\{y: b, \mi{ys}: L (b) \})} (P_{2, r} \cdot r) (i, j, k) \\
	& = 0. 
\end{align*}
Here, the second line follows from the definition of the extension operator $\eta^{\zeta}_{0_{\Gamma}}$. 
Because we extend $\pi^{\{y: b, \mi{ys}: L (b) \}}_{0} (P_1)$ with $0_{\Gamma}$, we have $\eta^{\zeta}_{0_{\Gamma}} (\pi^{\{y: b, \mi{ys}: L (b) \}}_{0} (P_1)) (i, j, k) = 0$ whenever $i \neq 0_{\Gamma}$ (which is the case since $\degree{i} \geq d > 0$).  
With regard to the third line above, due to the assumption $P_{2, r} \; \m{uniform} (d-1, n_{r}, V \cup \{x\})$, $(P_{2, r} \cdot r) (i, j, k) > 0$ only if $\degree{i} \leq d - 1$. 
Therefore, if $\degree{i} = d$, we have $P (i, j::k) = 0$ for any $j::K$, thereby establishing the second condition of Definition~\ref{def:uniform resource annotations for typing contexts in multivariate AARA}. 

In the case of $P (i, 0_{b})$, if $\degree{i} > d$, we have
\begin{align*}
	P (i, 0_{L (b)}) & = Q_0 (i) \\
	& = 0,
\end{align*}
where the second line follows from the assumption $Q_0 \; \m{uniform} (d, n, V)$. 
Therefore, the first and second conditions of Definition~\ref{def:uniform resource annotations for typing contexts in multivariate AARA} are true for $P$. 

Finally, to prove the third condition of Definition~\ref{def:uniform resource annotations for typing contexts in multivariate AARA}, consider $i \in \mathcal{I} (\Gamma)$, where $\degree{i} = d$. 
This gives
\begin{align*}
	P (i, 0_{L (b)}) & = Q_0 (i) \\
	& = n
\end{align*}
where the second line follows from the assumption $Q_0 \; \m{uniform} (d, n, V)$. 

In summary, all the three conditions of Definition~\ref{def:uniform resource annotations for typing contexts in multivariate AARA} hold for $P$.
Therefore, $P \; \m{uniform} (d, n, V \cup \{x\})$ is indeed true.
This concludes the proof.  
\end{proof}

\else
\fi

\end{document}
